\def\idty{{\leavevmode\rm 1\mkern -5.4mu I}} 
\def\id{{\rm id}}
\def\Rl{{\mathbb R}}\def\Cx{{\mathbb C}}
\def\Ir{{\mathbb Z}}\def\Nl{{\mathbb N}}
\def\Rt{{\mathbb Q}}
\def\norm #1{\Vert #1\Vert}
\def\ind{{\mathop{\rm ind}\nolimits}\,}
\def\bra #1{\langle #1\vert}
\def\ket #1{\vert #1\rangle}
\def\braket #1#2{\langle #1 \vert #2\rangle}
\def\ketbra #1#2{\vert #1\rangle \langle #2\vert}
\def\tr{\mathop{\rm tr}\nolimits}
\def\abs#1{\vert#1\vert}
\def\order{{\mathbf o}}
\def\Order{{\mathbf O}}
\mathchardef\ree="023C \mathchardef\imm="023D  
\def\BB{{\mathcal B}}
\def\DD{{\mathcal D}}
\def\HH{{\mathcal H}}
\def\NN{{\mathcal N}}\def\KK{{\mathcal K}}
\def\Fou{{\mathcal F}}
\def\L#1{{\mathcal L}^{#1}} 
\def\VV{{\mathcal V}}
\def\Wa{{\mathbf W}} 
\def\Wt{\widetilde\Wa} 
\def\Va{{\mathbf V}} 
\def\Vt{\widetilde\Va}
\def\Ma{{\mathbf M}} 
\def\ma{{\mathbf m}} 
\def\invrho{{\overline\rho\,}}
\def\invm{{\overline\ma}}
\let\veps\varepsilon
\def\Pb{\mathbf P}  
\def\Db{\mathbf D}
\def\fatasterisk{\lower 4pt\hbox {\Large\bf\char42}}
\def\Xh{\widehat X} 
\def\lamv{\lambda{\cdot}v} 
\def\ordert#1{\order(t^{-#1})}
\newtheorem{thm}{Theorem}
\newtheorem{lem}[thm]{Lemma}
\newtheorem{prop}[thm]{Proposition}
\newtheorem{cor}[thm]{Corollary}
\newtheorem{ass}{Assumption}
\def\assref#1{Assumption~\ref{ass:#1}}
\def\eqref#1{(\ref{#1})}
\def\secref#1{Sect.~\ref{sec:#1}}
\newenvironment{proof}[1][]{\par\vskip12pt\noindent%
     {\it Proof\ \ifx#1==\else{ #1\/:}\newline\fi\ }\bgroup}
     {\egroup\quad\strut\hfill$\blacksquare$\vskip12pt}
\begin{document}

\title{Asymptotic evolution of quantum walks with random coin}
\author{A. Ahlbrecht}
\email{andre.ahlbrecht@itp.uni-hannover.de}
\author{H. Vogts}
\email{holger.vogts@gmx.de}
\author{A.H. Werner}
\email{albert.werner@itp.uni-hannover.de}
\author{R.F. Werner}
\email{Reinhard.Werner@itp.uni-hannover.de}
\affiliation{Inst. f. Theoretical Physics, Leibniz Universit\"{a}t Hannover,
Appelstr. 2, 30167 Hannover, Germany}

\begin{abstract}
We study the asymptotic position distribution of general quantum walks on a lattice, including walks with a random coin, which is chosen from step to step by a general Markov chain. In the unitary (i.e., non-random) case, we allow any unitary operator, which commutes with translations, and couples only sites at a finite distance from each other. For example, a single step of the walk could be composed of any finite succession of different shift and coin operations in the usual sense, with any lattice dimension and coin dimension. We find ballistic scaling, and establish a direct method for computing the asymptotic distribution of position divided by time, namely as the distribution of the discrete time analog of the group velocity. In the random case, we let a Markov chain (control process) pick in each step one of finitely many unitary walks, in the sense described above. In ballistic order we find a non-random drift, which depends only on the mean of the control process and not on the initial state. In diffusive scaling the limiting distribution is asymptotically Gaussian, with a covariance matrix (diffusion matrix) depending on momentum. The diffusion matrix depends not only on the mean but also on the transition rates of the control process. In the non-random limit, i.e., when the coins chosen are all very close, or the transition rates of the control process are small, leading to long intervals of ballistic evolution, the diffusion matrix diverges. Our method is based on spatial Fourier transforms, and the first and second order perturbation theory of the eigenvalue 1 of the transition operator for each value of the momentum.
\end{abstract}
\maketitle

\section{Introduction}
A quantum walk, the counterpart of a classical random walk, models a quantum particle moving randomly in discrete time steps on a lattice, but contrary to the classical case one has to consider a quantum particle with an internal degree of freedom, usually called the {\bf coin space}, in order to get nondeterministic behavior \cite{Ambainis,Meyer}. Frequently, the dynamics of a quantum walk is decomposed into a unitary operation acting on the internal degree of freedom, called the {\bf coin operation}, followed by a state dependent spatial {\bf shift operation}. This constructive definition permits many interesting examples of quantum walks and moreover it yields a decomposition into experimentally realizable operations, which already have been implemented \cite{Bonn}. However, in general there is no need to restrict to dynamics that can be decomposed into coin operations and conditional shifts. Indeed, we will consider a more general axiomatic definition of quantum walks without any intrinsic connection between the lattice dimension and the number of internal states of the particle, which is suggested by the constructive approach to quantum walks. Without adhering to such a connection, quantum walks with memory as introduced by \citet{McGettrick} can easily be formulated in our language and their analysis, in particular the calculation of their asymptotic behavior, is much simplified. The definition of quantum walks we are going to incorporate imposes two axioms on the time evolution. The first one is {\bf translation invariance}, i.e. the time evolution is homogeneous in space, meaning it commutes with lattice translations. The second axiom is a {\bf locality condition}, we want to exclude infinite propagation speed of the particle and therefore we assume that the maximal step size of the particle is bounded. Interestingly, it turns out that for one dimensional lattices and unitary time evolution this axiomatic definition exactly matches the constructive approach \cite{Holgerthesis}, whereas in higher dimensions this is not clear.

In this paper we consider quantum walks on lattices in any space dimension. Their internal degree of freedom is described in any finite dimensional Hilbert space. According to our axiomatic definition the dynamical step will be translation invariant and the step size can be many lattice constants long but finite. Another assumption we may impose on the time evolution is unitarity. Although we also discuss general decoherent quantum walks compatible with our definition, we will later consider a particular class of decoherent quantum walks emerging from a set of unitary ones. What we want to study is disorder in time, rather than in space: We assume that there is some collection of possible steps, from which a classical Markov process chooses one instance. Thus successive steps are not independent, although steps separated by a long time will be practically independent.

For this class of models we aim to compute the asymptotic probability distribution for the position variable $Q(t)$ after $t$ steps. Such analysis has been carried out before, but to the best of our knowledge not in full generality. Most of the analysis so far is concerned with unitary quantum walks \cite{Ambainis,Carteret,Grimmett} or under certain constraints, e.g. in one lattice dimension and with step size one \cite{Konno,KonnoB,Bressler}. A numerical study of certain examples of unitary quantum walks in two lattice dimensions was performed by \citet{Sanders} and \citet{Kollar}. Recently \citet{Baryshnikov} and \citet{BresslerB} computed the limiting probability distributions for various one and two dimensional unitary quantum walks. Also, the study of quantum walks subjected to decoherence is mostly concerned with particular models or subclasses of quantum walks, see e.g. the review article by \citet{KendonRev} and references therein for an overview on decoherent quantum walks on one dimensional lattices, cycles or hypercubes. The kinds of decoherence considered can be roughly classified as measurement induced decoherence \cite{Romanelli,Romanelli2,Chand,Zhang,Shikano} and decoherence caused by sloppy control of quantum operations \cite{Abal,Biham,Chand2,Brun2}. A subclass of the second decoherence mechanism which we are going to consider is the case when in each time step a unitary quantum walk is chosen from a given set with a certain probability, for example the coin operation of a one dimensional quantum walk may be chosen from a fixed set of unitaries at each time step \cite{Brun,BrunTrans,KonnoManyCoins,Buzek}.
A more general kind of decoherence compatible with our definition of a quantum walk was introduced by \citet{Iran} and \citet{Iran2}, and an analysis of the asymptotic behavior was performed based on the calculation of the first and second moment of the distribution.

\begin{figure}[htb]

\includegraphics[width=0.5\textwidth]{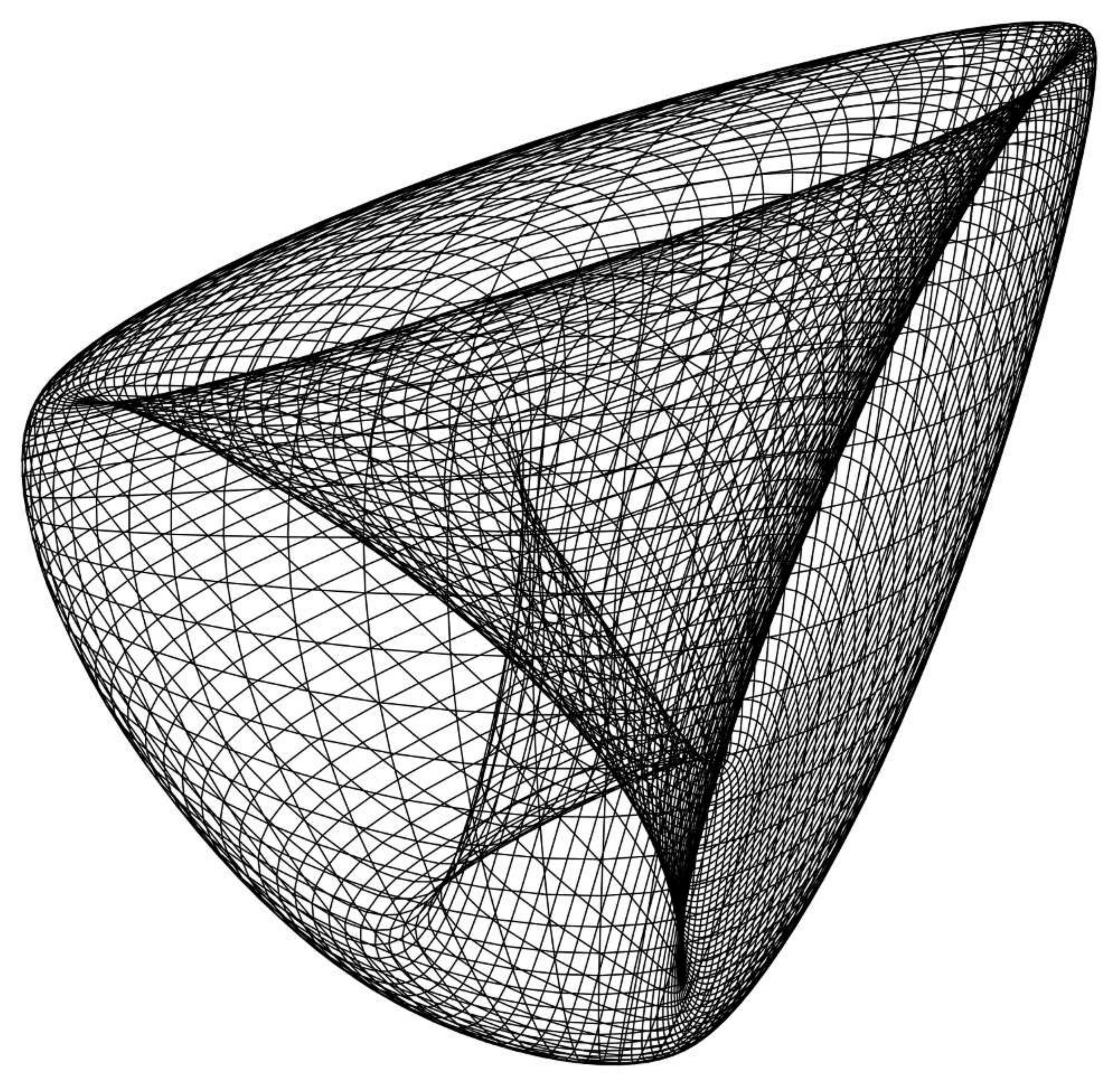}
    \caption{The figure shows a caustic of a unitary quantum walk on a two-dimensional lattice. Closed curves correspond to the image of the map $p\mapsto \nabla \omega (p)$ applied to a discrete set of coordinate lines in momentum space $[0,2\pi)^2$. The caustic, i.e. the region where the line density diverges, is exactly the region where the asymptotic probability distribution of the quantum walk exhibits peaks.}
\label{Fig:Caustics}
\end{figure}

One key question we address is whether the spreading is {\bf ballistic}, i.e. whether $Q(t)/t$ converges in distribution, or whether it is {\bf diffusive}, i.e. $Q(t)/\sqrt t$ has a meaningful limit as $t\to\infty$. Now without randomness, i.e. when we use always the same step, it is known that the spreading is ballistic. In this case, as we show in Sec. \ref{sec:unitary}, the asymptotic distribution of the particles position is determined by the {\bf dispersion relation} of the unitary quantum walk, more precisely, its derivative with respect to momentum, the {\bf group velocity}, dictates the asymptotic behavior. Commonly, the asymptotic distribution shows characteristic peaks, which can be understood as {\bf caustics} of the dispersion relation, see Fig. \ref{Fig:Caustics}. Such caustics have also been observed by \citet{Baryshnikov}, we give a detailed description of this concept in Sec. \ref{sec:unitary}. Since a classical random walk satisfies our description it is also clear that for stringent randomness we expect diffusive scaling. In fact, diffusive behavior is a common feature of decoherent quantum walks, e.g. for quantum walks with multiple coins \cite{Brun,BrunTrans,Biham,KonnoManyCoins} or dynamic gaps in the lattice \cite{Romanelli2,Abal,KendonDynGaps} this has been observed. On the basis of computations of the second moments, it has been shown recently \cite{Iran} that diffusive scaling holds even for very low randomness, i.e. when either all the steps used are nearly the same, or if one step is chosen almost always. Our results support this conclusion. In addition, we compute the asymptotic distribution of $Q(t)/\sqrt t$ for every initial state. It is Gaussian in every momentum component (in a sense specified below). As is to be expected, the diffusion constant (the limit of $\langle Q^2(t)\rangle/t$) diverges at low randomness. In such cases the system will initially evolve ballistically, and then exhibit a crossover to diffusive scaling, see Fig. \ref{Fig:Introduction1D} and \ref{Fig:Variance}.

\begin{figure}[htb]

  \centering
  \subfigure[]{\includegraphics[width=0.45\textwidth]{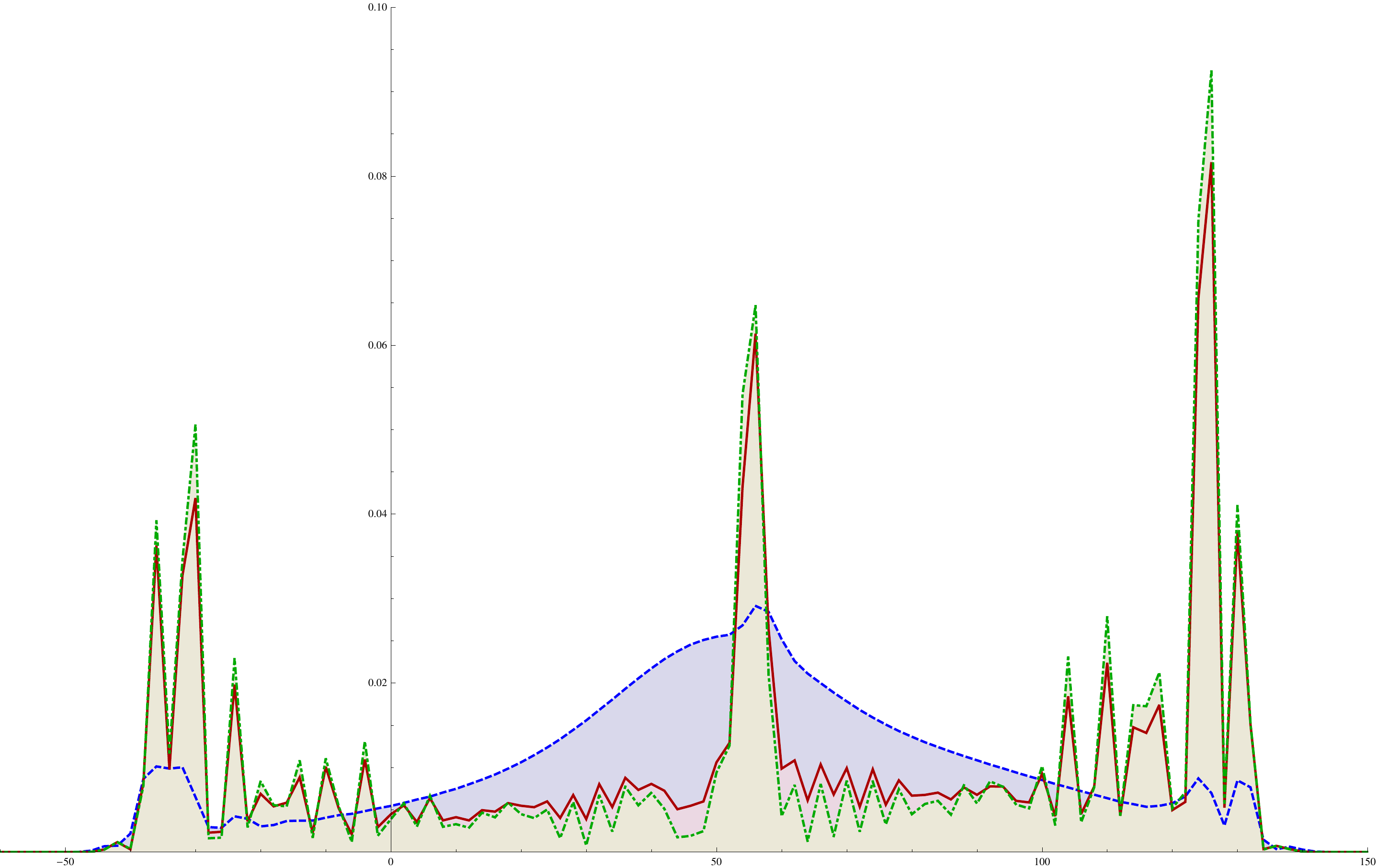}}
  \hspace{1cm}
  \subfigure[]{\includegraphics[width=0.45\textwidth]{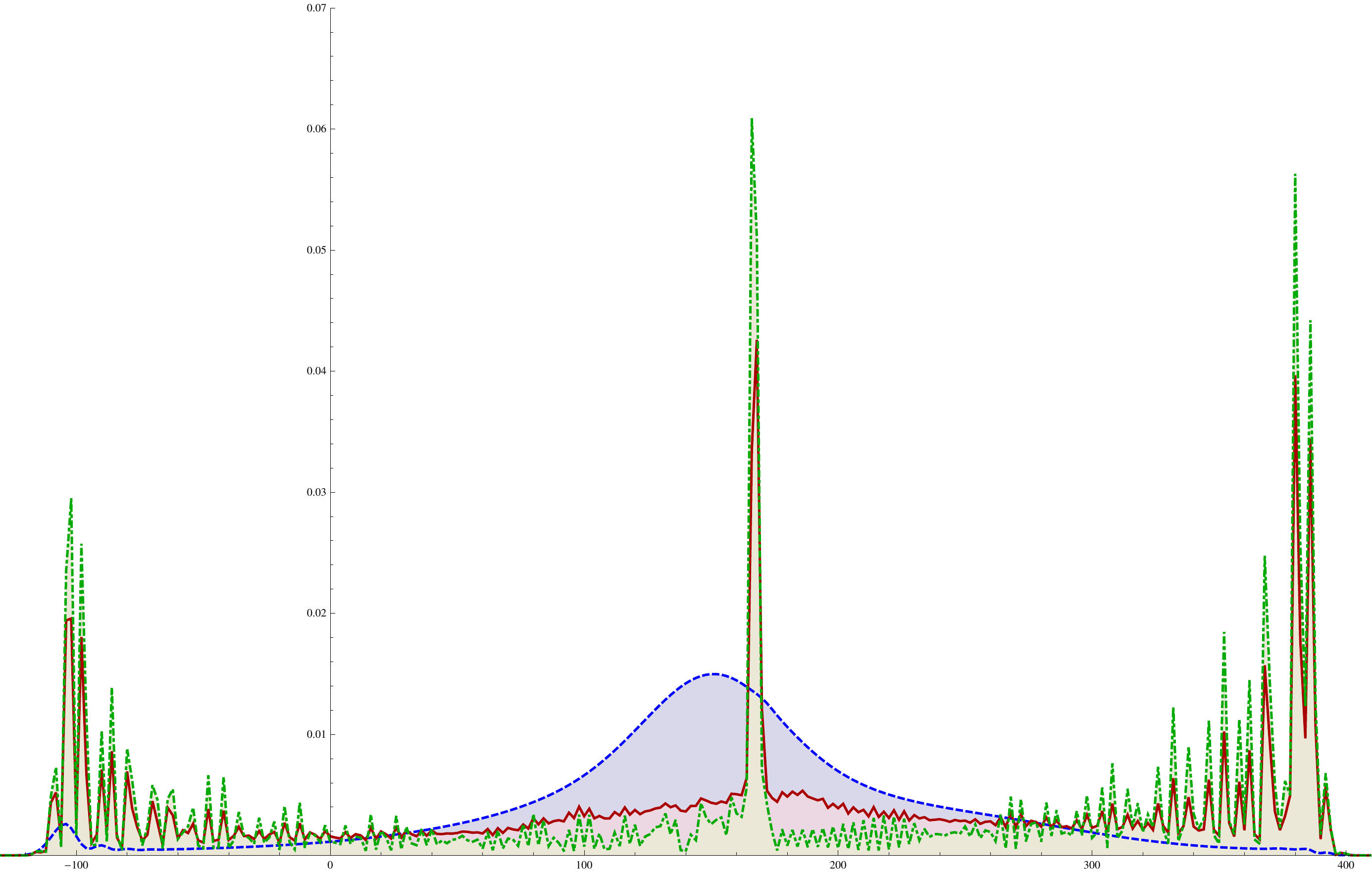}}
\label{Fig:Introduction1D}
\newline

  \centering
  \subfigure[]{\includegraphics[width=0.45\textwidth]{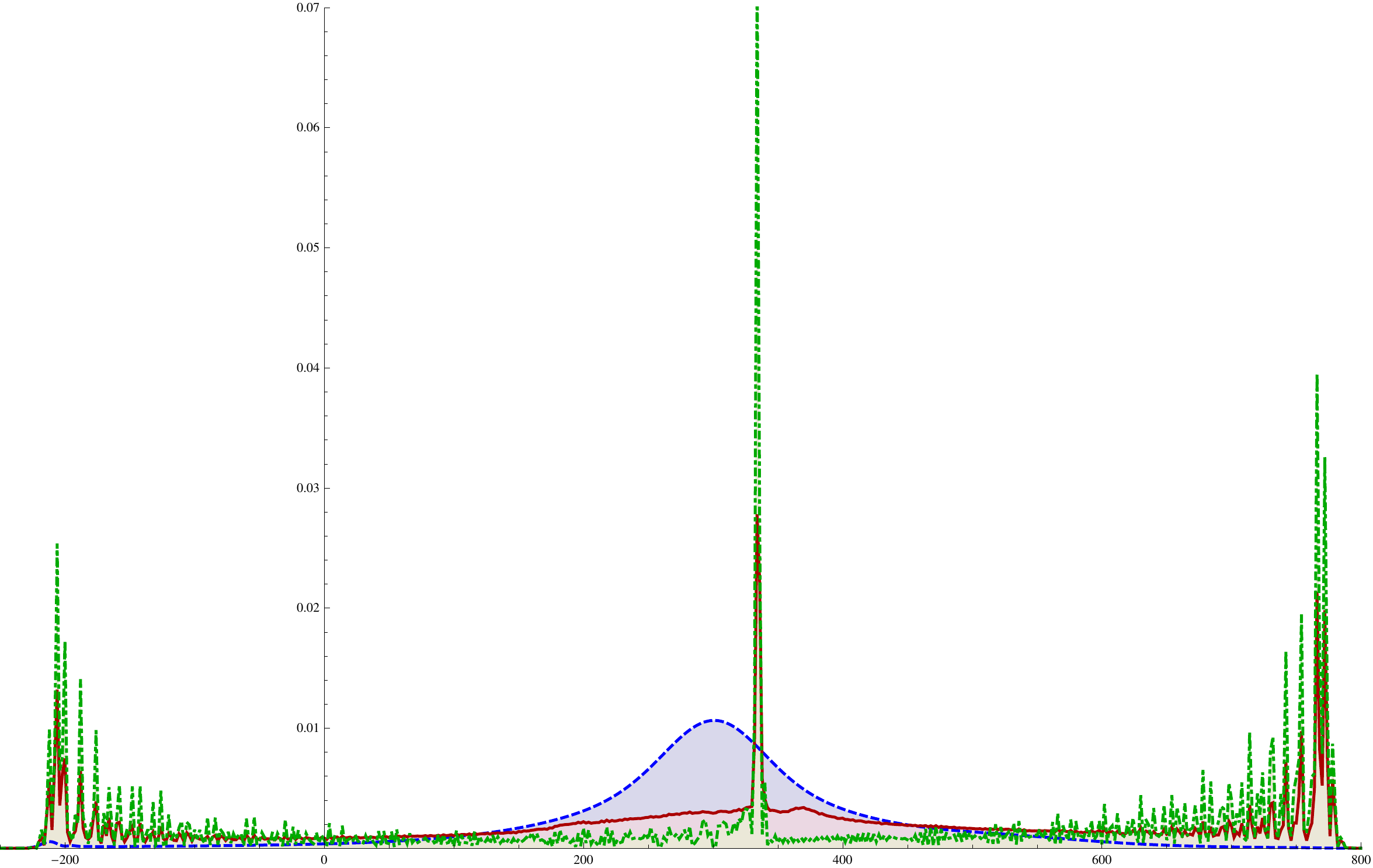}}
  \hspace{1cm}
  \subfigure[]{\includegraphics[width=0.45\textwidth]{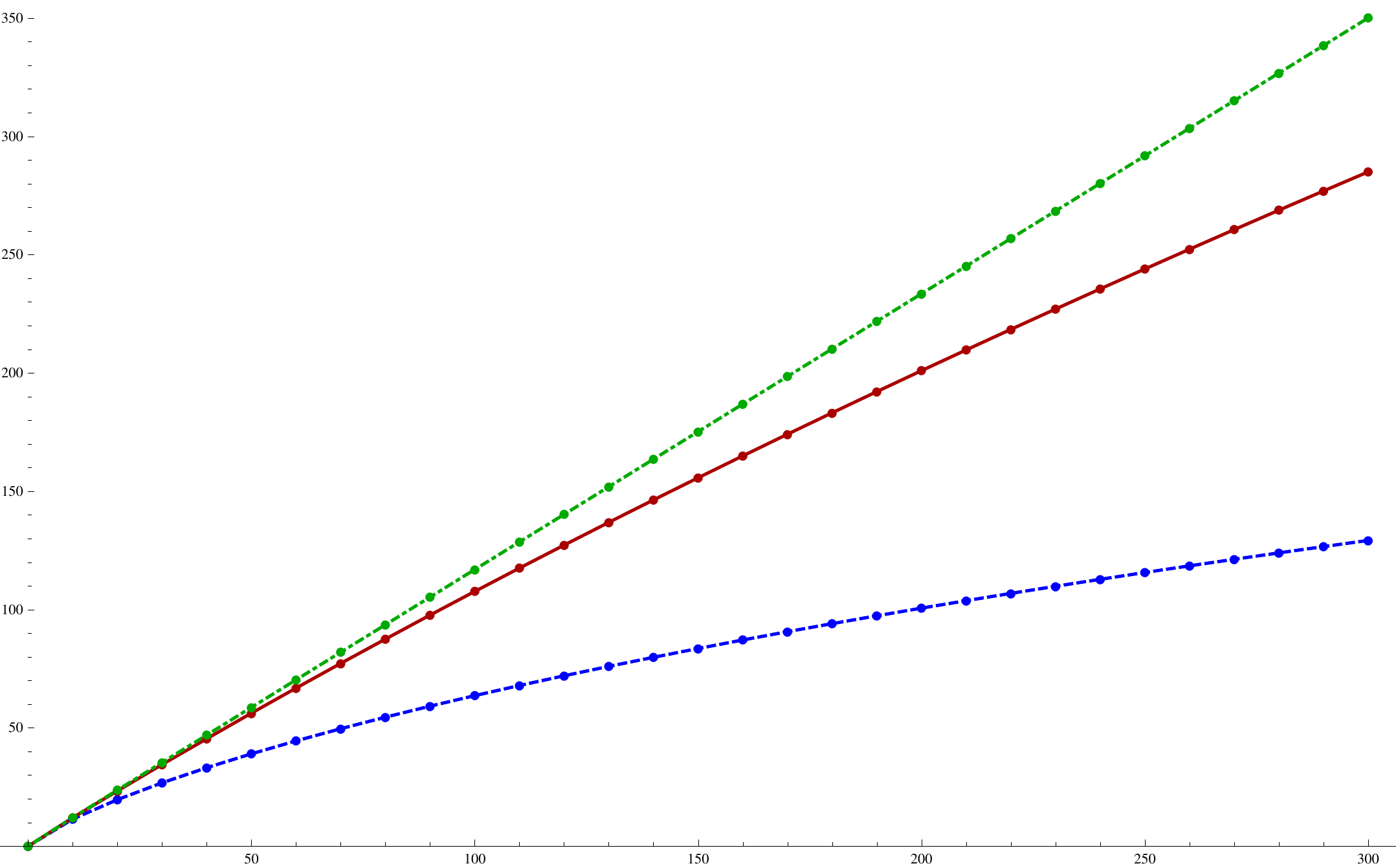}}
    \caption{(color online) Figures (a)-(c) show the time evolution of a quantum walk with index three and three internal states for $50$, $150$ and $300$ time steps. The green/dot-dashed curve corresponds to the undisturbed walk with a coin that is constant in time, whereas the red/solid and blue/dashed curves correspond to a Bernoulli process where in every time step instead of the coin of the green/dot-dashed quantum walk a different coin is applied with probability $0.01$ or $0.2$, respectively. Graph (d) shows the standard deviation of the position distributions with a visible crossover from ballistic to diffusive spreading behavior for the perturbed quantum walks.}
\label{Fig:Variance}
\end{figure}

The methods used so far to obtain the asymptotic position distribution can be summarized as Fourier methods, see e.g. \citet{Ambainis,Grimmett}, and combinatorial calculations as used by \citet{Konno,KonnoCombinatorics}. Another approach, closely related to Fourier methods is the generating function formalism \cite{Bressler,Zhang}. Our method involves Fourier methods as a way of describing a translation invariant system and its time evolution. The second main ingredient of our formalism is perturbation theory. We phrase the problem of finding the asymptotic distribution of the scaled random variables $Q(t)/t$ and $Q(t)/\sqrt t$ as the problem of applying a perturbed operator infinitely many times to an eigenvector of the unperturbed operator in the limit of vanishing perturbation. This becomes a nontrivial problem because both limits do not commute. The advantage of our method is that it gives us full information about the limiting position distribution of a large class of quantum walks, whereas other approaches often only give partial information, e.g. about the first two moments of the limiting position distribution.

Our paper is organized as follows: In the next section we will introduce notation and basic concepts. We close by describing the most general translation invariant quantum evolution with strictly finite step sizes, extending a result prematurely claimed to be exhaustive by \citet{Iran}. The type we actually consider in most of the paper is a special case and essentially the same as the one considered by \citet{Iran}. We then briefly review the non-random case, which is a prerequisite to understanding the general case. In Sect. \ref{sec:ballistic}, \ref{sec:diffusive} and \ref{sec:Bernoulli} we then state our result with all assumptions spelled out, and with a description of the procedure to compute the limit distributions. In the final section we gather some examples of quantum walks which violate our assumptions and analyze their asymptotic behavior.

\section{The systems}
The underlying {\bf lattice} of the system will be denoted by $X$, and is a subset of $\Rl^s$. We will always take $X=\Ir^s$, but we would like to stress that this covers also much more complex periodic structures, like triangular lattices, honeycombs, Kagome lattices, and so on in higher dimensions. Indeed, for us the role of the lattice is mainly that of the abstract symmetry group. For a general periodic pattern, the translation symmetries are by vectors of the form $\sum_{i=1}^sx_i\vec a_i$ for some basis of lattice vectors $\vec a_i$ and integer coefficients $x_i$. One can then choose a unit cell, so that every point in $\Rl^s$ is uniquely obtained as a point in that cell plus a lattice vector. Then we consider all lattice points inside the unit cell as one super-site with internal structure, and take the integer coefficients $x_i$ themselves to label the lattice translations. This brings us back to the choice $X=\Ir^s$.
The position degree of freedom of the walking quantum particle is thus described in the Hilbert space $\ell^2(X)$, the square-summable functions $\psi:X\to\Cx$.

The {\bf internal states} of the walking particle are described by a finite dimensional Hilbert space $\KK$. In many papers on quantum walks this is called the {\bf coin space}. We will follow this terminology although the dynamics we consider usually does not have a simple decomposition into coin operation and conditional shifts. The Hilbert space of the system is thus $\ell^2(X)\otimes\KK$, and it will be convenient to identify this with $\ell^2(X,\KK)$, the set of functions $\psi:X\to\KK$ such that $\norm\psi^2=\sum_{x\in X}\norm{\psi(x)}^2<\infty$. Translations now act as shifts of the argument, i.e. $(U_x\psi)(y)=\psi(y-x)$.

Since we look at translation invariant systems, the analysis can be simplified considerably by taking Fourier transforms. Let $\Xh$ denote the dual group, in our case concretely parameterized as the space of momentum vectors $p\in(-\pi,\pi]^s\subset\Rl^s$. The Fourier transform is then the unitary operator
\begin{equation}\label{Fou}
    (\Fou\psi)(p)=(2\pi)^{-s/2}\sum_{x\in X}e^{ip\cdot x}\psi(x)\, .
\end{equation}
Thus $\Fou$ is a map from $\ell^2(X,\KK)$ to $\L2(\Xh,\KK)$. We can also think of this equation as the representation of a general function on $\Rl^s$, which is periodic with respect to the shifts in $(2\pi\Ir)^s$. Of particular importance will be the {\bf trigonometric polynomials} given by such a series with just finitely many non-zero terms.

A {\bf unitary quantum walk} is given by a unitary operator $W$, which commutes with translations and has the property that $(W\psi)(x)$ depends only on the values of $\psi(y)$ such that $x-y\in\NN$, where $\NN$ is a fixed finite set called the neighborhood scheme of $W$. Because $W$ commutes with translations it can be diagonalized jointly with the translation operators, i.e. it becomes a multiplication operator in momentum space:
\begin{equation}\label{Wp}
    (\Fou W\psi)(p)=W(p)(\Fou\psi)(p),
\end{equation}
where, for each $p$, $W(p)$ is a unitary operator on $\KK$, and each matrix element of $W(p)$ is according to the locality condition a trigonometric polynomial. More precisely, the only monomials $e^{ix\cdot p}$ appearing with non-zero coefficients are those with $x\in\NN$. In electrical engineering such operators are called para-unitaries. One can show \cite{Bratteli,Doganata,Nguyen} that, at least for $s=1$, every such operator can be decomposed into a finite product of $p$-independent unitaries (``coin tosses'') and diagonal unitaries which only have elements of the form $e^{ix\cdot p}$ on the diagonal (``conditional shifts'').

In order to characterize unitary quantum walks it is useful to consider the determinant of the walk operator $W(p)$, which is also a trigonometric polynomial. Since $1/\det W(p)=\det(W(p)^*)$ is also a polynomial, it must be a monomial $e^{ix\cdot p}$ for some $x\in X$. We call this $x$ the {\bf index} \cite{index} of $W$ and by definition we have
\begin{equation}\label{index}
\det W(p) = \det W(0)\cdot e^{\ind W\cdot p}\,.
\end{equation}

Turning now to irreversible processes, let us first introduce the kind of randomness, which is analyzed in our main result. We consider unitary walks, such that the particular unitary applied at time $t$ is chosen randomly. In this way we want to model experiments in which the randomness comes from fluctuations in some external parameters controlling the quantum operations. If the time scale of these fluctuations covers several steps, it is unreasonable to assume that the unitaries in successive time steps are independent. Therefore we allow the external parameters to be given by a Markov process, which we call the {\bf control process}. We fix some dependence $\gamma\mapsto W_\gamma$ of the walk unitaries on the value $\gamma$ of the control process. Then the following two steps are iterated:  At time step $t$ the walk unitary $W_\gamma(t)$ acts on the quantum system, where $\gamma(t)$ is the current value of the control process.  Then the next value $\gamma(t+1)$ is drawn according to the transition probability law of the control process. Our aim is to derive the long time behavior of this scheme, particularly the distribution of the particle's position at large times.

Depending on the transition probabilities, we can describe systems in which the walk unitaries of successive steps are either strongly correlated, or nearly independent. We assume that in the long run the process goes to an equilibrium probability density. This stationary distribution will be taken as the {\bf initial distribution}. In this way we express the condition that there are no correlations between successive runs of the walk, as the statistical data are collected. Otherwise, non-trivial correlations would exist between successive runs, and the experimental results would depend on how quickly the next run is initiated. Of course, this kind of correlations is possible, and may actually occur in experiments, but it would mean a deviation from the paradigm of statistical data collection. Theoretically the appropriate response would be to describe not single runs, but batches of runs with controlled relative timing. This is a complication we do not want to consider in this paper. So, as already stated, we will assume that the initial distribution is the stationary one.

Of course, further generalizations are possible. Firstly, we may have decoherence in the individual steps, so that the single step is not given by a random unitary, but by a completely positive map, whose Kraus operators are not multiples of unitaries. We will analyze the general form of such processes in the next section. We will also point out (Sect.~\ref{sec:ballistic}) the asymptotic results which can be obtained for this more general class of decoherent walks using the method of the main theorem together with an assumption concerning the general form of the Kraus operators. However, in the whole paper we will keep translation invariance. When this is broken, perhaps randomly, a whole new range of phenomena appears. The most interesting is the analogue of Anderson localization: Whereas the decoherence studied in this paper slows the spreading of the walk from ballistic scaling ($\sim t$) to diffusive scaling ($\sim t^{1/2}$), random space dependent coins typically stop the spreading altogether ($\sim t^0$) in the sense that an initially localized state will remain finitely localized for all times with arbitrarily high probability. Here the localization region depends on the initial state and the allowed error probability, but not on time. This has been demonstrated by \citet{andeloc} using methods adapted from the much better studied case of continuous time Anderson localization. Eventually, it will be possible to combine randomness in space and time, but this is definitely beyond the scope of the present work.

\subsection{General form of decoherent, translation invariant quantum walks}
\label{sec:geneSys}
Turning now to decoherent dynamics we will no longer consider pure quantum states $\psi \in \ell^2(X,\KK)$ but density operators $\rho$, i.e. bounded and positive operators with unit trace. Hence, $\rho$ is an element of the space $\BB(\HH)$ of bounded operators on the Hilbert space $\HH=\ell^2(X,\KK)$. Generically, a density operator $\rho$ will be non-translation invariant, e.g. $\rho$ may be supported on finitely many sites of the underlying lattice. Hence, the Fourier transform of $\rho$ is not given by a multiplication operator as in \eqref{Wp}, but an operator depending on two variables $p$ and $p'$. An example of such an operator is a state which is a mixture of finitely many pure states $\psi $ that are all supported on finitely many sites of the lattice:
\[
\rho =\sum_i \lambda_i\ketbra{\psi_i}{\psi_i}\quad \Rightarrow \quad \rho(p,p')=\Fou \rho \Fou^*=\sum_i \lambda_i\ketbra{\psi_i(p)}{\psi_i(p')}
\]
If we consider $\rho(p,p')$ as a linear map its action and the trace are given by the formulas
\[
(\Fou \rho \psi)(p)=\int dp'\,\rho(p,p')\cdot\psi(p')\quad \mathrm{and}\quad\tr \rho =\int dp\,\tr \rho (p,p)=\int dp\,\tr \rho (p)\,,
\]
where we abbreviated $ \rho (p,p)=\rho (p)$. A general non-unitary time evolution of a quantum system is described by the concept of a quantum channel $\Wa$. In the following we will describe the dynamics in the Heisenberg picture, i.e. we will evolve the observables of the system rather than the states. Hence, a quantum channel $\Wa:\BB(\HH_2)\rightarrow \BB(\HH_1)$ is formally given by a linear map which is completely positive and unital, i.e. for $n\in\Nl$ and $A\in \BB(\HH_2\otimes \Cx^n)$
\[
A\geq 0 \,\Rightarrow\,\Wa\otimes \id_{\Cx^n}(A)\geq 0 \quad \mathrm{and} \quad \Wa(\idty_{\HH_2})=\idty_{\HH_1}\,.
\]
Here, $\HH_1$ is the Hilbert space of the initial system, and $\HH_2$ describes the system after the time evolution. The representation theorem by Stinespring \cite{Stinespring,Paulsen} states that a completely positive and unital map $\Wa$ can be written explicitly as
\[
\Wa(A)=\VV^*(A\otimes \idty_{\DD})\VV\quad ,\forall A \in \BB(\HH_2)\,,
\]
where $\VV:\HH_1\rightarrow \HH_2\otimes \DD$ is an isometry, i.e. $\VV^*\VV=\idty_{\HH_1}$, and $\DD$ is called the dilation space. By choosing an orthonormal basis $e_i$ in $\DD$ and writing $\idty_\DD=\sum_i \ketbra{e_i}{e_i}$ one obtains the Kraus representation \cite{Kraus,Paulsen} of a quantum channel $\Wa$ which reads
\[
\Wa(A)=\sum_i K_i^*AK_i\quad \forall A\in \BB(\HH_2)\,.
\]
The relation between the $K_i$ and the isometry $\VV$ is given by
\[
\braket{\phi}{K_i\psi}=\braket{\phi\otimes e_i}{\VV\psi}\quad \phi\in \HH_2\, ,\, \psi\in\HH_1\,.
\]
The question we are going to address now is which isometries $\VV$, respectively Kraus operators $K_i$, represent a translation invariant quantum walk $\Wa$. We denote translations by lattice vectors $x\in X$ by $\tau_x$, that is, $\tau_x \in \BB(\ell^2(X)\otimes \KK)$ is defined via
\[
\tau_x (\ketbra{y}{z}\otimes M )=\ketbra{y+x}{z+x}\otimes M\, ,\quad y,z\in X\, ,\,M\in\BB(\KK) \,.
\]
With this definition translation invariance of the quantum walk $\Wa$ is expressed by
\begin{equation}\label{TransInv}
\Wa(\tau_x (A))=\Wa (A) \, ,\quad \forall \,x\in X\, ,\,A\in \BB(\ell^2(X)\otimes \KK) \,.
\end{equation}
In order to exclude infinite propagation speed we also impose the following locality condition on the quantum walk $\Wa$, which by \eqref{TransInv} can be chosen translation invariant. We assume there exists a finite neighborhood scheme $\mathcal{N}\subset X$ such that for arbitrary internal states $\phi,\psi \in \KK$ and $M\in\BB(\KK)$
\begin{equation}
\label{Locality}
\left\langle   k\otimes \phi \vert \Wa (\ketbra{x}{y}\otimes M)\vert l\otimes \psi\right\rangle =0 \quad \text{if}\quad k-x \notin \mathcal{N}\,\text{ or }\,y-l\notin \mathcal{N}\,.
\end{equation}
The following theorem characterizes all translation invariant quantum walks with finite propagation speed.
\begin{thm}
\label{GenForm}
Let $\Wa$ be a translation invariant quantum walk on $\HH=\ell^2(X)\otimes \KK$, that is, $\Wa\,:\BB(\HH)\rightarrow \BB(\HH)$ is a completely positive map respecting \eqref{TransInv} and \eqref{Locality}. Then, there exists a dilation space $\DD$ and a unitary representation $\{\tilde U_x\}_{x\in X}$ of $X$ on $\DD$ together with operators $v_y\,:\,\KK\rightarrow \KK\otimes \DD$, $y\in\mathcal{N}$, such that an isometry $\VV\,:\HH\rightarrow \HH\otimes \DD$ representing $\Wa$ is given by
\begin{equation}
\label{IsometryDef}
\VV\ket{x\otimes\phi} = \sum_{ y\in\mathcal{N} }\ket{x+y}\otimes \ket{ (\idty_{\KK}\otimes \tilde U_x) v_y\,\phi}\,,\quad  x\in X\, ,\phi \in \KK\,.
\end{equation}
Conversely, a unitary representation $\{U_x\}_{x\in X}$ of $X$ on $\DD$ together with operators $v_y\,:\,\KK\rightarrow\KK\otimes\DD$, $y\in\mathcal{N}$, satisfying the normalization condition
\begin{equation}
\label{Normalization}
\sum_{y\in\mathcal{N}\cap(\mathcal{N}-x) }v_{x+y}^*(\idty_{\KK}\otimes \tilde U_{x})v_y=\idty_{\KK}\cdot \delta_{x0}\,,
\end{equation}
defines a translation invariant quantum walk $\Wa$ via $\eqref{IsometryDef}$.
\end{thm}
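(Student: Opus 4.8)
The plan is to extract the structure \eqref{IsometryDef} from the \emph{minimal} Stinespring dilation of $\Wa$, to read \eqref{Normalization} off from the isometry property $\VV^*\VV=\idty_\HH$, and to check the converse by direct computation. Throughout I write $U_x$ for the shift $(U_x\psi)(y)=\psi(y-x)$ on $\HH=\ell^2(X,\KK)$, so that $\tau_x(A)=U_xAU_x^*$.

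First I would fix a \emph{minimal} Stinespring dilation $(\DD,\VV)$ of $\Wa$, so that $\Wa(A)=\VV^*(A\otimes\idty_\DD)\VV$ and $\overline{\mathrm{span}}\{(A\otimes\idty_\DD)\VV\psi:A\in\BB(\HH),\,\psi\in\HH\}=\HH\otimes\DD$. The main point is that translation invariance \eqref{TransInv} promotes itself to a unitary representation of $X$ on the dilation space: for each $x$ the isometry $\VV_x:=(U_x^*\otimes\idty_\DD)\VV U_x$ is, by \eqref{TransInv} together with $U_xAU_x^*=\tau_x(A)$, again a minimal Stinespring dilation of $\Wa$, so by the uniqueness part of the Stinespring theorem there is a \emph{unique} unitary $\tilde U_x\in\BB(\DD)$ with $\VV_x=(\idty_\HH\otimes\tilde U_x)\VV$, equivalently $\VV U_x=(U_x\otimes\tilde U_x)\VV$. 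Composing this covariance relation for $x$ and $x'$ and using minimality once more — any operator $\idty_\HH\otimes D$ annihilating $\mathrm{ran}\,\VV$ must vanish — yields $\tilde U_x\tilde U_{x'}=\tilde U_{x+x'}$ and $\tilde U_0=\idty_\DD$, so $x\mapsto\tilde U_x$ is a genuine (not merely projective) unitary representation of $X=\Ir^s$ on $\DD$.

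By this covariance relation $\VV$ is determined by its action at the origin. Declaring $v_y\phi$ to be the $\ket y$-component of $\VV\ket{0\otimes\phi}$ defines linear maps $v_y:\KK\to\KK\otimes\DD$ and gives $\VV\ket{x\otimes\phi}=\sum_{y\in X}\ket{x+y}\otimes(\idty_\KK\otimes\tilde U_x)v_y\phi$; to arrive at \eqref{IsometryDef} it remains to see that only finitely many $v_y$ are nonzero, and this is precisely where locality \eqref{Locality} enters. Expanding a general matrix element $\langle k\otimes\phi|\Wa(\ketbra xy\otimes M)|l\otimes\psi\rangle$ in terms of the $v$'s and the $\tilde U$'s, and then invoking minimality (so that the vectors $(M\otimes\idty_\DD)v_y\psi$ are total in $\KK\otimes\DD$ as $M,y,\psi$ vary), one concludes that $v_y=0$ unless $y$ lies in a fixed finite set, which we name $\NN$. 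Finally, substituting \eqref{IsometryDef} into $\VV^*\VV=\idty_\HH$ and re-indexing the resulting double sum over sites reproduces \eqref{Normalization}.

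For the converse one defines $\VV$ on basis vectors by \eqref{IsometryDef} and extends linearly. Setting $x=0$ in \eqref{Normalization} gives $\sum_{y\in\NN}v_y^*v_y=\idty_\KK$, so that $\VV$ preserves inner products within each site, while the remaining instances of \eqref{Normalization} give that the images of distinct sites are mutually orthogonal; hence $\VV$ extends to an isometry and $\Wa(A):=\VV^*(A\otimes\idty_\DD)\VV$ is a quantum channel. The way $\tilde U_x$ enters \eqref{IsometryDef} makes $\VV U_x=(U_x\otimes\tilde U_x)\VV$ hold by inspection, which yields \eqref{TransInv}, and the restriction $y\in\NN$ confines the matrix elements of $\Wa(\ketbra xy\otimes M)$ to the finite neighbourhood demanded by \eqref{Locality}. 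I expect the chief obstacle to be the construction of the representation $\tilde U_x$ out of \eqref{TransInv}: it forces one to commit to the minimal dilation and to exploit its rigidity, and that same minimality has to be used a second and more delicate time in the locality step, to exclude ``phantom'' non-local $v_y$ whose contributions to $\Wa$ could otherwise cancel within the dilation space.
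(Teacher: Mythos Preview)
Your proof is correct and follows the same route as the paper's: derive the covariance relation $\VV U_x=(U_x\otimes\tilde U_x)\VV$ from translation invariance, extract the site-independent $v_y$ from the action at the origin, restrict their support via locality, and read off \eqref{Normalization} from $\VV^*\VV=\idty$. Your argument is in fact more careful than the paper's very terse proof, since you make explicit the role of the \emph{minimal} Stinespring dilation both in producing a genuine unitary representation $\tilde U_x$ (via uniqueness) and in excluding ``phantom'' non-local $v_y$ whose contributions to $\Wa$ might cancel---points the paper states without justification.
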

\begin{cor}
By choosing orthonormal bases $e_j\in \DD$ and $e_\alpha\in \KK$ we get the Kraus operators $K_j \in \BB(\ell^2(X)\otimes \KK)$ corresponding to the isometry $\VV$ of Theorem \ref{GenForm} via
\[
\bra{z\otimes e_\beta}K_j\ket{x\otimes e_\alpha}=\braket{e_\beta\otimes e_j}{(\idty\otimes \tilde U_x) v_{z-x}\vert e_\alpha}\,.
\]
\end{cor}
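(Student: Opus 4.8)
The plan is to treat this corollary as a direct unpacking of the general correspondence between a Stinespring isometry and its Kraus operators, specialized to the explicit $\VV$ produced by Theorem~\ref{GenForm}. Recall that, for the chosen orthonormal basis $e_j\in\DD$, the Kraus operators are fixed by the identity $\braket{\phi}{K_j\psi}=\braket{\phi\otimes e_j}{\VV\psi}$, valid for all $\phi,\psi\in\HH$. Since the product vectors $\ket{x\otimes e_\alpha}$ form an orthonormal basis of $\HH=\ell^2(X)\otimes\KK$, the matrix elements $\bra{z\otimes e_\beta}K_j\ket{x\otimes e_\alpha}$ determine each bounded operator $K_j$ completely. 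Hence it suffices to evaluate the defining identity on these basis vectors, and this is what I would do.

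First I would put $\phi=\ket{z\otimes e_\beta}$ and $\psi=\ket{x\otimes e_\alpha}$, so that the left-hand side is exactly the matrix element to be computed, while the right-hand side reads $\braket{(z\otimes e_\beta)\otimes e_j}{\VV(x\otimes e_\alpha)}$. The next step is to substitute the explicit action \eqref{IsometryDef} of the isometry, writing $\VV\ket{x\otimes e_\alpha}=\sum_{y\in\mathcal{N}}\ket{x+y}\otimes(\idty_\KK\otimes\tilde U_x)v_y\,e_\alpha$, an element of $\ell^2(X)\otimes(\KK\otimes\DD)$. The problem then collapses to a single inner product in the factorized space $\ell^2(X)\otimes\KK\otimes\DD$.

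Carrying out that inner product, the position factor yields $\braket{z}{x+y}=\delta_{z,x+y}$, which kills every term of the sum except the one with $y=z-x$; the surviving internal-plus-dilation factor is $\braket{e_\beta\otimes e_j}{(\idty_\KK\otimes\tilde U_x)v_{z-x}\,e_\alpha}$, which is precisely the asserted formula. The computation is entirely routine and contains no genuine mathematical obstacle.

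If anything requires attention it is bookkeeping rather than substance. One must match tensor slots correctly, keeping in mind that the label $e_j$ pairs with the dilation factor $\DD$ that $v_y$ generates and not with the internal space $\KK$; and one should note that when $z-x\notin\mathcal{N}$ no term of the sum survives, so the matrix element vanishes, consistently with the locality condition \eqref{Locality}. This can be recorded either by adopting the convention $v_y=0$ for $y\notin\mathcal{N}$ or by reading the stated formula only for $z-x\in\mathcal{N}$.
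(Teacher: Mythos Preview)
Your argument is correct and is exactly the intended one: the paper gives no separate proof of this corollary, treating it as an immediate consequence of the defining relation $\braket{\phi}{K_j\psi}=\braket{\phi\otimes e_j}{\VV\psi}$ together with the explicit form \eqref{IsometryDef} of $\VV$. Your computation unpacks this in the obvious way, and the bookkeeping remarks about tensor slots and the convention $v_y=0$ for $y\notin\mathcal{N}$ are appropriate.
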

\begin{proof}[of Theorem \ref{GenForm}]
A general isometry $\VV$ is given by the relation
\[
\VV\ket{x\otimes \phi}= \sum_{y\in\Ir}\ket{x+y}\otimes \ket {v_y(x)\,\phi}\, ,
\]
with operators $v_y(x)\,:\, \KK\rightarrow\KK\otimes \DD$. Invariance under translation by $x\in X$ requires $\VV U_x = U_x\otimes \tilde U_x \VV$, where $U_x$ is the translation operator on $\HH$ and the $\tilde U_x$ form a representation of $X$ on $\DD$. The intertwining relation of $\VV$ leads to $v_y(x)=\tilde U_x v_y(0)=:\tilde U_x v_y$. The locality condition \eqref{Locality} assures $v_y=0$ if $y\notin \mathcal{N}$ and the normalization condition is a consequence of the isometry condition $\delta_{xy}\cdot\braket{\phi}{\psi}=\braket{x\otimes \phi}{\VV^*\VV \vert y\otimes\psi}$.
\end{proof}
If the spectrum of the operators $\tilde U_x$ is only pure point it is easy to define Fourier transformed versions of $\VV$ and $K_j$. By choosing an orthonormal basis $e_j \in\DD$, $j\in J$, of eigenvectors of the $\tilde U_x$ and writing $\tilde U_x = \sum\limits_{j\in J} e^{iq_j\cdot x}\ketbra{e_j}{e_j}$ with $q_j\in \Rl$ we get the following corollary.
\begin{cor}
\label{cor:KrausMomShift}
If the spectrum of the $\tilde U_x$ in Theorem \ref{GenForm} consists only of pure point spectrum and $e_j$ labels a common eigenbasis of the $U_x$ with eigenvalues $e^{iq_j\cdot x}$, then the isometry $\VV$ and the Kraus operators $ K_j$ in momentum space are given by
\[
\braket{\phi\otimes e_j}{(\VV\psi)(p)}=\sum_{ y\in \mathcal{N}}e^{ip\cdot y}\braket{\phi}{v_y \psi(p+q_j)}\,,\quad \phi,\,\psi \in\KK
\]
\[
\braket{\phi}{(K_j\psi)(p)}=\sum_{y\in\mathcal{N}}e^{ip\cdot y} \braket{\phi}{v_y \psi(p+q_j)}\,,\quad \phi,\,\psi \in\KK\,.
\]
\end{cor}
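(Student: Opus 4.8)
The plan for Corollary~\ref{cor:KrausMomShift} is to derive both momentum-space formulas by a single direct computation: insert the position-space expression \eqref{IsometryDef} for the isometry $\VV$ of Theorem~\ref{GenForm}, apply the Fourier transform \eqref{Fou} (extended trivially to the dilation factor), and use the spectral decomposition $\tilde U_x=\sum_{j\in J}e^{iq_j\cdot x}\ketbra{e_j}{e_j}$, which is available precisely because the $\tilde U_x$ have pure point spectrum, to convert the internal phase $e^{iq_j\cdot x}$ into a shift of the momentum argument of $\psi$.

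First I would write $\psi=\sum_{x\in X}\ket x\otimes\psi(x)\in\ell^2(X,\KK)$ and read off from \eqref{IsometryDef} that, in position space,
\[(\VV\psi)(z)=\sum_{y\in\NN}(\idty_\KK\otimes\tilde U_{z-y})\,v_y\,\psi(z-y)\ \in\ \KK\otimes\DD .\]
Since $\ell^2(X,\KK)\otimes\DD\cong\ell^2(X,\KK\otimes\DD)$, the operator $\Fou\otimes\id_\DD$ is a well-defined unitary onto $\L2(\Xh,\KK\otimes\DD)$, and to identify $(\VV\psi)(p)$ it suffices to pair it with the product vectors $\phi\otimes e_j$, $\phi\in\KK$. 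Using unitarity of $\tilde U_x$ and $\tilde U_x^*e_j=e^{-iq_j\cdot x}e_j$ one gets $\braket{\phi\otimes e_j}{(\idty_\KK\otimes\tilde U_{z-y})\xi}=e^{iq_j\cdot(z-y)}\braket{\phi\otimes e_j}{\xi}$ for every $\xi\in\KK\otimes\DD$, so that
\[\braket{\phi\otimes e_j}{(\VV\psi)(p)}=(2\pi)^{-s/2}\sum_{z\in X}\sum_{y\in\NN}e^{ip\cdot z}\,e^{iq_j\cdot(z-y)}\,\braket{\phi\otimes e_j}{v_y\,\psi(z-y)} .\]
The remainder is bookkeeping: since $\NN$ is finite the double sum may be reordered freely (or one first takes $\psi$ a trigonometric polynomial, a dense set, and extends by continuity), and substituting $w=z-y$, factoring $e^{ip\cdot z}=e^{ip\cdot y}e^{ip\cdot w}$ and using $e^{ip\cdot w}e^{iq_j\cdot w}=e^{i(p+q_j)\cdot w}$ gives
\[\braket{\phi\otimes e_j}{(\VV\psi)(p)}=\sum_{y\in\NN}e^{ip\cdot y}\,(2\pi)^{-s/2}\sum_{w\in X}e^{i(p+q_j)\cdot w}\,\braket{\phi\otimes e_j}{v_y\,\psi(w)}=\sum_{y\in\NN}e^{ip\cdot y}\,\braket{\phi\otimes e_j}{v_y\,(\Fou\psi)(p+q_j)},\]
because $v_y$ is a fixed bounded operator and the inner $w$-sum is exactly $(\Fou\psi)$ evaluated at $p+q_j$. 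In the notation of the statement this is $\sum_{y\in\NN}e^{ip\cdot y}\braket{\phi}{v_y\,\psi(p+q_j)}$, and the formula for the Kraus operators then follows at once from $\braket{\phi}{K_j\psi}=\braket{\phi\otimes e_j}{\VV\psi}$, i.e.\ $\braket{\phi}{(K_j\psi)(p)}=\braket{\phi\otimes e_j}{(\VV\psi)(p)}$.

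I do not expect a real obstacle here; the two points that need care are the legitimacy of the $\DD$-valued Fourier transform when $\DD$ is infinite-dimensional — which is exactly why pure point spectrum of the $\tilde U_x$ is assumed, as it lets one work componentwise along the eigenbasis $\{e_j\}$ and reduce everything to scalar sums — and keeping the sign of the momentum shift straight, namely that the phase $e^{iq_j\cdot x}$ picked up by $\tilde U_x$ on $e_j$ combines with the Fourier kernel $e^{ip\cdot x}$ to produce the shift $p\mapsto p+q_j$ rather than $p\mapsto p-q_j$.
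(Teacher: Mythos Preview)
Your proposal is correct and is exactly the direct computation the paper has in mind; the paper states the corollary without proof, merely remarking beforehand that ``it is easy to define Fourier transformed versions of $\VV$ and $K_j$'' once one writes $\tilde U_x=\sum_j e^{iq_j\cdot x}\ketbra{e_j}{e_j}$. Your filling-in of the substitution $w=z-y$ and the identification of the resulting $w$-sum as $(\Fou\psi)(p+q_j)$ is precisely the intended argument, and your remark that the right-hand side should really read $\braket{\phi\otimes e_j}{v_y\,\psi(p+q_j)}$ (as in \assref{nomom}) rather than $\braket{\phi}{v_y\,\psi(p+q_j)}$ is a fair observation about the paper's notation.
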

In general, the operators $\tilde U_x$ may also exhibit continuous spectrum. A simple example of such a translation invariant quantum walk is the following. Consider $\HH=\ell^2(\Ir)$, that is, a particle with no internal degree of freedom moving on a one dimensional lattice. As a further simplification we will assume $\mathcal{N}=\{0\}$, i.e. the particle is not moving at all. We choose a representation $\{\tilde U_x\}_{x\in\Ir}$ of $\Ir$ on some infinite dimensional dilation space $\DD$ and an operator $v_0\,:\,\Cx\rightarrow \ell^2(\Ir)$. The operator $\VV$ defines a translation invariant quantum walk if the normalization condition \eqref{Normalization}, which reads
\[
v_x^*(\idty_\KK\otimes \tilde U_x)v_0=\delta_{x0} ,\quad \forall x\in\Ir\, ,
\]
is fulfilled. By definition $v_x$ is zero if $x\neq 0$, hence, $v_0$ is an isometry. The action of the operator $\Wa$ corresponding to $\VV$ is
\[
 \Wa(\ketbra{x}{y})  = \braket{v_0}{\tilde U_{y-x}\vert v_0}\,\cdot\,\ketbra{x}{y}\, .
\]
Obviously, $\Wa$ satisfies the locality condition \eqref{Locality} with $\mathcal{N}=\{0\}$, it leaves diagonal elements invariant and off-diagonal elements are damped exponentially in the number of time steps. A possible choice for $\tilde U_x$ with continuous spectrum is $\DD=\ell_2(\Ir)$, any normalized vector $v_0\in \ell_2(\Ir)$ and the operators $\tilde U_x$ as shift by $x$ lattice sites.

General translation invariant quantum walks with momentum transfer are difficult to handle. We must assume in the following that there is no momentum transfer. An example where we drop this assumption can be found in section \ref{subsec:QwMK}.
\begin{ass}\label{ass:nomom}
There is no momentum transfer, i.e. $\tilde U_x=\idty$, $\forall x\in \Ir^s$. The isometry $ \VV$ and Kraus operators $ K_j$ are given by
\[
(\VV\psi)(p)=\sum_{ y\in \mathcal{N}}e^{ip\cdot y}v_y \psi(p)\,,\quad \psi \in\KK\,,
\]
\[
\braket{\phi}{(K_j\psi)(p)}=\sum_{y\in\mathcal{N}}e^{ip\cdot y} \braket{\phi\otimes e_j}{v_y \psi(p)}\,,\quad \phi,\,\psi \in\KK\,,
\]
where $e_j$ labels an orthonormal basis of $\DD$.
\end{ass} 

\section{Asymptotic position by the perturbation method}\label{sec:AsymPosPertMeth}

From the beginning of quantum walk theory the question of asymptotic behavior of the position $Q(t)$ at large times $t$ has been one of the main themes. Early papers were to some extent misguided by the analogy with random walks and long combinatorial computations of matrix elements were done to evaluate just the special case of a Hadamard walk starting from the origin. The physicists in the community quickly brought to bear Fourier methods, and these emphasized the analogy not with classical random walks, but with the free particle under a continuous time Schr\"{o}dinger time evolution. In particular, this brought in dispersion relations $\omega(p)$, and group velocities $\nabla\omega(p)$ as the relevant quantities in the unitary case. Thus, for a general unitary walk and arbitrary initial state, the computation of the asymptotic distribution of $Q(t)/t$ became a straightforward evaluation of expectation values (see below). Initial studies on decoherent walks were often limited to a very special noise model, and have almost exclusively considered the first and second moments of position. While this is already good enough to distinguish ballistic from diffusive transport, it usually remained open how to compute the asymptotic distribution of $Q(t)/\sqrt t$ or, indeed, how to decide whether this quantity had a limit distribution.

We will therefore begin by showing how to focus on the entire distribution of $Q$ from the outset. This will establish the perturbation theory of the eigenvalue $1$ of the transition operator as the key tool in the further analysis. In that introductory section we ignore the control process in order to keep the notation simple.
We then look at the unitary case (\secref{unitary}), where the first order perturbation theory of a degenerate eigenvalue determines the group velocity operator and hence the ballistic scaling. This is then extended to more general processes, including externally controlled ones (\secref{ballistic}).
We then come to our main result, the asymptotic formulas for Markov controlled coined walks, in diffusive scaling (\secref{diffusive}). Finally, we consider the simplified case where the Markov process is of Bernoulli type (\secref{Bernoulli}), i.e. its transition rates are independent of previous time steps, which means the quantum operations are drawn independent and identically distributed in each time step.

A compact way to characterize a probability distribution of a real vector valued random variable $Q\in\Rl^s$ is in terms of its characteristic function
\begin{equation}\label{charf}
    C_Q(\lambda)=\left\langle e^{i\lambda\cdot Q}\right\rangle.
\end{equation}
Here the bracket denotes expectation, $\lambda$ is a real vector of the same dimension $s$ as $Q$, and the product in the exponent is the scalar product in $\Rl^s$. When $Q$ has a probability density, the characteristic function is just its Fourier transform. The derivatives of $C$ at the origin (if they happen to exist) are the moments of $Q$ (if they happen to exist). Very helpful for our purpose is that it is easy to express the characteristic function  for a scaled variable, say $\mu Q$, with a fixed factor $\mu$. Then we just have $C_{\mu Q}(\lambda)=C_Q(\mu\lambda)$. In our case $Q=Q(t)$ will be the position after $t$ time steps of a quantum walk $\Wa$, starting from some initial state $\rho_0$, i.e.,
\begin{equation}\label{charw}
    C_{Q(t)}(\lambda)=\tr \rho_0 \Wa^t\Bigl(e^{i\lambda\cdot Q}\Bigr).
\end{equation}
When the walk is controlled by an external Markov process, we also have to take the expectation of the right hand side with respect to the stationary distribution of the control process. Now we want to look at a scaled position distribution. For example, in ballistic scaling we get
\begin{equation}\label{charwballist}
    C_{Q(t)/t}(\lambda)=\tr \rho_0 \Wa^t\Bigl(e^{i\lambda\cdot Q/t}\Bigr).
\end{equation}
In the limit $t\to\infty$ the unitary operator $\exp(i\lambda\cdot Q/t)$ approaches the identity, which is invariant under $\Wa$. On the other hand, we act on this nearly invariant element with a high power of $\Wa$. The basic idea of our asymptotic evaluation is to look instead at the high powers of a slightly modified operator $\Wt_\veps$, defined by
\begin{equation}\label{Wtwiddle}
    \Wt_\veps(X)=\Wa\Bigl(Xe^{i\veps\lambda\cdot Q}\Bigr)e^{-i\veps\lambda\cdot Q}
\end{equation}
where $\veps=1/t$, or $\veps=1/\sqrt t$ for diffusive scaling, is now a small parameter.
$\Wt_\veps$ is similar to $\Wa$ via the invertible linear (but quite non-positive) operator $X\mapsto\exp(i\veps\lambda\cdot Q)$.
This means that
\begin{equation}\label{Wpowers}
    \Wa^t(e^{i\veps\lambda\cdot Q})=\Wt^t_\veps(\idty)e^{i\veps\lambda\cdot Q}
\end{equation}
for all $t$. For initial states $\rho_0$ supported on a finite region the exponential factor on the right hand side will be close to the identity in the scalings we consider, hence can be neglected when substituting this expression into (\ref{charwballist}).

A crucial observation is that although $\exp(i\veps\lambda\cdot Q)$ is not a translation invariant operator, $\Wt_\veps$, like $\Wa$ commutes with translations. This is because if we apply a translation by $x\in\Ir^s$ to (\ref{Wtwiddle}), we get two phase factors $\exp(\pm i\veps\lambda\cdot x)$, which cancel. In particular, if we apply $\Wt_\veps$ to a translation invariant operator like $\idty$ we again get a translation invariant operator. However, we will consider $\Wa$ and $\Wt_\veps$ as maps on the space of translation invariant operators, i.e. as multiplication operators in momentum space, and apply perturbation theory in those subspaces. Now, the restriction of $\Wt_\veps$ to the translation invariant operators is {\it not} similar to the restriction of $\Wa$. This is because the similarity transform on the whole space, i.e. right multiplication by $\exp(\pm i\veps\lambda\cdot Q)$ does not respect translation invariance. Hence, the eigenvalue of $\Wt_\veps$ which goes to $1$ as $\veps\to0$ may differ from $1$. Indeed, the perturbation theory of this eigenvalue is the core of our method.

An operator $A\in\BB(\HH)$ is translation invariant iff it is a function of momentum, i.e. $(\Fou A\psi)(p)=A(p)(\Fou\psi)(p)$. Under the action of $\Wa$ or $\Wt_\veps$ such an $A$ is transformed into a function $A'(p)$. In general, $A'(p)$ might depend on values $A(p')$ at points $p'\neq p$. This is where \assref{nomom} comes into play: if there are no momentum transfers, i.e. each Kraus operator is itself a function of $p$, then $(\Wa A)(p)=\Wa(p)A(p)$ for a suitable operator $\Wa(p)$ on $\BB(\KK)$ depending on $p$. Indeed, from
\begin{equation}\label{Walkp}
    (\Wa A)(p)=\sum_\alpha K_\alpha(p)^*A(p)K_\alpha(p)
\end{equation}
and (\ref{Wtwiddle}) we get
\begin{equation}\label{Wtp}
    (\Wt_\veps A)(p)=\sum_\alpha K_\alpha(p)^*A(p)K_\alpha(p+\lambda \veps).
\end{equation}
The finite range condition makes each $K_\alpha$ a trigonometric polynomial, so this operator is an analytic function of $\veps$ and we can apply perturbation theory \cite{Kato}.
Let us denote the Jordan decomposition of the above operator \cite[Sect.I\S5.4]{Kato} by
\begin{equation}\label{jordan}
    (\Wt_\veps A)(p)=\sum_i \Bigl(\mu_i(\veps)\Pb_i(\veps)+\Db_i(\veps)\Bigr)(A(p)),
\end{equation}
where $\Pb_i\Pb_j=\delta_{ij}\Pb_i$ are the eigenprojections, and $\Db_i$ eigennilpotent operators with $\Db_i\Pb_j=\Pb_j\Db_i=\delta_{ij}\Db_i$ for the eigenvalue $\mu_i(\veps)$.
Let us assume for the moment that the eigenvalue $\mu_0(0)=1$ of the unperturbed operator is simple, so that the unique eigenvector is $\idty$, and the other eigenvalues satisfy $|\mu_i(0)|<1$ for $i\neq 0$. (This will be the standing assumption in \secref{ballistic} but not in \secref{unitary}). Then $\Db_0=0$, and we get
\begin{equation}\label{xxx}
    \Wt_\veps^t (\idty)= \mu_0(\veps)^t \Pb_0(\veps)(\idty)+\ldots\,,
\end{equation}
where the dots stand for terms with $i\neq0$. Since $\idty$ is in the eigenspace for $\mu_0$, and the $\veps$-dependent operators can be chosen analytic, all these contributions vanish as $\veps\to0$, and $\Pb_0(\veps)\idty\to\idty$. Note that the crucial point here is the assumption $|\mu_i(0)|<1$ for $i\neq 0$, cf. the discussion in Sect. \ref{sec:HighOrdWOBall}. So everything depends on the eigenvalue term $\mu_0(\veps)^t$. For {\bf ballistic scaling}, i.e. $\veps=1/t$ and $\mu_0(\veps)=1+iv\cdot\lambda\veps+\Order(\veps^2)$, for some vector $v\in\Rl^s$, we find
\begin{equation}\label{v-example}
    \mu_0(\veps)^t=\Bigl(1+\frac{iv\cdot\lambda}t+\Order(t^{-2})\Bigr)^t
        \quad\longrightarrow e^{iv\cdot\lambda}.
\end{equation}
Hence, the probability distribution of $Q/t$ converges to a point measure at a deterministic (but possibly $p$-dependent) velocity $v$. A more detailed study of this case is given in \secref{unitary} and \secref{ballistic}, where we also include an external control process. Consider on the other hand the special case $v=0$, then the leading order contribution to $\mu_0$ is of the form $\mu_0(\veps)=1-\frac12 \veps^2\lambda\cdot M\cdot\lambda+\Order(\veps^3)$ for some matrix $M$. Then, in {\bf diffusive scaling} $\veps=t^{-1/2}$ we get
\begin{equation}\label{diff-example}
    \mu_0(\veps)^t=\Bigl(1-\frac{\lambda\cdot M\cdot\lambda}{2t}+\Order(t^{-3/2})\Bigr)^t
        \quad\longrightarrow e^{-\frac12\lambda \cdot M\cdot\lambda}.
\end{equation}
This is the characteristic function of a Gaussian with covariance matrix $M$. Hence, the asymptotic distribution of $Q/\sqrt t$ is Gaussian. A closer analysis of this case, again including control processes, will be given in \secref{diffusive}.

\section{Unitary quantum walks}
\subsection{Ballistic order}
\label{sec:unitary}
The unitary case has been the subject of various papers \cite{Ambainis,Carteret,Konno,Bressler,Ambainis2,Kempe,Grimmett}. Here we allow a general walk, as given by a unitary matrix $W(p)$ \eqref{Wp}. As described in the previous section, we need to study the perturbation theory of the eigenvalue $1$ of the family of operators $\Wt_\veps$:
\begin{equation}\label{unitary}
   \Wt_\veps(A)=W(p)^*AW(p+\lambda\veps),
\end{equation}
where we treat $p$ as a fixed parameter. Clearly, $\idty$ is an eigenvector of eigenvalue $1$ for this operator, but the eigenvalue $1$ is actually quite degenerate: any operator $X$ commuting with $W(p)$ is also in this eigenspace. The eigenspace is thus at least $\dim\KK$-dimensional, but if some of the eigenvalues of $W(p)$ are degenerate at $p$, the degeneracy can be even higher. So let
\begin{equation}\label{specWp}
    W(p)=\sum_ke^{i\omega_k(p)}P_k(p)
\end{equation}
be the spectral resolution of $W(p)$. In the case of degeneracies, i.e. when several $\omega_k(p)$ coincide, this is not unique. But the perturbation theory of the one-parameter analytic family $\veps\mapsto W(p+\veps\lambda)$ tells us that we can choose the operators $P_k(p+\veps\lambda)$ such that in the neighborhood of $\veps=0$ they and the corresponding eigenvalues $\omega(p+\veps\lambda)$ depend analytically on $\veps$. In the sequel we assume such a choice has been made in \eqref{specWp}.

Not surprisingly, this leads to an analytic perturbation expression for $\Wt_\veps$. Indeed, let $\{P_k\}_{k=1}^d$ and $\{R_\ell\}_{\ell=1}^d$ be families of orthogonal projections in a Hilbert space $\KK$. Then we can consider the operators $E_{k\ell}(X)=P_kXR_\ell$ on $\BB(\KK)$ one easily checks that each $E_{k\ell}$ is hermitian with respect to the Hilbert Schmidt scalar product $\braket YX=\tr(Y^*X)$, and the $E_{k\ell}$ are themselves a family of orthogonal projections. Now setting $P_k=P_k(p)$ and $R_k=P_k(p+\veps\lambda)$ we find
\begin{equation}\label{specWt}
    \Wt_\veps=\sum_{k\ell}e^{i(\omega_\ell(p+\veps\lambda)-\omega_k(p))}E_{k\ell},
\end{equation}
which is clearly a spectral decomposition in terms of eigenvalues and eigenprojections, which are all analytic in $\veps$. Therefore, the expression
\begin{equation}\label{specWtIdty}
    \Wt_\veps^t(\idty)=\sum_{kl}e^{it(\omega_\ell(p+\veps\lambda)-\omega_k(p))}P_k(p)P_\ell(p+\veps\lambda)
\end{equation}
is correct to all orders. As $\veps\to0$, we have $P_\ell(p+\veps\lambda)\to P_\ell(p)$, and since $P_k(p)$ and $P_\ell(p)$ are orthogonal, only the terms with $k=\ell$ survive in the limit. Moreover, with ballistic scaling $\veps=1/t$ the exponent converges to the derivative of $\omega_\ell$, which exists even at degeneracy points, because we have chosen \eqref{specWp} analytically. Hence
\begin{equation}\label{limWt1}
    \lim_{t\to\infty}\Wt_{1/t}^t(\idty)
     =\sum_{k}\left.\exp\Bigl(i \frac {d\omega_k(p+\veps\lambda)}{d\veps}\Bigr)\right|_{\veps=0} P_k .
\end{equation}
Note, however, that the choice of projections $P_k$ at a degenerate point may well depend on the direction $\lambda$, in which $p$ is varied. Moreover, the derivatives of $W(p)$, compressed to the degenerate eigenspace need not commute, so there is no analytic choice of branches $\omega_k$. We call a point $p$ a {\bf regular} momentum for $W(p)$ if in \eqref{specWp} we can choose $P_k$ and $\omega_k$ to be analytic functions of the vector near $p$. Of course, when the eigenvalues of $W(p)$ are all non-degenerate, $p$ is regular, and this will almost always be the case.

For all regular $p$, we can write \eqref{limWt1} as the exponential of the operator $i\lambda\cdot V(p)$, where $V$ is the $p$-dependent
vector operator with components
\begin{equation}\label{groupV}
    V_\alpha(p)=\sum_k\frac {\partial\omega_k(p)}{\partial p_\alpha}P_k.
\end{equation}
This is the operator of {\bf group velocity}. Note that, for all regular $p$, all of its components commute with $W(p)$, since they are linear combinations of eigenprojections of $W(p)$. Therefore, the components of $V$ are jointly measurable in the sense of standard quantum mechanics. In any initial state $\rho$ this gives a probability measure on velocity space. This measure is the asymptotic position distribution starting from $\rho$.

\begin{thm}Let $p\mapsto W(p)$ be the defining unitary of a quantum walk on $\Rl^s$. Let $Q(t)$ denote the position observable, evolved for $t$ steps.
Suppose that almost all $p$ are regular, so the group velocity operator $V(p)$ is defined almost everywhere. Then
$$ \lim_{t\to\infty}\frac{Q(t)}t=V$$
in the sense that for all bounded continuous functions $f:\Rl^s\to\Cx$ going to zero at infinity we have
the weak operator limit of $f(Q(t)/t$, evaluated in the functional calculus, is $f(V)$. This means, for any initial state $\rho$ the distribution of $Q(t)/t$ goes weakly (in the sense of probability measures) to the distribution of $V$ in $\rho$.
\end{thm}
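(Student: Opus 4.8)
The plan is to prove that the characteristic function of $Q(t)/t$ converges pointwise in $\lambda$ to that of the group velocity operator $V$, and then to conclude weak convergence of distributions via L\'evy's continuity theorem. Writing $\veps=1/t$ in \eqref{charwballist} and using the similarity identity \eqref{Wpowers}, the characteristic function is $C_{Q(t)/t}(\lambda)=\tr[\rho_0\,\Wt_{1/t}^t(\idty)\,e^{i\lambda\cdot Q/t}]$. The first step is to discard the non-translation-invariant factor $e^{i\lambda\cdot Q/t}$: for an initial state $\rho_0$ supported on finitely many lattice sites one has $\|(e^{i\lambda\cdot Q/t}-\idty)\rho_0\|_1\to0$, and since in the unitary case $\Wt_{1/t}^t(\idty)$ is itself unitary in momentum space (its value at $p$ is $(W(p)^{*})^tW(p+\lambda/t)^t$, cf.\ \eqref{specWtIdty}), cyclicity of the trace gives $C_{Q(t)/t}(\lambda)=\tr[\rho_0\,\Wt_{1/t}^t(\idty)]+\order(1)$.

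The second step is to take the limit in momentum space. The translation-invariant operator $\Wt_{1/t}^t(\idty)$ is multiplication by the matrix \eqref{specWtIdty}; for every regular $p$ the analytic choice of spectral branches made in \eqref{specWp} makes $\omega_k(\cdot)$ and $P_k(\cdot)$ differentiable along $\lambda$ near $p$, so the off-diagonal terms satisfy $\|P_k(p)P_\ell(p+\lambda/t)\|\le\|P_\ell(p+\lambda/t)-P_\ell(p)\|\to0$ and drop out, while the diagonal exponents $t(\omega_k(p+\lambda/t)-\omega_k(p))$ converge to $\lambda\cdot\nabla\omega_k(p)$; this is exactly the computation leading to \eqref{limWt1}, so $\Wt_{1/t}^t(\idty)(p)\to\sum_k e^{i\lambda\cdot\nabla\omega_k(p)}P_k(p)=e^{i\lambda\cdot V(p)}$. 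By hypothesis this holds for almost every $p$. Since $\|\Wt_{1/t}^t(\idty)(p)\|\le1$ and $\rho_0(p)\ge0$ with $\int dp\,\tr\rho_0(p)=1$, the integrand $\tr[\rho_0(p)\,\Wt_{1/t}^t(\idty)(p)]$ is dominated by the integrable function $\tr\rho_0(p)$, so dominated convergence gives $\tr[\rho_0\,\Wt_{1/t}^t(\idty)]\to\int dp\,\tr[\rho_0(p)\,e^{i\lambda\cdot V(p)}]=\langle e^{i\lambda\cdot V}\rangle_{\rho_0}=C_V(\lambda)$, the characteristic function of the (bounded, jointly measurable) group velocity operator $V$ in $\rho_0$.

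To reach an arbitrary initial state $\rho$ I would use that the functional $\sigma\mapsto\tr[\sigma\,\Wa^t(e^{i\lambda\cdot Q/t})]$ has norm at most one uniformly in $t$, and that $\sigma\mapsto\langle e^{i\lambda\cdot V}\rangle_\sigma$ is trace-norm continuous; since finitely supported states are trace-norm dense, a routine approximation argument upgrades $C_{Q(t)/t}(\lambda)\to C_V(\lambda)$ to every state. Finally, $C_V$ is the characteristic function of an honest probability measure supported in a fixed compact set (the joint spectrum of $V$, which is bounded because $V$ is built from the bounded derivatives of the trigonometric polynomial $W$), hence continuous at $\lambda=0$; L\'evy's continuity theorem then gives weak convergence of the distribution of $Q(t)/t$ to the distribution of $V$ in $\rho$. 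The operator statement, $f(Q(t)/t)\to f(V)$ in the weak operator topology for bounded continuous $f$ vanishing at infinity, follows by applying this to all pure states and polarizing, and is a sub-case of weak convergence of measures.

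The only genuinely delicate point is the interchange of $\lim_{t\to\infty}$ with the momentum integral and with the trace against $\rho_0$: one must check that the hypothesis of a.e.\ regularity really does deliver a.e.\ pointwise convergence of $\Wt_{1/t}^t(\idty)(p)$ to $e^{i\lambda\cdot V(p)}$ \emph{together with} a fixed integrable majorant, so that dominated convergence applies; granted that, the argument is essentially bookkeeping on top of the identities \eqref{Wpowers}, \eqref{specWtIdty} and \eqref{limWt1} already established in \secref{AsymPosPertMeth} and \secref{unitary}.
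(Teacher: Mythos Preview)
Your proposal is correct and follows the same route as the paper: the paper's argument is precisely the computation of the pointwise limit \eqref{limWt1} via the exact identity \eqref{specWtIdty}, together with the observation that the result can be rewritten as $e^{i\lambda\cdot V(p)}$. The paper stops there and states the theorem without spelling out the passage from pointwise convergence of $\Wt_{1/t}^t(\idty)(p)$ to weak convergence of distributions; your additions (dominated convergence in $p$, trace-norm density to reach general states, L\'evy's continuity theorem) are exactly the bookkeeping needed to make that passage rigorous, and your remark that $\Wt_{1/t}^t(\idty)(p)=W(p)^{*t}W(p+\lambda/t)^t$ is unitary supplies the uniform bound the paper implicitly uses.
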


Let us conclude this section with two short examples of unitary quantum walks without temporal disorder. We start with the generic one dimensional quantum walk with a two dimensional coin space, given by a shift operation $S$ that moves the walker left or right depending on the internal degree of freedom and some $SU(2)$ coin operation.
In that case the walk operator in momentum space can be written in the familiar coin and shift decomposition
\begin{align}\label{eqn:1dwalk}
  W(p) = C\cdot S=\left(
               \begin{array}{cc}
                 \cos(\alpha)\mathrm e^{i\beta} & \sin(\alpha)\mathrm e^{i\gamma} \\
                 -\sin(\alpha)\mathrm e^{-i\gamma} & \cos(\alpha)\mathrm e^{-i\beta} \\
               \end{array}
             \right)
   \left(
               \begin{array}{cc}
                 \mathrm e^{i p} & 0 \\
                 0 & \mathrm e^{-i p} \\
               \end{array}
             \right)
\end{align}
with some unitary matrix $C$. By diagonalizing this matrix we get the $p$ dependent phases $\omega_\pm$ of the eigenvalues of the walk operator and the group velocities $\frac{\partial\omega_\pm}{\partial p}$

\begin{align*}
  w_\pm(p) =& \pm\arccos(\cos(p+\alpha)\cos(\beta)) \\ \frac{\partial\omega_\pm}{\partial p} =& \pm\frac{\cos(\beta)\sin(p+\alpha)}{\sqrt{1-\cos(\beta)^2\sin(p+\alpha)^2}}\;.
\end{align*}

In the case of the well known Hadamard walk, where $C$ is chosen to be the Hadamard matrix, the dispersion relations are given by $\omega_\pm(p)=\frac{\pi}{2}\pm\arccos\left(\frac{\sin{p}}{\sqrt{2}}\right)$, where the $\frac{\pi}{2}$ compensates a factor of $i$ due to our parametrization of the $2\times 2$ unitaries. In Fig. \ref{gra:Had:disp} a plot of the $p$ dependence of both eigenvalue branches is shown. From these dispersion relations we can compute the group velocities $\partial_{p}\omega_\pm(p)$, which are given in Fig. \ref{gra:Had:vel}. Looking at the asymptotic $t^{-1}$ scaled position distribution $\mathrm{P}(x)$ of the initial state $\psi(p) =
\genfrac(){0pt}{}{1}{0}$,
we see that at the points $\pm\frac{1}{\sqrt{2}}$ the density diverges (Fig. \ref{gra:Had:asym}. Comparing this with the graph of the group velocities we find that this corresponds to points, where the group velocity has an extremal point. We call these points {\bf caustics} and they are precisely the points causing the peaks in the asymptotic distributions.

\begin{figure}[htb]
  \centering
  \subfigure[Dispersion relation \label{gra:Had:disp}]{\includegraphics[width=0.32\textwidth]{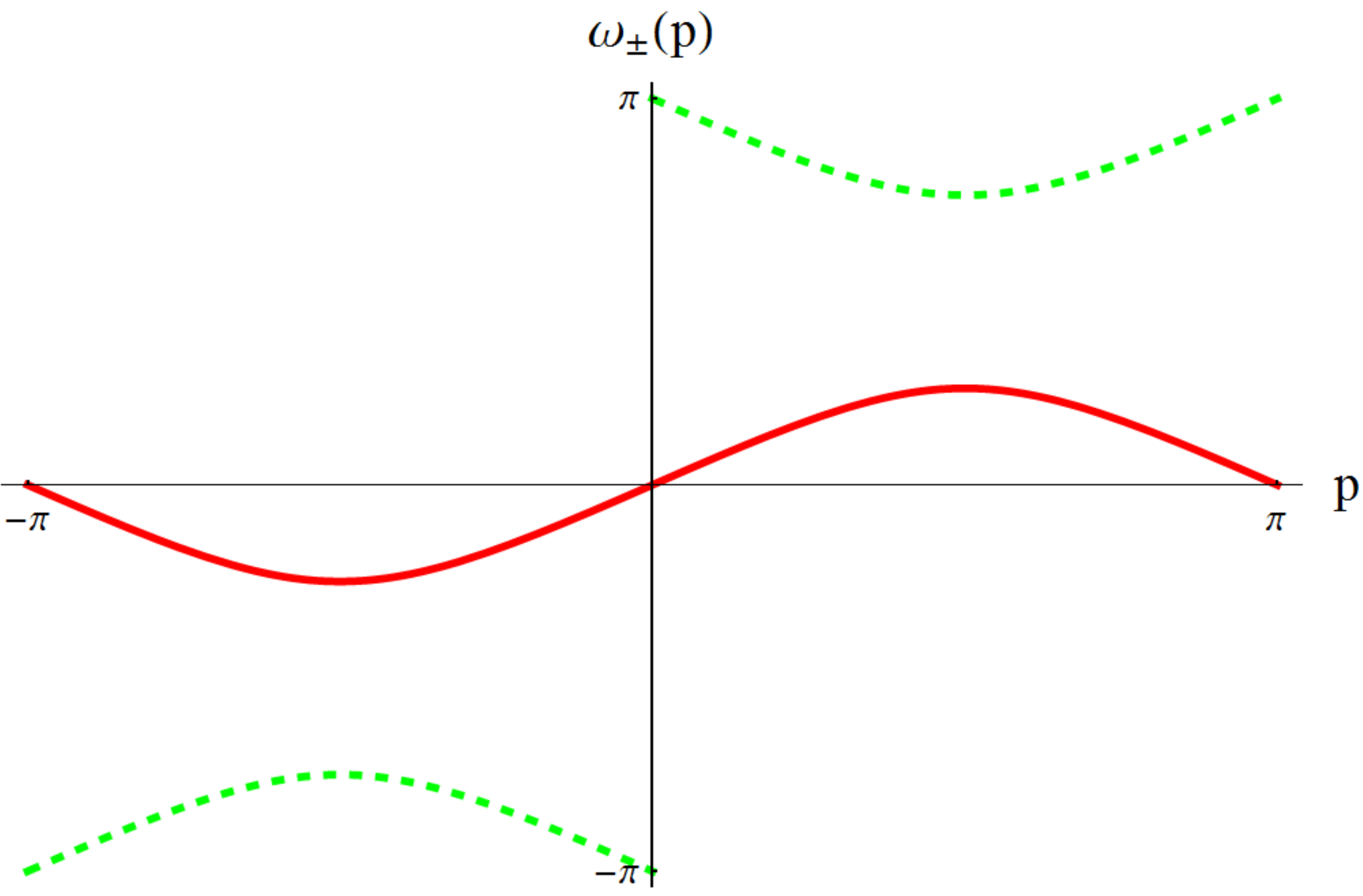}}
   \subfigure[Velocities \label{gra:Had:vel}]{\includegraphics[width=0.32\textwidth]{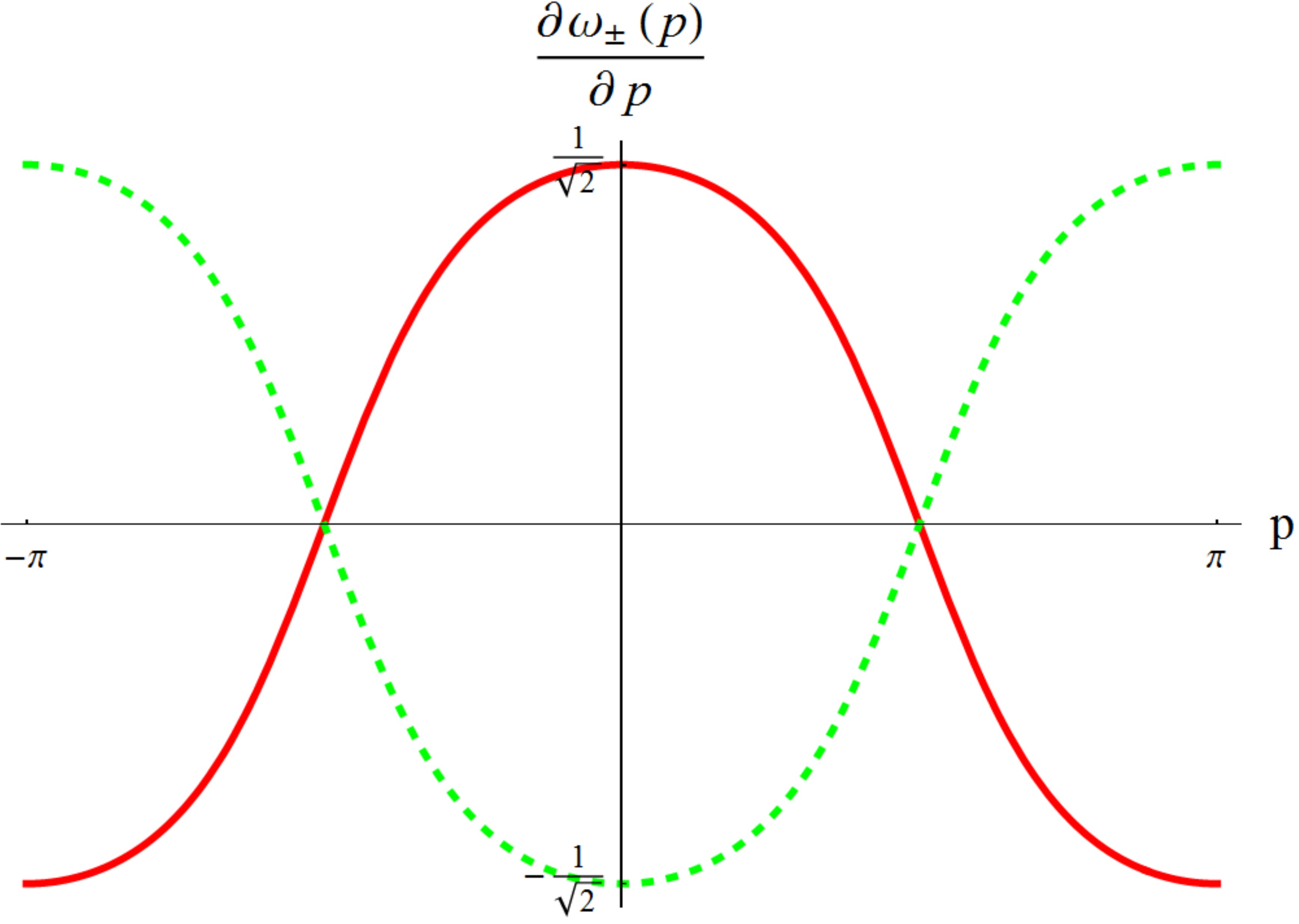}}
    \subfigure[Asymptotic distribution for initial state {$\psi(p)= \genfrac(){0pt}{}{1}{0}$ } \label{gra:Had:asym}]{\includegraphics[width=0.32\textwidth]{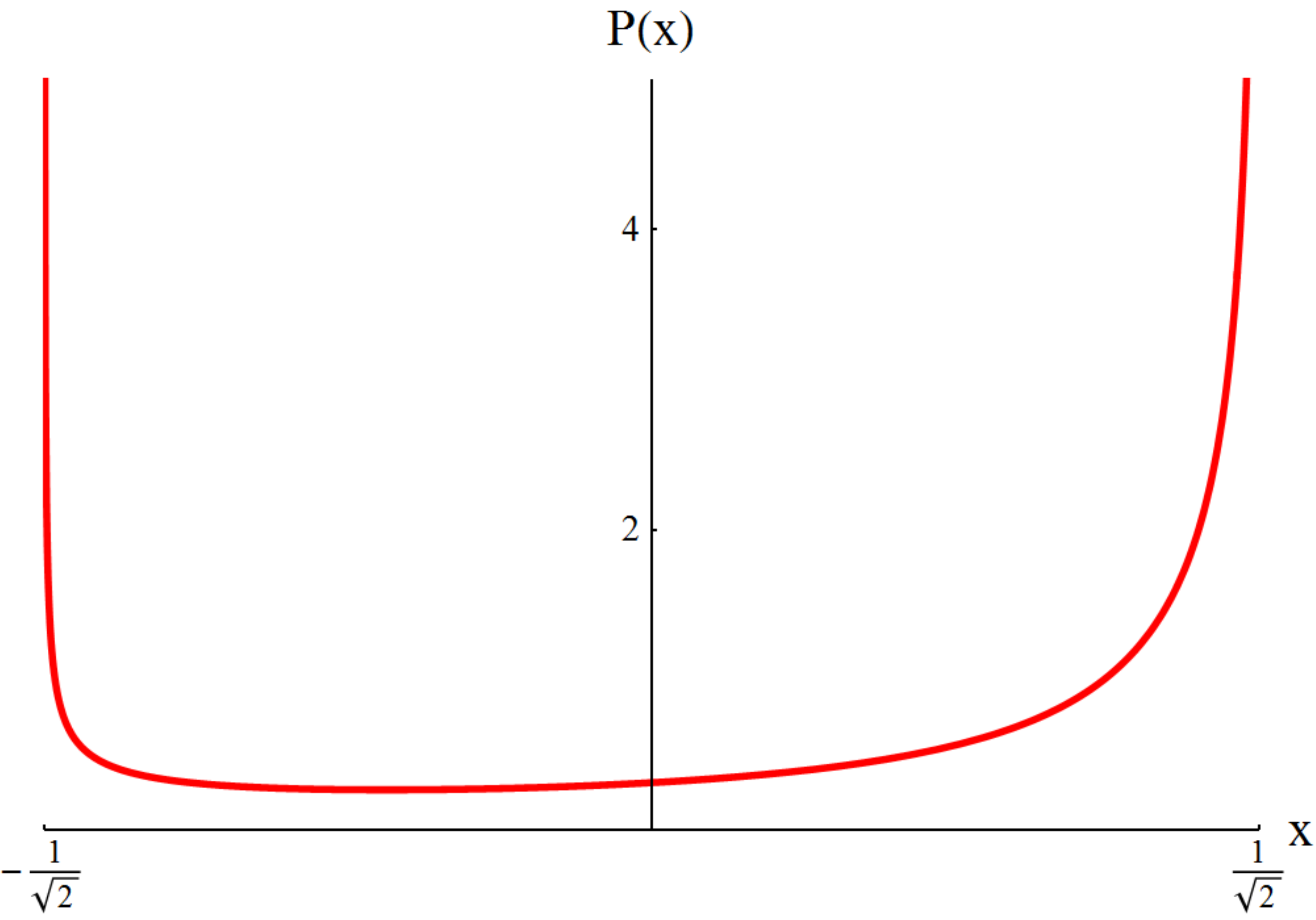}}
    \caption{(color online) For the quantum walk with Hadamard coin, the figures show (a) the dispersion relations $\omega_\pm$, (b) the velocities $v_\pm$, and (c) the asymptotic probability distribution. The initial state is chosen to be $\psi(p)=\genfrac(){0pt}{}{1}{0}$. The extremal points of the functions $v_\pm(p)$, hence the inflection points of $\omega_\pm$, are responsible for the peaks in (c) at $\pm 1/\sqrt{2}$.}
\end{figure}

Such caustics are also a generic behavior of quantum walks on $\Ir^2$. In accordance with the one dimensional case we define a caustic a bit more formally as a point in momentum space where the Jacobi matrix of the group velocity or equivalently the Hessian of $\omega(p_1,p_2)$ is singular. This in turn implies that the density of the $t^{-1}$ scaled asymptotic position distribution will diverge at such points, cf. Corollary \ref{locinv}.

As an example we will study a quantum walk on $\Ir^2$ with a two dimensional coin space. The shifting in the $p_1$ and $p_2$ direction is done separately and in between two unitary coin operations are alternated. The overall walk operator is therefore given by
\begin{align}\label{2Dwalk}
  W(p_1,p_2) = U_2\cdot S_2\cdot U_1\cdot S_1=U_2\left(
               \begin{array}{cc}
                 e^{i p_2} & 0 \\
                  0 & e^{-i p_2} \\
               \end{array}
             \right)\cdot U_1\cdot
   \left(
               \begin{array}{cc}
                 e^{i p_1} & 0 \\
                 0 & e^{-i p_1} \\
               \end{array}
             \right)\;.
\end{align}
The resulting dispersion relations are given by
\begin{eqnarray}
  \omega_{\pm}(p_1,p_2) &=& \pm\arccos(\cos(p_1 + p_2 + \theta_1+\theta_2) \cos(\phi_1) \cos(\phi_2) \nonumber\\
&&- \cos(p_1 - p_2 - \chi_1 + \chi_2) \sin(\phi_1) \sin(\phi_2)\;,
\end{eqnarray}
where greek lower case letters correspond the parametrization of a $2\times 2$ unitary matrix as given in \eqref{eqn:1dwalk} and the subscripts refer to the unitaries $U_i$. From this relation we could also calculate the velocities and the points in the asymptotic distribution where caustics can be observed. We will do this for an explicit example, see Fig. \ref{Fig:2DW}, where the parameters of the two unitaries are chosen to be $\theta_1 = -\theta_2 = \frac{\pi}{3}$, $\chi_1=-\chi_2=\frac{\pi}{4}$,
$\phi_1 = \frac{\pi}{4}$ and $\phi_2 = \frac{\pi}{3}$

The resulting band structure is depicted in Fig. \ref{gra:2DW:disp}. In principle, the bands could cross or intersect each other at a line or a single point. In this example, however, we can observe the generic behavior of an avoided crossing of the two branches of the dispersion relation. The red lines in the graph indicate momentum values for which we find caustics, i.e. where the determinant of the Jacobi matrix of $\omega(p_1,p_2)$ vanishes.
\begin{figure}[htb]
  \centering
  \subfigure[Dispersion relation \label{gra:2DW:disp}]{\includegraphics[width=0.37\textwidth]{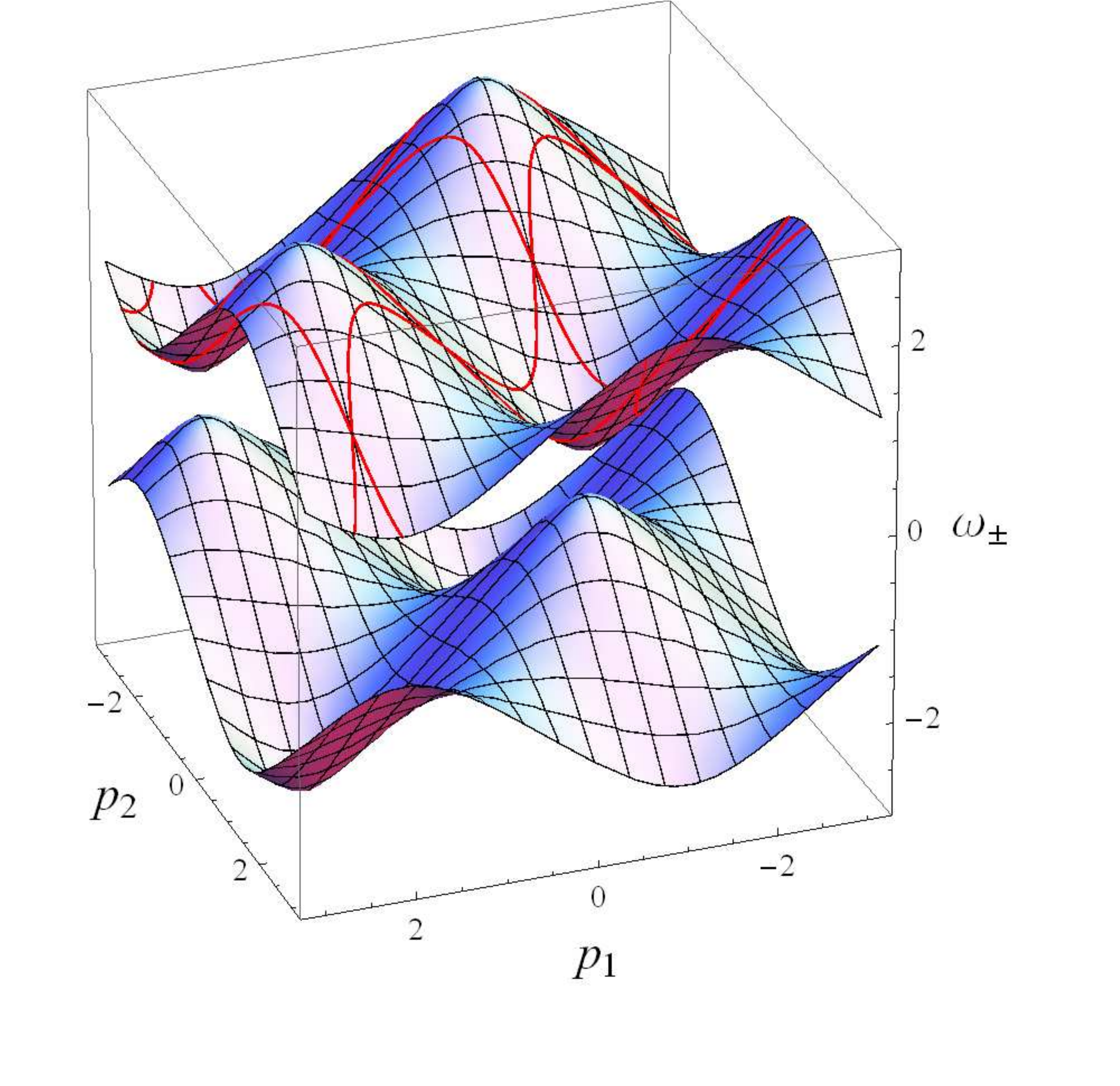}}
   \subfigure[Probability density of $v_+$ \label{gra:2DW:vel}]{\includegraphics[width=0.24\textwidth]{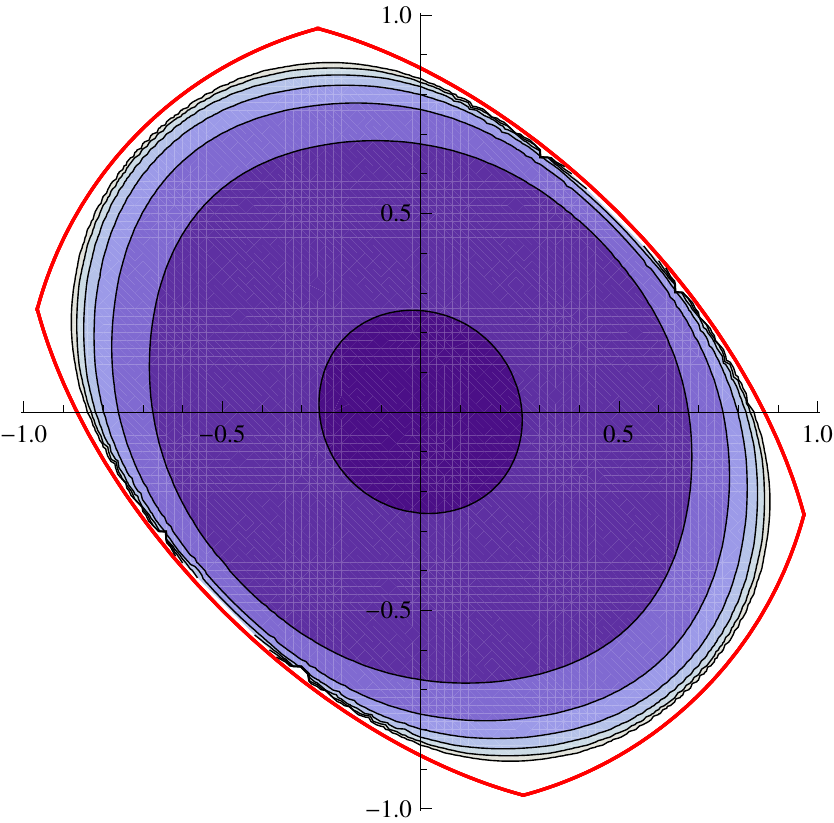}}
   \subfigure[Probability density of $v_+$ \label{gra:2DW:hess}]{\includegraphics[width=0.37\textwidth]{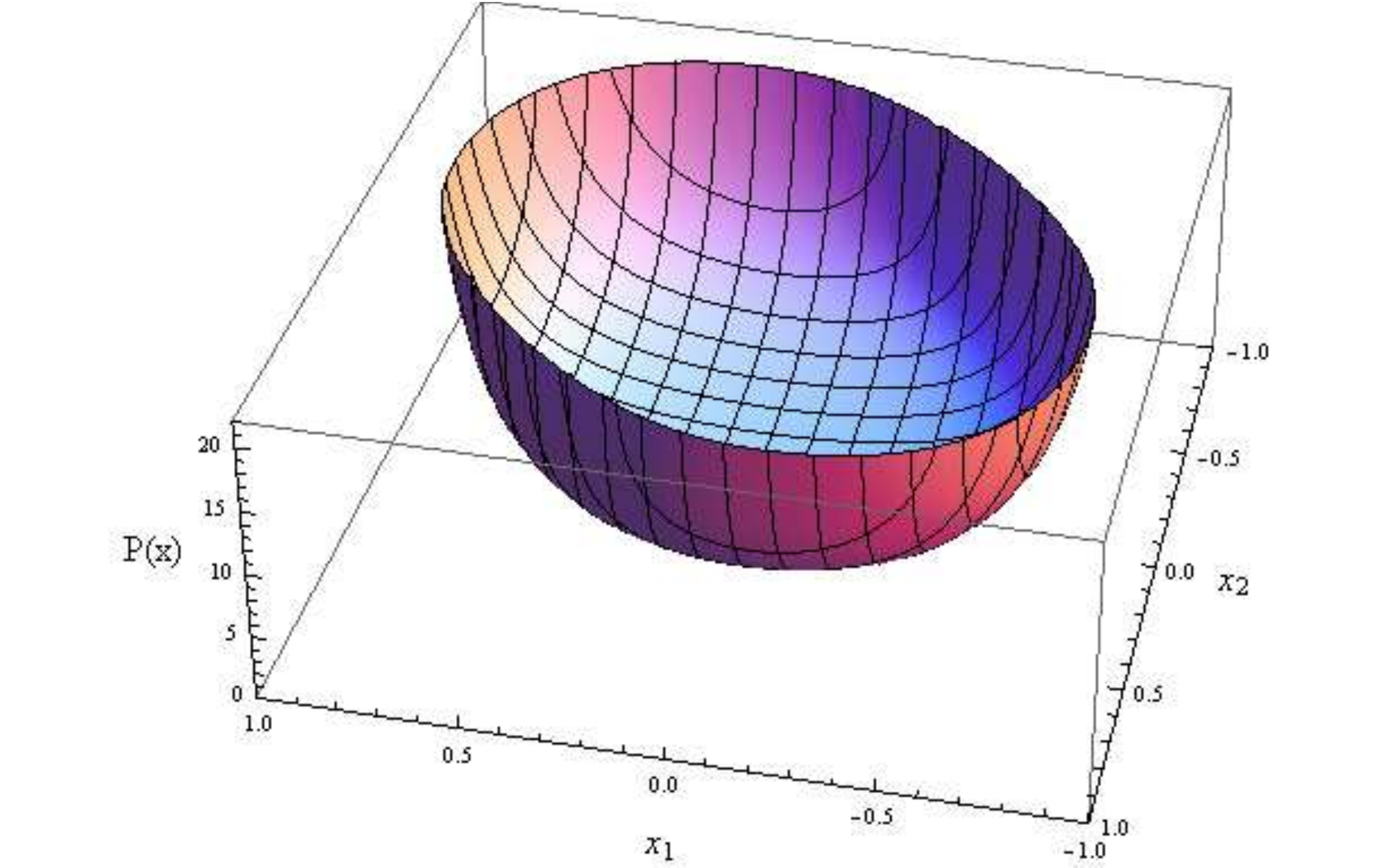}}
    \caption{(color online) For the two-dimensional walk \eqref{2Dwalk} the plots show (a) the dispersion relation $\omega_\pm$, (b) a contour plot, and (c) a 3D plot of the contribution of the $\omega_+$-branch to the asymptotic position density, for a particle starting at the origin. The red lines in (a) are curves of vanishing curvature. At these points the velocity density (i.e., the inverse of the Jacobian of the transformation $p\mapsto v_+(p)$) diverges. This produces the enclosing red line in (b), and infinitely high values in (c). For more complicated walks such lines also appear in the interior of the velocity region.}
\label{Fig:2DW}
\end{figure}
The second graph (Fig. \ref{gra:2DW:vel}) shows a contour plot of the possible pairs of velocities $(\partial_{p_1} \omega, \partial_{p_2} \omega)$ in the upper branch of the dispersion relation $\omega_+$.
 The red line at the border of the possible velocities, where the value of the determinant drops to zero, corresponds to the red lines in the $p_1,p_2$-dependent graph in Fig. \ref{gra:2DW:disp} of the dispersion relations. As in the one dimensional case, these are the points where the probability density of the asymptotic position in ballistic scaling will exhibit peaks. In the last graph (Fig. \ref{gra:2DW:hess}) the value of $|\det (\partial_p^2 \omega_+ )|^{-1}$ is plotted dependent on the velocities.

\subsection{Higher orders}\label{sec:higherUnitary}
We have evaluated asymptotic characteristic functions by employing a perturbation series. It is therefore natural to ask, whether one cannot use the higher terms in the series to get better approximations to the position density for large but not infinite $t$. Of course, this is possible, but one has to be careful in the interpretation of the results. The main problem is that the partial sums of an expansion of the characteristic function in powers of $1/t$ is {\it not} a characteristic function of any probability measure. Indeed, in the expansion we use we typically get combinations of $\lambda/t$, so the higher orders of the expansion will be polynomials in $\lambda$ times an oscillating factor. This is clearly not integrable, so the inverse Fourier transform to get the probability density is ill-defined, and gives, at best, a rather singular distribution. However, if we only look for the expectations of sufficiently smooth functions of velocity, say $f(Q(t)/t)$, we get the integral of the characteristic function with the Fourier transform of $f$ which decays rapidly enough to absorb all polynomial factors. Thus for a fixed smooth test function the expansion makes sense. Another way to look at this is to multiply the expanded characteristic function with a suitable cutoff-function (enforcing sufficient decay in $\lambda$) before transforming back to velocity space, resulting in a smoothed out probability density. Then ``removing the cutoff'' and the series expansion do not commute, and the choice of cutoff is effectively the choice of a smooth family of test functions.

We can go back directly to Eq.~\eqref{specWtIdty}, which is correct to all orders. That is, the characteristic function of the position distribution in ballistic scaling at time $t$ is
\begin{eqnarray}\label{Ctunitaryplus}
    C_t(\lambda)&=&\tr\rho \Wt_{1/t}^t(\idty)e^{i\lambda\cdot Q/t}\nonumber\\
      &=& \int dp\ \sum_{k\ell} e^{\textstyle it(\omega_\ell(p+\lambda/t)-\omega_k(p))}
              \tr\Bigl(\rho(p+\lambda/t,p)P_k(p)P_\ell(p+\lambda/t)\Bigr)
\end{eqnarray}
Here, as in the previous subsection $\rho(p_1,p_2)$ denotes the integral kernel of the initial density, and the trace is over the internal degrees of freedom.
In leading order we could neglect the shift by $\lambda/t$ in $P_\ell$, so only terms with $k=\ell$ remain. Looking now at the first order term resulting from the expansion of $P_\ell(p+\lambda/t)$ and $k\neq\ell$ we find an oscillatory integral with a regular integrand and rapidly oscillating exponential $\exp it(\omega_\ell(p)-i\omega_k(p))$. Assuming that this phase is not constant on sets of positive measure (as a function of $p$), we conclude that the integral goes to zero, so that with the factor $1/t$ from the expansion of $P_\ell$ such terms are $\ordert1$ and can be neglected. Of course, in higher order corrections one will have to extract the leading orders of the oscillatory integral by a stationary phase analysis.

For the expansion to first order we need the expansion of the dispersion relation to second order:
\begin{equation}\label{unifirst}
    \omega_k(p+\lambda/t)=\omega_k(p)+\frac\lambda t\cdot v_k(p)
             +\frac{1}{2t^2}\omega''(p,\lambda)+\ordert2,
\end{equation}
where $\omega''$ is a quadratic from in $\lambda$ containing the Hessian of the branch $\omega_k$ of the dispersion relation. This approximation eliminates one of the most prominent features of the finite $t$ probability distributions, namely their rapid oscillations. Indeed, from inspecting such distributions it is clear that these oscillations have a frequency of order $1/t$ in ballistic scaling, i.e., they are really at the scale of the underlying lattice. This is reflected in the exact formula  \eqref{Ctunitaryplus} by the fact that all expressions are $2\pi$-periodic in $\lambda/t$. So the Fourier transform of $C_t$ is a sum of $\delta$-functions at the lattice points. The approximation \eqref{unifirst} destroys this feature, resulting in a rather smooth function inside the allowed region of velocities.
The Hessian can be determined from standard second order perturbation theory of $W(p)$, but we found it convenient to eliminate it by partial integration of the $1/t$-term containing $\omega''$. Since the integrand is periodic in $p$, this gives no boundary terms, and the resulting differentiations of the trace can be combined with the other Taylor expansions to give the first order term
\begin{equation}\label{cfunitaryfirst}
    C_t(\lambda)=C_\infty(\lambda)+\frac1{2t}\int dp\sum_ke^{i\lambda\cdot v_k(p)}\tr
                   \Bigl((Q\cdot\lambda)\rho+\rho(Q\cdot\lambda))(p,p) + i[P_k(p),\lambda\cdot P'_k(p)]\Bigr)
                   +\ordert1.
\end{equation}
In the example of a Hadamard walk one can substitute the variable $v(p)$ and find a Bessel function for the first order term. However, due to the factor $\lambda$ this is not integrable, and the inverse Fourier transform has to be taken in the distributional sense, more specifically under the integration with smooth test functions whose support in velocity space stays away from the caustics. This results in the expression
\begin{equation}\label{nextHadamard}
    \rho_t(u)=\frac{1}{\pi(1-u)\sqrt{1-2 u^2}}+\frac1t\ \frac{u}{\pi \sqrt{1-2 u^2}^{\,3}}
              +\ordert1
              \quad\mbox{ for}\ \abs u<\frac1{\sqrt2}
\end{equation}
for the Hadamard walk starting at the origin in state $\psi_0=(1,0)$. This is shown for $t=10$ in Fig.~\ref{fig:hadamardnext}.
\begin{figure}[htb]
\includegraphics[width=0.5\textwidth]{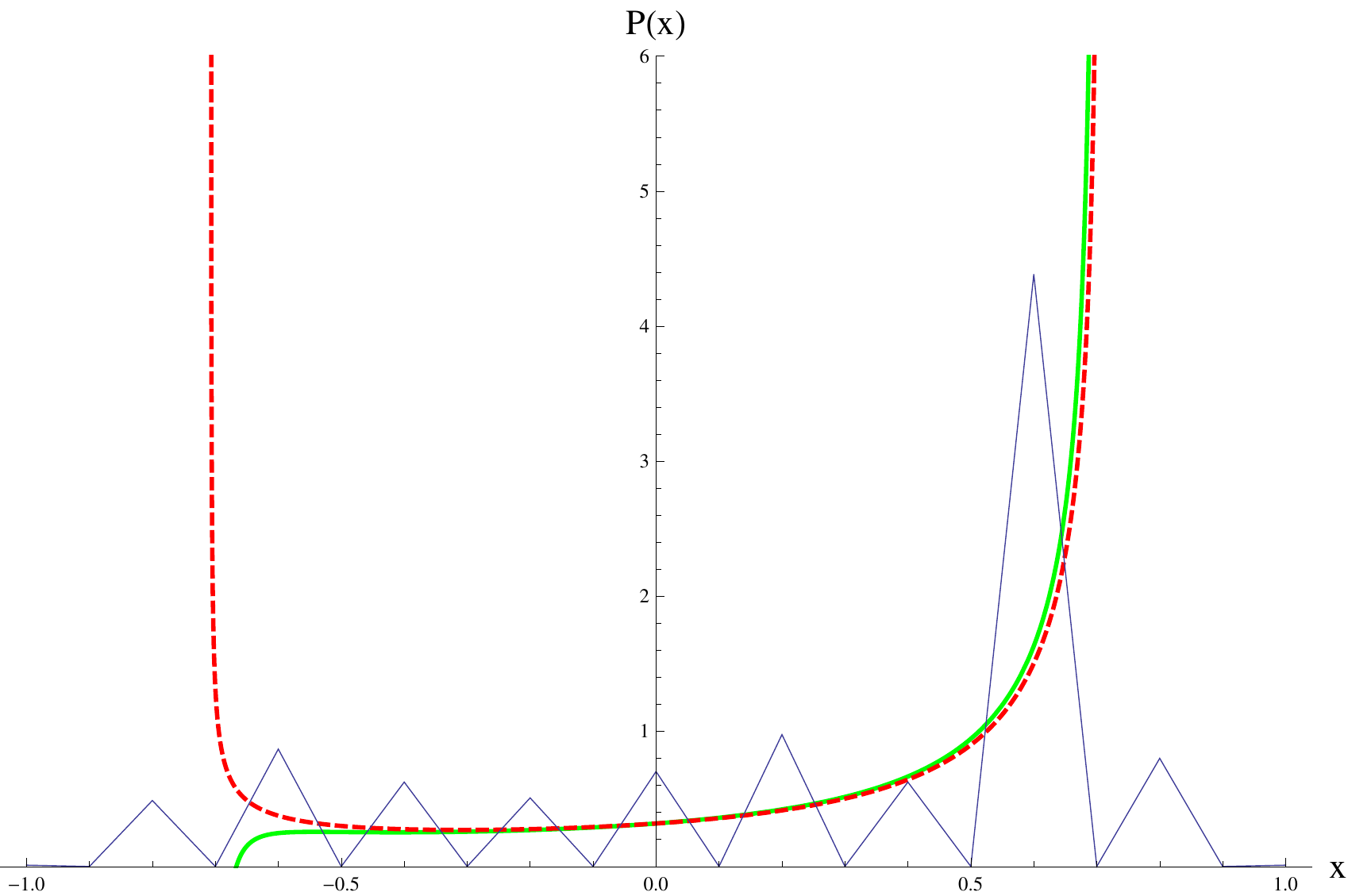}
    \caption{(color online) Correction of order $1/t$ to the asymptotic position distribution for the Hadamard walk from the initial state $\psi_0=(1,0)$ located at the origin. The polygon connects the exact values for $n=10$. The asymptotic distribution (red/dashed) overestimates the left peak and underestimates the right peak. The correction after \eqref{nextHadamard} for the same $t$ is shown by the green/solid curve.}
\label{fig:hadamardnext}
\end{figure} 
\section{Time decoherent walks}
\subsection{Ballistic determinism for decoherent processes}
\label{sec:ballistic}
We now extend the analysis to decoherent models, as described in \secref{geneSys}. We will continue to work under \assref{nomom}, i.e. excluding momentum transfer. We can thus still apply the theory outlined in \secref{AsymPosPertMeth}, and find that the asymptotic distribution of position is given, as in \eqref{Wtp} by an operator depending on momentum. This is also true if we include a control space with a Markov chain dynamics. The observable space we work on is hence the tensor product of $\L\infty(\Gamma)$, i.e. the control space variables with the quantum observables $\BB(\KK)$. We will fix the value of $p$, and everything will depend on this parameter, but this dependence will, for the moment be the only thing left of the translation degree of freedom.

As before, we will consider the observables on $A\in \L\infty\otimes\BB(\KK)$ as measurable functions $\gamma\mapsto A(\gamma)\in\BB(\KK)$. The two ingredients of each time step will then be on the one hand the $\gamma$-dependent quantum operation
\begin{equation}\label{Vtpg}
    (\Va A)(\gamma)=\Va_\gamma(A(\gamma))
       =\sum_\alpha K^*_{\gamma,\alpha}(p)A(\gamma)K_{\gamma,\alpha}(p),
\end{equation}
where the Kraus operators are normalized such that the sum over $\alpha$ with $A(\gamma)=\idty$ gives $\idty$. The second step is the update of the control parameters by the Markovian evolution $\Ma$, i.e.
\begin{equation}\label{markov}
  (\Ma A)(\gamma)=\int \ma_\gamma(d\eta)A(\eta).
\end{equation}
The full evolution is then given by $\Wa=\Va\Ma$ or, written out more explicitly:
\begin{equation}\label{WVMlong}
    (\Wa A)(\gamma)=\int \ma_\gamma(d\eta)\Va_\gamma(A(\eta)),
\end{equation}
with the momentum dependence implicit in the dependence of $\Va_\gamma$ according to \eqref{Vtpg}.

Obviously, this form contains the unitary case, e.g., when $\Va_\gamma$ is the same walk unitary for all $\gamma$, and the control process is irrelevant for the walking particle. We are now interested in the opposite end, i.e. the case of generic randomness. Therefore we will assume the following

\begin{ass}
\label{ass:chaos}
For almost all $p$, the eigenvalue $\mu_0=1$ of $\Wa$ as an operator on $\L\infty(\Gamma)\otimes\BB(\KK)$ is simple and isolated and for $i\neq 0$ the eigenvalues satisfy $|\mu_i|<1$. Moreover, there exists an invariant faithful state for $\Wa$, i.e. a density operator $\invrho$ with nonzero eigenvalues on $\KK$ such that $\tr\invrho\Va_\gamma(A)=\tr\invrho A$ for all $\gamma$ and all $A$.
\end{ass}

An example of commuting Kraus operators, where the non-degeneracy condition of the eigenvalue $1$ is violated given in \ref{ex:commKraus}. 
To see that this assumption is not overly restrictive we consider the following scenario.
\begin{prop}
\label{prob:simpev}
\assref{chaos} is valid if the quantum walk $\Wa$ has the following properties
\begin{enumerate}
\item $\Gamma$ is finite, so the transition probabilities can be written as a matrix $\ma_\gamma(\eta)$.
    For some power of this matrix all entries are strictly positive.
\item There is a density operator $\invrho$ with nonzero eigenvalues on $\KK$ such that $\tr\invrho\Va_\gamma(A)=\tr\invrho A$ for all $\gamma$ and all $A$.
\item For almost all $p$, and some $r\in\Nl$ the set of operators
   $K_{\gamma_1\alpha_1}(p)\cdots K_{\gamma_r\alpha_r}(p)$ is irreducible, i.e. only multiples of $\idty$ commute with all of them, and its linear span contains the identity.
\end{enumerate}
\end{prop}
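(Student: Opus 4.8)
The proposition is a noncommutative Perron--Frobenius statement, and the plan is to split \assref{chaos} into its invariant-state clause and its spectral clauses. Throughout, fix one of the (almost all) momenta $p$ for which (3) holds, with the associated $r$, and write $\AA:=\L\infty(\Gamma)\otimes\BB(\KK)\cong\bigoplus_{\gamma}\BB(\KK)$. For the invariant state, note that by (1) and the classical Perron--Frobenius theorem the primitive stochastic matrix $\ma_\gamma(\eta)$ has a unique strictly positive stationary vector $\pi$, so that the state $\sigma(A)=\sum_\gamma\pi_\gamma\,\tr\!\bigl(\invrho\,A(\gamma)\bigr)$ on $\AA$ is faithful (since $\pi_\gamma>0$ and $\invrho$ has full rank by (2)); a one-line computation using \eqref{WVMlong}, the relation $\tr\invrho\,\Va_\gamma(A)=\tr\invrho\,A$, and stationarity of $\pi$ gives $\sigma\circ\Wa=\sigma$. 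This is exactly the faithful-invariant-state part of \assref{chaos}.

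For the spectral clauses I would reduce everything to the single assertion that \emph{some power $\Wa^N$ is strictly positive}, meaning $A\geq0,\ A\neq0$ implies $(\Wa^N A)(\gamma)$ invertible for every $\gamma$. A unital positive map on a finite-dimensional $C^*$-algebra with this property is primitive: the Perron--Frobenius theorem applied to $\Wa^N$ gives spectral radius $1$, the eigenvalue $1$ algebraically simple and isolated with the rest of the spectrum in the open unit disc, and fixed-point space $\Cx\idty$; and these properties pass to $\Wa$ itself, since the $N$th power of any peripheral eigenvalue of $\Wa$ is a peripheral eigenvalue of $\Wa^N$, hence $1$, so a peripheral eigenvector of $\Wa$ is fixed by $\Wa^N$ and thus a multiple of $\idty$ (whence the peripheral spectrum of $\Wa$ is $\{1\}$), while a nontrivial Jordan block of $\Wa$ at $1$ would persist in $\Wa^N$.

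To produce such an $N$, I would iterate \eqref{WVMlong}. Writing $\mathbf W_{\gamma\eta}=\ma_\gamma(\eta)\,\Va_\gamma$ for the one-step block-transfer map,
\[
(\Wa^N A)(\gamma_0)=\sum_{\gamma_1,\dots,\gamma_N}\Bigl(\textstyle\prod_{i=0}^{N-1}\ma_{\gamma_i}(\gamma_{i+1})\Bigr)\,\bigl(\Va_{\gamma_0}\!\circ\cdots\circ\Va_{\gamma_{N-1}}\bigr)\bigl(A(\gamma_N)\bigr),
\]
where $\Va_{\gamma_0}\!\circ\cdots\circ\Va_{\gamma_{N-1}}$ is the channel with Kraus operators the length-$N$ products $M=K_{\gamma_{N-1}\alpha_{N-1}}\cdots K_{\gamma_0\alpha_0}$. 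Given $A\geq0$ with $A(\eta_0)\geq\delta\,\kettbra{v}$, I would bound $(\Wa^N A)(\gamma_0)$ below by the terms with $\gamma_N=\eta_0$; by (1), for $N$ large every control value lies on a positive-weight path from $\gamma_0$ to $\eta_0$, so the lower bound is a positive multiple of $\sum_M\kettbra{M^*v}$ with $M$ ranging over a family of Kraus products. Using that $\idty$ lies in the linear span of the length-$r$ products (second part of (3)) to absorb length mismatches, this family can be arranged to contain all $k$-fold products of length-$r$ products realizable along $\ma$-paths, for $k$ as large as needed; and since the length-$r$ products have no nontrivial common invariant subspace — the trivial-commutant hypothesis of (3) upgrades to this via the faithful fixed state $\invrho$, every invariant subspace of the Kraus operators of a channel with a faithful fixed state being reducing — the span of $\{M^*v\}$ is an $M^*$-invariant, nonzero, hence full, subspace of $\KK$. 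Thus $(\Wa^N A)(\gamma_0)$ is invertible for all $\gamma_0$, as required.

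The main obstacle is this last step: controlling precisely which Kraus products occur with positive Markov weight in the $\gamma_0$-block of $\Wa^N$ and verifying that the ``identity in the span'' and ``no common invariant subspace'' inputs genuinely force $\{M^*v\}$ to span $\KK$ --- i.e.\ intertwining the combinatorics of $\ma$-paths (from (1)) with the algebra of Kraus products (from (3), using (2) for the passage from trivial commutant to absence of invariant subspaces). The remaining ingredients --- the invariant state, the deduction of the spectral clauses from strict positivity of $\Wa^N$, and the stability of strict positivity under further powers --- are routine.
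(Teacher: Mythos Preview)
Your invariant-state paragraph is the same as the paper's, but the spectral argument takes a genuinely different route: you aim for strict positivity of a power of $\Wa$ and invoke the quantum Perron--Frobenius/Wielandt equivalence, whereas the paper never proves strict positivity. Instead it works directly with the Schwarz (2-positivity) inequality: writing $\Wa=\Va\Ma$ and decomposing the defect as $\Wa(A^*A)-\Wa(A)^*\Wa(A)=\Va\bigl(\Ma(A^*A)-\Ma(A)^*\Ma(A)\bigr)+\Va(B^*B)-\Va(B)^*\Va(B)$ with $B=\Ma(A)$, one evaluates in the faithful invariant state. The classical term forces a fixed point $A$ to be $\gamma$-independent; the quantum term then yields $\sum_\alpha[A,K_{\gamma\alpha}]^*[A,K_{\gamma\alpha}]=0$, so $A$ commutes with \emph{every} $K_{\gamma\alpha}$. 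Peripheral eigenvalues are handled by the multiplicative-domain trick ($X^*X$ is fixed, hence scalar; then $\Wa(X^n)=\omega^nX^n$).

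The obstacle you flag is real, and the paper's approach is precisely what circumvents it. In your argument the relevant Kraus products are $M=K_{\gamma_{N-1}\alpha_{N-1}}\cdots K_{\gamma_0\alpha_0}$ along $\ma$-paths with the initial label $\gamma_0$ \emph{fixed}; condition~(3), however, asserts irreducibility of \emph{all} length-$r$ products with no Markov constraint. Primitivity of $\ma$ does not give you every one-step transition, so you cannot freely realise arbitrary $\gamma$-words, and the rightmost factor of $M$ is pinned to $K_{\gamma_0,\cdot}$. The ``identity in the span'' trick lets you change lengths, but it does not remove this anchoring. In the paper's argument the Markov part of the Schwarz defect kills the $\gamma$-dependence \emph{before} the irreducibility hypothesis is invoked, so condition~(3) applies with no path bookkeeping at all. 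That decoupling of the classical and quantum pieces is exactly what your strict-positivity route lacks; if you want to push your approach through, you will need an explicit argument that the Markov-constrained products starting at a fixed $\gamma_0$ still act irreducibly, and that is not immediate from (1)--(3).
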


The last condition seems tricky to check, but it is generically satisfied. In fact, the operator products usually span the whole space of matrices for relatively small $r$.

\begin{proof}
Since $\L\infty(\Gamma)\otimes\BB(\KK)$ is now finite dimensional, any simple eigenvalue is isolated, so we only have to prove that $1$ is a simple eigenvalue, and the only one on the unit circle.
The transitivity assumption on the transition probabilities guarantees that there is a unique, strictly positive, invariant probability distribution $\invm$ on $\Gamma$. Hence we have an invariant state for  $\Wa$, namely $\overline\ma\otimes\overline\rho$, written out as
$$  (\overline\ma\otimes\overline\rho)(A)=\sum_\gamma\invm_\gamma\ \overline\rho\bigl(A(\gamma)\bigr).$$

We first show the simplicity of $1$, and since linearly independent eigenvectors of $\Wa$ will be independent eigenvectors of $\Wa^{n}$ we may do this for some power of $\Wa$. We choose some multiple of $r$, say $rn$, chosen sufficiently large so that all entries of the $nr$-step transition matrix are positive. Then condition 3 in the Proposition is also satisfied for $rn$, because the identity lies in the span of the Kraus operators for $\Wa^r$. Hence, for this step we can simplify the assumptions to $m_\gamma(\eta)>0$ for all $\gamma,\eta$ and the irreducibility of $\{K_{\gamma\alpha}\}$.

The basic technique for the proof is the decomposition
\begin{equation}\label{2posdec}
    \Wa(A^*A)-\Wa(A)^*\Wa(A)= \Va\Bigl(\Ma(A^*A)-\Ma(A)^*\Ma(A)\Bigr)+ \Va(B^*B)-\Va(B)^*\Va(B),
\end{equation}
where $B=\Ma(A)$. Both terms are positive by the ``2-positivity inequality'' \cite{Paulsen} for channels. But when we evaluate for $\Wa(A)=A$ in an invariant state of $\Wa$ the left hand side becomes zero. This will provide a lot of information from the vanishing of sums of positive terms on the right. Explicitly, we get
$$ \Bigl(\Ma(A^*A)-\Ma(A)^*\Ma(A)\Bigr)(\gamma)
   =\frac12\sum_{\eta,\chi}\ma_\gamma(\eta)\ma_\gamma(\chi)
     \Bigl(A(\eta)-A(\chi)\Bigr)^*\Bigl(A(\eta)-A(\chi)\Bigr),
$$
where we used that $\sum_{\eta} m_\gamma(\eta)=1$ for all $\gamma$. Applying the invariant state $\invm\otimes\invrho$, and using the invariance condition for each $\Va_\gamma$, we find
$$ \sum_{\gamma,\eta,\chi}\invm_\gamma\ma_\gamma(\eta)\ma_\gamma(\chi)
     \tr\invrho\bigl(A(\eta)-A(\chi)\bigr)^*\bigl(A(\eta)-A(\chi)\bigr)=0.$$
Since the probabilities $\ma$ and $\invm$ are all strictly positive, we find that each summand vanishes. Because $\invrho$ has no zero eigenvalue, this also implies that $\bigl(A(\eta)-A(\chi)\bigr)^*\bigl(A(\eta)-A(\chi)\bigr)=0$, and hence
$A(\eta)=A(\chi)$ for all $\eta,\chi$. Hence we can set $A(\gamma)=A$ for all $\gamma$.
This makes the first term in \eqref{2posdec} vanish for every $A$, and leads to $B=\Ma(A)=A$ in the second term. Moreover, from \eqref{WVMlong} we see that $\Wa(A)=A$ just means that $\Va_\gamma(A)=A$ for all $\gamma$.

Now consider, for each one of the operators $\Va_\gamma$, which has Kraus operators $K_{\gamma\alpha}$,
the expression
\begin{equation}\label{commutatorsum}
    \sum_\alpha [A,K_{\gamma\alpha}]^*[A,K_{\gamma\alpha}]
      =\Va_\gamma\bigl(A^*A\bigr)-\Va_\gamma(A)^*A-A^*\Va_\gamma(A)+A^*A
\end{equation}
If $A$ is invariant, this reduces to $\Va_\gamma\bigl(A^*A\bigr)-A^*A$, which is clearly zero under the
common invariant state $\invrho$. Hence the expectation of the positive terms on the left hand side must vanish also, and since $\invrho$ has no zero eigenvalues each $[A,K_{\gamma\alpha}]=0$ for all $\alpha$ and $\gamma$. By assumption this implies that $A$ is a multiple of the identity.

It remains to be shown that there are no further eigenvalues on the unit circle. Suppose to the contrary that $\Wa(X)=\omega X\neq0$ for some $\omega\neq1$ with $\abs\omega=1$. Then the operator
$\Wa(X^*X)-\Wa(X)^*\Wa(X)=\Wa(X^*X)-X^*X$, which is positive by the $2$-positivity, has vanishing expectation in the faithful invariant state, which implies that it is zero. But then $X^*X$ is a fixed point of $\Wa$, and we have already seen that this implies that $X^*X$ is a multiple of the identity. Since it cannot be zero, we can normalize $X$ so that it becomes unitary. But $2$-positivity also implies that if $\Wa(X^*X)-\Wa(X)^*\Wa(X)=0$ we must also have $\Wa(Y^*X)-\Wa(Y)^*\Wa(X)=0$, for all $Y$. Otherwise, the inequality could not be valid for linear combinations of $Y$ and $X$. But then, by induction on $n$, we find that $\Wa(X^n)=\omega^nX^n$. In other words, all powers of $\omega$ are eigenvalues. Since the dimension of the space is finite, this means that $\omega$ must be a root of unity, say $\omega^n=1$. But then $X$ is also an eigenvector of $\Wa^n$ with eigenvalue $1$. Since our arguments for the simplicity of $1$ also apply to powers of $\Wa$, this implies $X=\idty$, and a contradiction to $\omega\neq1$.
\end{proof}

Now we can apply the ideas of \secref{unitary} to extract the ballistic spreading of the position distribution. The only difference is that we can now use non-degenerate perturbation theory, so everything is much simpler. Consider again some parameter $\lambda$ as the argument in the characteristic function, and the operator $\Wt_\veps$, acting on $A\in\L\infty(\Gamma)\otimes\BB(\HH)$. We define operators $\Vt_{\gamma,\veps}$ via
\begin{equation}\label{Wtgamma}
    (\Wt_\veps A)(\gamma)
       =\int \ma_\gamma(d\eta)\
         \sum_\alpha K_{\gamma\alpha}(p)^*A(\eta)K_{\gamma\alpha}(p+\veps\lambda)
       =\int \ma_\gamma(d\eta)\ \Vt_{\gamma,\veps}(A(\eta))\, .
\end{equation}
Let $\Wt_\veps A_\veps=\mu_\veps A_\veps$ denote the branch of eigenvectors and eigenvalues with $\mu_0=1$.
Since the eigenvector is only determined up to a factor we are free to choose a normalization so that
\begin{equation}\label{normalizeA}
    \invm\otimes\invrho(A_\veps)=1.
\end{equation}
Expanding these objects to second order,  we get
\begin{eqnarray}
  \Vt_{\gamma,\veps} &=& \Va_\gamma+\veps\Vt'_\gamma+\frac{\veps^2}2 \Vt_\gamma''+\Order(\veps^3)
                         \nonumber\\
  \mu_\veps &=& 1 + \veps\mu' +\frac{\veps^2}2 \mu''+\Order(\veps^3)     \label{eveps}\\
  A_\veps &=& \idty+\veps A' + \frac{\veps^2}2 A'' + \Order(\veps^3)  \, . \nonumber
\end{eqnarray}
For now we will only use the first order, in which the eigenvalue equation reads
\begin{eqnarray}\label{Ev1nondeg}
    \Wa(A')(\gamma)-A'(\gamma)
      &=&\mu'\idty -\int \ma_\gamma(d\eta)\
         \Vt'_\gamma(\idty)=\mu'\idty - \Vt'_\gamma(\idty)
         \nonumber\\
      &=&\mu'\idty - \sum_\alpha K_{\gamma\alpha}(p)^* \left.
         \frac d{d\veps}K_{\gamma\alpha}(p+\veps\lambda)\right\vert_{\veps=0},
\end{eqnarray}
Where we evaluated the sum over $\eta$, on which the integrand does not depend. Taking the expectation with respect to the invariant state $\invm\otimes\invrho$ makes the left hand side zero, and leaves an explicit equation for $\mu'$, namely
\begin{eqnarray}\label{vdecoh}
    \mu'&=&i\lamv(p) \quad\mbox{with the vector}\nonumber\\
    v(p)&=&  -i\int \invm(d\gamma) \sum_\alpha \tr\invrho K_{\gamma\alpha}(p)^* \nabla K_{\gamma\alpha}(p).
\end{eqnarray}
This expression is real, because for all $\gamma$ and all $X\in\BB(\KK)$, the function
\[
p\mapsto\sum_\alpha\tr\invrho K_{\gamma\alpha}(p)^* X K_{\gamma\alpha}(p)=\tr\invrho X
\]
is constant, and hence has zero gradient. Summarizing we get the following result.

\begin{prop}
\label{asymdistr} Given \assref{nomom} and \assref{chaos}, the asymptotic distribution of $Q(t)/t$ is determined by the distribution of $v(p)$ as defined in \eqref{vdecoh}.
\end{prop}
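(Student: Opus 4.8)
\medskip
\noindent\emph{Proof strategy.}\ The plan is to run the perturbative scheme of \secref{AsymPosPertMeth} in its non-degenerate form, the degeneracy having been removed by \assref{chaos}. Including the average over the stationary control variable, \eqref{charw} together with \eqref{Wpowers} at $\veps=1/t$ gives
\begin{equation*}
  C_{Q(t)/t}(\lambda)=\int\invm(d\gamma)\ \tr\rho_0\,\bigl(\Wt_{1/t}^t(\idty)\bigr)(\gamma)\,e^{i\lambda\cdot Q/t}\,,
\end{equation*}
and under \assref{nomom} the operator $\Wt_\veps$ of \eqref{Wtgamma} acts fibrewise in momentum, so that — exactly as in the computation leading to \eqref{Ctunitaryplus} — this becomes
\begin{equation*}
  C_{Q(t)/t}(\lambda)=\int\invm(d\gamma)\int dp\ \tr\Bigl(\rho_0(p+\lambda/t,p)\,\bigl(\Wt_{1/t}^t(\idty)\bigr)(\gamma)(p)\Bigr)\,,
\end{equation*}
with $\rho_0(p,p')$ the integral kernel of the initial density (a trigonometric polynomial when $\rho_0$ is supported on finitely many sites, which I assume at first, extending afterwards by trace-norm density). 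Everything then reduces to the fibrewise limit of $\bigl(\Wt_{1/t}^t(\idty)\bigr)(\gamma)(p)$.

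Next I would fix a $p$ in the full-measure set where \assref{chaos} holds, and write $\Wt_\veps$ for the corresponding fibre operator on $\L\infty(\Gamma)\otimes\BB(\KK)$. Because the $K_{\gamma\alpha}$ are trigonometric polynomials, $\veps\mapsto\Wt_\veps$ is an entire family, and \assref{chaos} gives a simple, isolated eigenvalue $1$ at $\veps=0$, the rest of the spectrum lying in $\{\abs z\le1-\delta\}$ for some $\delta>0$. Analytic perturbation theory \cite{Kato} then produces an analytic branch $\mu_\veps=1+\veps\mu'+\Order(\veps^2)$ with analytic spectral projection $\Pb_0(\veps)$; the first-order eigenvalue equation \eqref{Ev1nondeg}--\eqref{vdecoh} already identifies $\mu'=i\lamv(p)$ and shows $v(p)\in\Rl^s$. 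The unperturbed eigenvector is $\idty$ and the corresponding left eigenvector is the faithful invariant state $\invm\otimes\invrho$, so with the normalisation \eqref{normalizeA} one has $\Pb_0(0)(\idty)=\idty$, hence $\Pb_0(\veps)(\idty)=\idty+\Order(\veps)$. Splitting
\begin{equation*}
  \Wt_{1/t}^t(\idty)=\mu_{1/t}^t\,\Pb_0(1/t)(\idty)+\frac{1}{2\pi i}\oint_{\abs z=1-\delta/2}z^t\,(z-\Wt_{1/t})^{-1}(\idty)\,dz\,,
\end{equation*}
the first term tends to $e^{i\lamv(p)}\idty$ (using $\mu_{1/t}^t\to e^{i\lamv(p)}$ and $\Pb_0(1/t)(\idty)\to\idty$), while the contour term tends to $0$ since $(1-\delta/2)^t\to0$ and the resolvent stays bounded on that circle uniformly for $\veps$ near $0$ by isolation of the eigenvalue. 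As the limit is independent of $\gamma$ and attained uniformly in $\gamma$ (the corrections are controlled in $\L\infty(\Gamma)\otimes\BB(\KK)$), the $\invm$-average is harmless.

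Putting the pieces back, the factor $e^{i\lambda\cdot Q/t}$ contributes only the vanishing argument shift $p\mapsto p+\lambda/t$, and dominated convergence applies: by the Kadison--Schwarz inequality \cite{Paulsen} together with $\sum_\alpha K_{\gamma\alpha}^*K_{\gamma\alpha}=\idty$ and $\int\ma_\gamma(d\eta)=1$ one gets $\norm{\bigl(\Wt_{1/t}^t(\idty)\bigr)(\gamma)(p)}\le1$, and $\rho_0(\cdot,\cdot)$ is bounded for finitely supported $\rho_0$. Thus
\begin{equation*}
  \lim_{t\to\infty}C_{Q(t)/t}(\lambda)=\int dp\ e^{i\lamv(p)}\,\tr\rho_0(p)\,,
\end{equation*}
the trace being over the internal degrees of freedom. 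Since $p\mapsto\tr\rho_0(p)\ge0$ is a probability density on $\Xh$, the right-hand side is the characteristic function of $v(P)$ for a momentum $P$ distributed with that density; by the Lévy continuity theorem the law of $Q(t)/t$ converges weakly to the push-forward of $\tr\rho_0(p)\,dp$ under $p\mapsto v(p)$, i.e.\ to the distribution of $v(p)$. The restriction to finitely supported $\rho_0$ is then removed by trace-norm density together with the uniform bound $\abs{C_{Q(t)/t}(\lambda)}\le1$.

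The hard part will be the contour term: because the limits $\veps\to0$ and $t\to\infty$ do not commute (cf.\ the discussion after \eqref{xxx}), it does not suffice that it vanishes for each fixed $\veps$ — one needs the decay to hold uniformly as $\veps=1/t$ shrinks together with $t$, and this is exactly what the uniform resolvent bound on $\{\abs z=1-\delta/2\}$ delivers, resting on the \emph{isolation} (not merely $\abs{\mu_i}<1$) of the eigenvalue $1$ in \assref{chaos}. Everything else — the identification of $\mu'$, the reality of $v(p)$, the fibrewise convergence and the dominated-convergence step — is routine given \assref{nomom}, \assref{chaos} and the results already established.
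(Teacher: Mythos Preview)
Your approach is essentially that of the paper: compute the limit of the characteristic function fibrewise in $p$ via analytic perturbation theory of the simple eigenvalue $1$ of $\Wt_\veps$, identify the first-order coefficient as $i\lamv(p)$ from \eqref{Ev1nondeg}--\eqref{vdecoh}, and conclude $C_{Q(t)/t}(\lambda)\to\int dp\,e^{i\lamv(p)}\tr\rho_0(p)$. The paper's own proof is a single sentence referring back to this ``preceding reasoning''; you have supplied the technical flesh (uniform contractivity, the contour/resolvent control of the complementary spectrum, dominated convergence, L\'evy continuity, density) that the paper leaves implicit, and your contour-integral handling of the non-dominant spectrum is equivalent to the Jordan form \eqref{jordan} used there.
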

\begin{proof}
According to the preceding reasoning the characteristic function of the random variable $Q(t)/t$ converges for $t\to \infty $ to $\int  dp\,e^{i\lambda \cdot v(p)}\tr \rho (p)$ and the asymptotic distribution of $Q(t)/t$ is given by inverse Fourier transform of this function.
\end{proof}
Two features of this formula are remarkable. Firstly, $v(p)$ is a scalar, i.e. a multiple of the identity with respect to the internal degrees of freedom from $\KK$. Therefore, in contrast to the unitary case discussed in \secref{unitary}, the asymptotic distribution is independent of the initial state of the coin. Secondly, the transition probabilities of the control process only enter through the invariant probability distribution $\invm$. This means that we get the same result for a Bernoulli process, in which we take independent coins in successive steps with probability distribution $\invm$. The control space therefore just contributes another index to the Kraus decomposition of a one-step channel, cf. Sect. \ref{sec:Bernoulli}. We will see, however, that the transition probabilities are not irrelevant for next order, see Sect. \ref{ex:misslinks}.

The following corollary gives a closed formula for the asymptotic distribution of $Q(t)/t$ if the function $v(p)$ is such that there is a decomposition of momentum space into a finite number of open sets on which the function $v(p)$ is invertible and a boundary which coincides with the set of caustic points of the quantum walk.

\begin{cor}
\label{locinv}
Let the characteristic function of the asymptotic distribution of $Q(t)/t$ be given by
\[
\lim\limits_{t\to\infty}C_{Q(t)/t}(\lambda )
    =\int\limits_{(-\pi,\pi]^s}\kern-10pt dp\ e^{i\lambda \cdot v(p)}\rho (p)\,
\]
and suppose there exist finitely many disjoint open sets $R_j$, $j\in J$ on which $v(p)$ is invertible and $R=\dot \cup R_j$ is the set of all points such that the determinant of the Jacobi matrix $\partial v(p)/\partial p$ is non-zero. Then, the asymptotic probability distribution of $Q(t)/t$ is given by
\begin{equation}
\label{asymdistreqn}
\mathrm{P}(x)= \sum_{j\in J} \left|\det \left(\frac{\partial v_j}{\partial p}\right)\right|^{-1}\kern-10pt(x)
    \ \rho ({v_j}^{-1}(x))\quad ,\, x\in\Rl^s\, ,
\end{equation}
with $v_j$ the restriction of $v$ to $R_j$ and the set of caustic point coincides with the complement of $R$ in $(-\pi,\pi]^s$.
\end{cor}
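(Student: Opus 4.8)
The plan is to read the claimed formula as the density of the pushforward of the momentum-space probability density $\rho(p)$ under the velocity map $v$, obtained by a piecewise change of variables. By Proposition~\ref{asymdistr} the limiting characteristic function is $C_\infty(\lambda)=\int_{(-\pi,\pi]^s} dp\,e^{i\lambda\cdot v(p)}\rho(p)$, which is exactly $\int_{\Rl^s} e^{i\lambda\cdot x}\,d\nu(x)$ for $\nu$ the image of the finite positive measure $\rho(p)\,dp$ under $v\colon(-\pi,\pi]^s\to\Rl^s$; so it suffices to show that $\nu$ has the density in \eqref{asymdistreqn}, after which uniqueness of the characteristic function identifies $\nu$ with the asymptotic distribution of $Q(t)/t$. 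The first point is that the caustic set is Lebesgue-negligible: since each Kraus operator $K_{\gamma\alpha}(p)$ is a trigonometric polynomial, $v(p)$ in \eqref{vdecoh} is real-analytic, hence so is $p\mapsto\det(\partial v/\partial p)$; unless this determinant vanishes identically—which is excluded, as the hypothesis supplies nonempty open sets $R_j$ on which it is nonzero—its zero set $(-\pi,\pi]^s\setminus R$ is a proper real-analytic subvariety and therefore has measure zero. Hence it does not contribute to the integral and $C_\infty(\lambda)=\sum_{j\in J}\int_{R_j}dp\,e^{i\lambda\cdot v(p)}\rho(p)$.

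Next I would change variables on each piece. On $R_j$ the map $v_j:=v|_{R_j}$ is injective by hypothesis and has everywhere nonsingular Jacobian, so by the inverse function theorem it is a diffeomorphism of $R_j$ onto the open set $v_j(R_j)\subset\Rl^s$. Substituting $x=v_j(p)$, with $dp=|\det(\partial v_j/\partial p)|^{-1}\,dx$, turns the $j$-th integral into $\int_{v_j(R_j)}dx\,e^{i\lambda\cdot x}\,|\det(\partial v_j/\partial p)|^{-1}(v_j^{-1}(x))\,\rho(v_j^{-1}(x))$, the new integrand being $L^1$ on $v_j(R_j)$ because the substitution preserves the integral of the integrable function $\rho$. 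Extending each integrand by $0$ off $v_j(R_j)$ and summing over the finite set $J$ gives $C_\infty(\lambda)=\int_{\Rl^s}dx\,e^{i\lambda\cdot x}\,\mathrm{P}(x)$ with $\mathrm{P}$ exactly the function in \eqref{asymdistreqn} (read so that the $j$-th summand is present only where $v_j^{-1}(x)$ is defined; the images $v_j(R_j)$ may overlap, legitimately producing sums of contributions). Since $\mathrm{P}\geq0$ and $\int\mathrm{P}=\int\rho(p)\,dp=1$ (take $\lambda=0$), it is a genuine probability density, and uniqueness of the Fourier transform on finite measures completes the identification of $\mathrm{P}$ as the density of $\lim_{t\to\infty}Q(t)/t$.

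Finally, the identification of the exceptional set with the caustics is immediate from the definition adopted in \secref{unitary}: a caustic is a momentum at which the Jacobi matrix of the group velocity—here the scalar $v(p)$—is singular, i.e.\ a point of $(-\pi,\pi]^s\setminus R$, which is the asserted complement. The one genuinely delicate step is the measure-zero claim: one must invoke that a nontrivial real-analytic function on the torus has a null zero set, and note that the hypothesis tacitly forces $\det(\partial v/\partial p)\not\equiv0$ (if $R=\varnothing$ the corollary is vacuous, the limit distribution then being too singular to possess a density). The remaining steps are the standard diffeomorphism change-of-variables together with injectivity of the Fourier transform on finite measures.
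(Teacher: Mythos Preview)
Your argument is correct and follows essentially the same route as the paper: both identify the asymptotic density as the pushforward of $\rho(p)\,dp$ under $v$ and obtain \eqref{asymdistreqn} by a change of variables on each $R_j$. The only cosmetic difference is that the paper passes through the distributional identity $\delta(x)=(2\pi)^{-s}\int d\lambda\,e^{i\lambda\cdot x}$ to write $\mathrm{P}(x)=\sum_j\int_{R_j}dp\,\delta(v(p)-x)\rho(p)$ before substituting, whereas you argue directly via the pushforward measure and Fourier uniqueness; your extra care in showing that the caustic set has Lebesgue measure zero (using real-analyticity of $v$) fills a step the paper leaves implicit.
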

\begin{proof}
The asymptotic position distribution $\mathrm{P}(x)$ can be obtained from the characteristic function $C(\lambda)$ by applying the inverse Fourier transform
\[
\mathrm{P}(x)=\frac{1}{(2\pi)^s}\int\limits_{\Rl^s}d\lambda \kern-10pt\int\limits_{(-\pi,\pi]^s}\kern-10pt dp\
     e^{i\lambda \cdot(v(p)-x)}\rho (p)\,.
\]
Using the integral representation of the Dirac distribution
$\delta (x) = \frac{1}{(2\pi)^s}\int_{\Rl^s} d\lambda\ e^{i\lambda\cdot x}$ we get
\[
\mathrm{P}(x)= \sum_{j\in J} \int\limits_{R_j}\!\!dp\ \delta (v(p)-x)\rho (p)\,.
\]
Now, formula \eqref{asymdistreqn} follows from the fact that $v(p)$ is invertible on the $R_j$ and a substitution of the integration variable $p$ by $v$. The complement of $R$ in $(-\pi,\pi]^s$ is given by the set of points where the Jacobi matrix $\partial v(p)/\partial p$, i.e. the Hessian of the dispersion relation, is singular. This is the set of caustic points, and the asymptotic probability distribution diverges at the corresponding points.
\end{proof}

\subsection{Markov controlled walks in diffusive scaling}
\label{sec:diffusive}

We now go on to study the asymptotic position distribution on the $\sqrt t$ scale. If $Q(t)/t\to v$ goes to a sharp value the quantity of interest will be $(Q(t)-vt)/\sqrt t$, and we can hope that this has a well-defined limiting distribution. Indeed for this to make sense we need that the velocity is sharply defined. Otherwise, we would just see the limiting distribution of $v$ scaled to larger and larger variance. Therefore, we need to restrict to a setting, in which $v(p)$ is automatically constant. So from now on we assume that we have a Markov controlled random coin as described in the introduction:

\begin{ass}\label{ass:rcoin} For every $\gamma$, the operation $\Va_\gamma$ is unitarily implemented.
That is, for each $\gamma$ there is only one Kraus operator $K_\gamma=K_{\gamma1}$, which is a unitary walk operator, so that $\Va_\gamma(A)=K_\gamma^*AK_\gamma$.
\end{ass}

An example violating this assumption is given in section \ref{subs:NonUnitKO}.
The following proposition says that this indeed implies the required constancy of the ballistic velocity.

\begin{prop}\label{prob:velirred} \assref{chaos} and \assref{rcoin} together imply that the function $v(p)$, defined in \eqref{vdecoh}, is independent of $p$. Moreover,
\begin{eqnarray}\label{oneinv}
   \invrho&=&\frac1{\dim\KK}\ \idty  \quad\mbox{and}\\
         v&=&\int \ma(d\gamma)\, \frac{\ind K_\gamma}{\dim\KK}\ .
\label{meanindex}
\end{eqnarray}
\end{prop}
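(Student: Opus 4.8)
The plan is to exploit the invariance condition $\tr\invrho\,\Va_\gamma(A)=\tr\invrho\,A$ together with \assref{rcoin}, which says each $\Va_\gamma(A)=K_\gamma^*AK_\gamma$ with $K_\gamma$ a \emph{unitary} walk operator $K_\gamma(p)$. For a unitary $K_\gamma(p)$ the invariance reads $\tr\invrho\,K_\gamma(p)^*AK_\gamma(p)=\tr\invrho\,A$ for all $A$, i.e. $K_\gamma(p)\invrho K_\gamma(p)^*=\invrho$, so $\invrho$ commutes with every $K_\gamma(p)$ for almost every $p$. Now I invoke the irreducibility content of \assref{chaos}: under \assref{rcoin} the Kraus operators of $\Wa$ are exactly the $K_\gamma(p)$ (one per $\gamma$), and the proof of Proposition~\ref{prob:simpev} shows that simplicity of the eigenvalue $1$ forces the only operators commuting with all of them (and all their products) to be multiples of the identity. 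Hence $\invrho$ is a multiple of $\idty$, and since it is a density operator, $\invrho=\idty/\dim\KK$, which is \eqref{oneinv}.

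Next I substitute $\invrho=\idty/\dim\KK$ into the formula \eqref{vdecoh} for $v(p)$. With a single unitary Kraus operator per $\gamma$ this becomes
\begin{equation}
v(p)=-\frac{i}{\dim\KK}\int\invm(d\gamma)\,\tr\bigl(K_\gamma(p)^*\nabla K_\gamma(p)\bigr).
\end{equation}
The key computation is to identify $-i\,\tr\bigl(K_\gamma(p)^*\nabla K_\gamma(p)\bigr)$ with $\nabla\log\det K_\gamma(p)$, using the Jacobi formula $\partial_{p_\alpha}\log\det K = \tr(K^{-1}\partial_{p_\alpha}K)$ together with $K^{-1}=K^*$ for unitaries. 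By the index definition \eqref{index}, $\det K_\gamma(p)=\det K_\gamma(0)\,e^{i\,\ind K_\gamma\cdot p}$, so $\log\det K_\gamma(p)$ is affine in $p$ with gradient $i\,\ind K_\gamma$ (a constant vector, independent of $p$). Therefore $-i\,\tr\bigl(K_\gamma(p)^*\nabla K_\gamma(p)\bigr)=\ind K_\gamma$, and integrating against $\invm$ gives $v=\int\invm(d\gamma)\,\ind K_\gamma/\dim\KK$, which is \eqref{meanindex}; in particular it is manifestly $p$-independent, giving the first claim of the proposition.

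There is one subtlety worth flagging in the write-up: the spectral decomposition \eqref{specWp} of $K_\gamma(p)$ need not admit a globally analytic choice of eigenvalues/eigenprojections, so writing $\det K_\gamma(p)$ as a product $\prod_k e^{i\omega_k(p)}$ with smooth $\omega_k$ is only local. However, $\det K_\gamma(p)$ itself is a perfectly smooth (indeed trigonometric-polynomial) scalar function with no zeros, so $\tr(K_\gamma(p)^*\nabla K_\gamma(p))=\nabla\log\det K_\gamma(p)$ holds directly from the Jacobi formula without ever passing to individual eigenvalues — this is the clean way to avoid the degeneracy issue. The only real obstacle is the first paragraph: one must make sure the irreducibility hypothesis from \assref{chaos} (as made concrete in condition 3 of Proposition~\ref{prob:simpev}) is genuinely available here, i.e. that ``$\invrho$ commutes with all $K_\gamma(p)$'' upgrades to ``$\invrho$ commutes with all products $K_{\gamma_1}(p)\cdots K_{\gamma_r}(p)$'' — which is immediate since commuting with each factor implies commuting with the product — and hence that $\invrho\propto\idty$. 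Once that reduction is in place, everything else is the short determinant calculation above.
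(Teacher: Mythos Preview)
Your determinant computation is exactly the paper's argument: the paper differentiates $\det K_\gamma(p+\veps\lambda)$, rewrites $(\det K_\gamma)^{-1}\frac{d}{d\veps}\det K_\gamma$ as $\frac{d}{d\veps}\det\bigl(K_\gamma(p)^*K_\gamma(p+\veps\lambda)\bigr)$ and applies the Jacobi formula at $\veps=0$ to obtain $i\lambda\cdot\ind K_\gamma=\tr\bigl(K_\gamma(p)^*K_\gamma'(p)\bigr)$. This is your $\nabla\log\det$ identity in directional form, and the rest (dividing by $\dim\KK$ and averaging over $\invm$) is identical.

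For \eqref{oneinv} the paper actually does \emph{less} than you: it simply notes that $\idty/\dim\KK$ is obviously invariant for each unitary conjugation $\Va_\gamma$, and uses this choice in \eqref{vdecoh}. Your uniqueness argument is a welcome addition, but the justification is backwards. Proposition~\ref{prob:simpev} shows that irreducibility of the Kraus products (its condition 3) \emph{implies} simplicity of the eigenvalue $1$, not the converse; and irreducibility is \emph{not} part of \assref{chaos}, so you cannot ``invoke the irreducibility content of \assref{chaos}''. The clean fix bypasses Proposition~\ref{prob:simpev} entirely and uses \assref{chaos} directly: since $\invrho$ commutes with every unitary $K_\gamma(p)$, the constant function $\gamma\mapsto\invrho$ on $\Gamma$ satisfies
\[
\Wa(\invrho)(\gamma)=\int m_\gamma(d\eta)\,K_\gamma(p)^*\,\invrho\,K_\gamma(p)=\invrho,
\]
so it is a fixed point of $\Wa$; simplicity of the eigenvalue $1$ then forces $\invrho\propto\idty$.
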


\begin{proof} Obviously, the density operator $\invrho\propto\idty$ is an invariant state for each $\Va_\gamma$. By definition of the index, $\det K_\gamma(p)=c\exp(ip\cdot\ind K_\gamma)$. Hence
\begin{eqnarray}
    i\lambda\cdot\ind K_\gamma
      &=&\frac1{\det K_\gamma(p)}\left.\frac d{d\veps}\det K_\gamma(p+\veps\lambda)\right|_{\veps=0}
       =\left.\frac d{d\veps}\det\bigl(K_\gamma(p)^*K_\gamma(p+\veps\lambda)\bigr)\right|_{\veps=0}
         \nonumber\\
      &=&\tr\bigl(K_\gamma(p)^*\frac d{d\veps}K_\gamma(p+\veps\lambda)\bigr)
       =\dim\KK \tr\invrho\Vt'_\gamma(\idty),   \nonumber
\end{eqnarray}
and the formula follows by dividing this equation by $\dim\KK$ and summing with respect to $\invm$.
\end{proof}
Note that for unitary quantum walks $W$ the index $\ind W $ is given by the trace of the group velocity operator $V$, see Sect. \ref{sec:unitary}. Therefore, if we naively apply \eqref{meanindex} to the unitary case, although Assumption \ref{ass:chaos} is violated, we find $v(p)=\tr V(p)$. Of course, the asymptotic distribution of a unitary quantum walk $W$ is not determined by $v(p)$ but by the whole operator $V(p)$.

When the ballistic order is completely defined by a deterministic velocity, i.e. $Q(t)/t\to v\idty$, then we can subtract the ballistic motion and look at how the probability distribution develops around it. In other words, we look at the deviation operator
\begin{equation}\label{deviate}
    D(t)=\frac1{\sqrt t}\bigl(Q(t)-vt\bigr).
\end{equation}
We will compute the limit of the characteristic function of $D$, i.e.
\begin{eqnarray}\label{chardeviate}
 \lim_{t\to\infty}C_{D(t)}(\lambda)
      &=& \lim_{t\to\infty} e^{-i\lamv\,\sqrt t}C_{Q(t)}\left(\frac\lambda{\sqrt t}\right)
           \nonumber\\
      &=&\lim_{t\to\infty}e^{-i\lamv\,\sqrt t}\int\! dp \int \invm(d\gamma)\tr\left( \rho(p)
           \Wt^t_{1/\sqrt t}(\idty)(p,\gamma)\right)
           \nonumber\\
      &=&\int\!\!dp \tr\bigr(\rho(p)\bigl)\lim_{t\to\infty}e^{-i\lamv/\sqrt t}
           \Bigl(1+i\frac{\lamv}{\sqrt t}+\frac{\mu''}{2t}+\Order(t^{-3/2})\Bigr)^t
           \nonumber\\
      &=& \int\!\!dp \tr\bigr(\rho(p)\bigl) \exp\left(\frac{1}2(\mu''+(\lamv)^2)\right)
\end{eqnarray}
Here, at the first equality we substituted \eqref{deviate}, at the second introduced $\Wt$ from
\eqref{Wtwiddle} in the form \eqref{Wtgamma}, and at the third introduced the perturbation expansion
\eqref{eveps} with $\veps=1/\sqrt t$. At the last equality we used again the asymptotic formula $\lim_{t\to\infty}(1+x/t+\ordert1)^t=\exp(x)$, which follows immediately from the Taylor expansion of the logarithm. In the application above there is a cancelation of large phases, so the formula must be applied with care, in the form
\begin{eqnarray}\label{LemmaExp}
    \lim_{t\to\infty}&&e^{-ia\sqrt t}\left(1+\frac{ia}{\sqrt t}+ \frac bt+ \Order(t^{-3/2})\right)^t
      =\lim_{t\to\infty}\left(e^{-ia/\sqrt t}\bigl(1+\frac{ia}{\sqrt t}+ \frac bt+\Order(t^{-3/2})\bigr)\right)^t \nonumber\\
      &&=\lim_{t\to\infty}\left(1+\frac bt+\frac{a^2}{2t}+\order(t^{-1})\right)^t
        = \exp\left(b+\frac{a^2}2\right).
\end{eqnarray}

Note that, in contrast to $\mu'=iv\cdot\lambda$, the second order perturbation coefficient $\mu''$ will depend on $p$. Moreover, since the perturbation is proportional to $\veps\lambda$, the second order perturbation coefficient $\mu''$ must be homogeneous quadratic in $\lambda$. Hence, according to \eqref{chardeviate}, the limiting distribution of $D(t)$ is a mixture of Gaussians with $p$-dependent covariance matrix $s(p)$ such that
\begin{equation}\label{st}
    \lambda\cdot s(p)\cdot\lambda=\sum_{jk}s_{jk}(p)\lambda_j\lambda_k=-\mu''(p)-(\lamv)^2.
\end{equation}
For the moment, we fix $\lambda$, and concentrate on computing the expression \eqref{st} for any set of given data and verifying its positivity.

Evaluating the eigenvalue equation $\Wt_\veps A_\veps=\mu_\veps A_\veps$ to first and second order in $\veps$, we now get
\begin{eqnarray}\label{Ev12nondeg}
    \Wa(A')(\gamma)-A'(\gamma)
       &=& i\lamv\idty - \Vt_\gamma'(\idty)
              \label{Ev1nondega}\\
    \Wa(A'')(\gamma)-A''(\gamma)
       &=&\mu''\idty +2\mu'A'(\gamma)-\int \ma_\gamma(d\eta)
            \bigl(\Vt_\gamma''(\idty)+2\Vt'_\gamma(A'(\eta))\bigr)
       \label{Ev2nondeg}
\end{eqnarray}
The first line is just a repetition of \eqref{Ev1nondeg}. From the second line we extract $\mu''$ by taking the expectation with respect to the invariant state, using also that by the convention \eqref{normalizeA} the second term on the right has zero expectation:
\begin{equation}\label{mupp}
    \mu''=\int \invm(d\gamma)\ \tr\bigl(\invrho\Vt_\gamma''(\idty)\bigr)
          +2\iint \invm(d\gamma)\ma_\gamma(d\eta)\
             \tr\bigl(\invrho\Vt'_\gamma(A'(\eta))\bigr)
\end{equation}
In this equation, the first order perturbation $A'$ of the eigenvector must be extracted from \eqref{Ev1nondega}. Indeed, this is uniquely possible: By \assref{chaos} the simple eigenvalue $1$ of $\Wa$ is isolated, so the rank of $\Wa-\id$ is exactly one less than maximal. The kernel is explicitly known: $\Wa-\id$ annihilates precisely the multiples of the identity and maps onto the elements with vanishing expectation under $\invm\otimes\invrho$. By definition of $v$, the right hand side hence lies in the range of $\Wa-\id$, so there is a unique solution $A'$ satisfying the normalization condition \eqref{normalizeA}. Note that since the solution is unique, and $\Vt_\gamma'$ is skew hermitian, so is $A'$. Moreover, since the right hand side of \eqref{Ev1nondega} is linear in $\lambda$, so is $A'$. We stress that Equations \eqref{Ev12nondeg}-\eqref{mupp} are also valid if Assumption \ref{ass:rcoin} is violated and for ballistic scaling $\veps=1/t$ with non-constant $v(p)$ they can also be used for a finer analysis of the large time behavior, incorporating second order perturbation theory. From now on we will adopt \assref{rcoin}, so that $\Vt_\gamma'(\idty)=K_\gamma^*K_\gamma'$. Since each $K_\gamma$ is unitary, we can use this to express the derivatives of $K_\gamma$ by $A'$:
\begin{equation}\label{elimKg}
    K_\gamma'=K_\gamma(A'(\gamma)-W(A')(\gamma)+i\lamv\idty)
\end{equation}

We come back to the determination of $\mu''$ from \eqref{mupp}. We can eliminate the second derivative in the first term by differentiating \eqref{vdecoh} with respect to $p$, or more precisely, by setting $p=p+\veps\lambda$ and differentiating with respect to $\veps$. This gives
\begin{eqnarray}
0&=&-i\int \invm(d\gamma)\,\frac d{d\veps}
             \tr\bigl(\invrho K_\gamma^*(p+\veps\lambda)K_\gamma'(p+\veps\lambda)\bigr) \nonumber\\
    &=&-i\int \invm(d\gamma)\,\tr\bigl(\invrho K_\gamma^{*\prime}K_\gamma'\bigr)+
    -i\int \invm(d\gamma)\,\tr\bigl(\invrho\Vt_\gamma''(\idty)\bigr)
\end{eqnarray}
and hence, using \eqref{elimKg} and the unitarity of $K_\gamma$:
\begin{equation}\label{Mvtpp}
    \int \invm(d\gamma)\ \tr\bigl(\invrho\Vt_\gamma''(\idty)\bigr)
    =-\int \invm(d\gamma)\,\tr\bigl(\invrho \left|A'(\gamma)-W(A')(\gamma)+i\lamv\idty\right|^2\bigr),
\end{equation}
where, for an operator $X$, we use the abbreviation $|X|^2=X^*X$. The second term in \eqref{mupp} can also be simplified by eliminating the derivative in $\Vt_\gamma'(X)=K_\gamma^*XK_\gamma'$ via \eqref{elimKg}. We get
$$  \int \ma_\gamma(d\eta)\ K_\gamma^*A'(\eta)K_\gamma'
      =W(A')(\gamma)\bigl(A'(\gamma)-W(A')(\gamma)+i\lamv\idty\bigr).$$
Note that in \eqref{mupp} we need the expectation of this expression in the invariant state $\invm\otimes\invrho$, just as \eqref{Mvtpp} is such an expectation. Bringing together the various terms of \eqref{mupp}, and using the skew hermiticity of $A'$, $W(A)'$, and $i\lamv$, we find for \eqref{st}:
\begin{eqnarray}\label{st1}
  -\mu''-(\lamv)^2&=&
     =\invm\otimes\invrho\Bigl(|A'-W(A')+i\lamv\idty|^2-2 W(A')(A'-W(A')+i\lamv\idty)\Bigr)
        -(\lamv)^2  \nonumber\\
     &=&\invm\otimes\invrho\Bigl( W(A')^2-(A')^2+[A',W(A')]-2i\lamv A'\Bigr)  \nonumber\\
     &=&\invm\otimes\invrho\Bigl(\abs{A'}^2-\abs{W(A')}^2\Bigr).
\end{eqnarray}
Since $\invm\otimes\invrho$ is invariant under $W$, we can also write the first term as the expectation of
$W(|A'|^2)$, so that this expression is non-negative by virtue of the Cauchy-Schwarz inequality for channels \cite{Paulsen}. With $A'$ a linear function of $\lambda$, the above expression becomes a quadratic form in $\lambda$, as claimed in \eqref{st}.

Intuitively, one would expect that the diffusion constant becomes very large if the considered quantum walk differs only little from a coherent quantum walk with ballistic behavior. For example, if the coin operations of a decoherent quantum walk would be all very similar or if the Markov process is such that it prefers one of the coins, we would expect reminiscences of a coherent quantum walk even for large times. This effect can be seen in the examples in Sect. \ref{ex:misslinks} and \ref{ex:1DQWEqDist}, where we derive an explicit formula for the variance $s(p)$, which diverges in the coherent limit of the the considered quantum walks. Similarly, if the Markov process converges to a deterministic Markov chain, i.e. the transition matrix approaches a permutation matrix, the diffusive order will diverge. This is because we could consider a full cycle of the permutation as one step of another quantum walk, which is also unitary, leading to ballistic transport. If on the other hand the transition rates of the Markov process are very small, i.e. there are long subsequences of coherent evolution, the overall evolution up to time step $t$ would be very similar to the average of a number of coherent quantum walks each evolved for $t$ time steps. If some of these coherent quantum walks show ballistic behavior we would again expect that the diffusion constant diverges.

\subsection{Higher orders without ballistic determinism}
\label{sec:HighOrdWOBall}
The general remarks about higher order expansions in \secref{higherUnitary} apply also in this case. In this section we focus on just the first (i.e., $\Order(1/t)$) correction to the ballistic scaling. When the ballistic velocity is independent of $p$, this is, in fact, best expressed by the diffusive scaling. However, almost all of the second order perturbation theory developed in the previous sections is independent of that \assref{rcoin}. Therefore, we have already done most of the work needed to get the first correction to ballistic scaling in the general case.

We adopt \assref{nomom} and \assref{chaos}, but not necessarily \assref{rcoin}. Let us neglect for a moment the Markov control, and consider the Jordan decomposition \eqref{jordan} for small $\veps$. The $t^{\rm th}$ power of this operator is built from the terms
\[\Bigl(\mu_i(\veps)\Pb_i(\veps)+\Db_i(\veps)\Bigr)^t
    =\Pb_i(\veps)\sum_{r=0}^{r_i}{t\choose r}\mu_i^{t-r}\Db_i(\veps)^r\]
Here $r_i$ is the order of nilpotency of $\Db_i(\veps)$, which is bounded by the algebraic multiplicity of $\mu_i(\veps)$. When we choose the label $i=0$ for the key eigenvalue (i.e., $\mu_0(0)=1$) we have $\abs{\mu_i}<1$, for all $i\neq0$ by \assref{chaos}. Hence expressions such as  $t^n\mu_i^{t-r}$ go to zero faster than any power of $t^{-n}$ ($n\in\Nl$). Consequently, in any expansion in such powers, all terms but the one with $i=0$ can be neglected. Moreover,  $\mu_0$ is simple, so $r_0=0$. Therefore, only the term $\Pb_0(\veps)\mu_0(\veps)^t$ remains. Since we were only interested in the leading order so far, it was enough to replace $\Pb_0(\veps)$ by $\Pb_0(0)$. However, for a systematic expansion we also have to expand the rank one projection $\Pb_0(\veps)$ in powers of $\veps$. This will give $\Pb_0(\veps)(X)=A(\veps)\tr(\invrho B(\veps)X)$, where $A(\veps)$ is the eigenvector of $\Wt_\veps(p)$ and $B(\veps)$ the corresponding eigenvector of the adjoint. Of course, eigenvectors are only defined up to a (possibly $\veps$-dependent) factor. We have already used the convention that $\tr\invrho A(\veps)=1$. This fixes the factor also for $B(\veps)$, since we must have $\Pb_0(\veps)^2=\Pb_0$, and hence $\tr\invrho A(\veps)B(\veps)=1$. Hence, expanding $B$ as in \eqref{eveps} we get $\tr\invrho A'=\tr\invrho B'=0$, so that for an expansion to order $t^{-1}$ we do not need $B'$. These ideas remain valid with Markov control, with the invariant state $\invm\otimes\invrho$ taking the role of $\invrho$. Then
\begin{equation}\label{p1first}
    \Pb_0(\idty)(\gamma)=\idty+\frac1t A'(\gamma) +\ordert1,
\end{equation}
where $A'$ is determined exactly as above. For the eigenvalue $\mu_0(\veps)^t=(1+\frac{\mu'}t+\frac{\mu''}{2t^2})^t$, we first expand the logarithm, giving $\exp(\mu'+\frac1{2t}(\mu''-(\mu')^2)+\ordert1$. With the $p$-dependent covariance matrix $s$ (see \eqref{st}) we can thus write
\begin{eqnarray}\label{mutfirst}
    \mu_0(\veps)^t&=&e^{\textstyle i\lambda\cdot v(p)-\frac1{2t}\lambda\cdot s(p)\cdot\lambda}+\ordert1\\
       &=&e^{\textstyle i\lambda\cdot v(p)}\Bigl(1-\frac1{2t}\lambda\cdot s(p)\cdot\lambda\Bigr)+\ordert1.\nonumber
\end{eqnarray}
From the point of view of a series expansion in inverse powers of $t$ these two forms are equivalent. However, the first form is preferable, because it is the characteristic function of a Gaussian distribution and hence has a probabilistic interpretation. This has to be taken with a grain of salt, however: The proof of positivity of $s(p)$ given in the previous section does depend on \assref{rcoin}. Indeed, we will see in an example (Sect.~\ref{subs:NonUnitKO}) that $s(p)$ may be complex, and only represents a probability after integration of the whole expression with respect to $p$.

Finally, we have to expand the second factor in \eqref{Wpowers}. This correction contains some information about the initial position distribution, although contracted by a factor $1/t$. When $\rho(p_1,p_2)$ denotes the integral kernel of the initial density, the effect of the factor $\exp(i\lambda\cdot Q/t)$ under the trace is to shift the first argument of $\rho$. We therefore introduce the function
\begin{equation}\label{initialWigner}
    C_0(\lambda,p)=\tr\rho(p+\lambda,p),
\end{equation}
where the trace is over the internal degrees of freedom. This notation is to suggest that this is some kind of characteristic function of the initial distribution. This is not literally true, since we keep the momentum variable, so at best it is a phase space distribution function. Indeed, a slightly more symmetric version
$\tr\rho(p+\lambda/2,p-\lambda/2)$ is the Fourier transform of the ``position distribution at fixed momentum'' according to the Wigner distribution function. Of course, this distribution is contracted by a factor $1/t$ in ballistic scaling. Together, we get
\begin{equation}\label{MarkovNextorder}
   C_t(\lambda)=\int\!\!dp\ e^{\textstyle i\lambda\cdot v(p)-\frac1{2t}\lambda\cdot s(p)\cdot\lambda}
                \Bigl(C_0(\lambda/t,p) + \frac1t\tr\rho(p,p)\int\invm(d\gamma)\ A'(\gamma)\Bigr)
                +\ordert1.
\end{equation}
Heuristically, the interpretation of the first term is the distribution of the sum of two independent quantities for each $p$: a Gaussian centered at $v(p)$, with decreasing variance, plus a scaled down version of the initial position distribution. This is then averaged over momentum. The second term does not allow such a simple interpretation. Two numerical examples are shown for quantum walks in one lattice dimension in Sect. \ref{ex:misslinks} and \ref{subs:NonUnitKO} below.

\subsection{Memoryless decoherence}\label{sec:Bernoulli}
In many applications there is no memory in the control process: the walks $\Va_\gamma$ are chosen independently in each step. The Markov process is then a Bernoulli process, and the probability  $ \ma_\gamma(\eta)$ to end up in $\eta$ is the same from any state $\gamma$. Obviously, this probability is then also the invariant distribution, i.e., we have
\begin{equation}\label{bernoulli}
    \ma_\gamma(\eta)=\invm(\eta)
\end{equation}
for all $\gamma,\eta$.
We can look at the resulting process in two ways: On the one hand, we could just specialize using \eqref{bernoulli}, which is the approach we will take below. On the other hand we could completely discard the control process, so there is only one value $\gamma=0$, say, and the corresponding $\Va_0$ operation is $\Va_0=\int \invm(d\gamma)\Va_\gamma$. This would violate Assumption 3 but, of course, this case can nevertheless be fully analyzed. In fact, it simplifies the computation of the operators $A^\prime$ and the diffusive order $\mu^{\prime\prime}$ considerably. We can reduce the dimension of the system of linear equations that determines $\mu^{\prime\prime}$ by a factor $n$, where $n$ is the number of Kraus operators from which the control process can choose from.

Substituting \eqref{bernoulli} into equation \eqref{mupp} for $\mu''$ we get
\begin{align}
\label{muppBernoulli}
   \mu''=\int \invm(d\gamma)\ \tr\bigl(\invrho\Vt_\gamma''(\idty)\bigr)
          +2\int \invm(d\gamma)\
             \tr\bigl(\invrho\Vt'_\gamma\bigl(\int \invm(d\eta) A'(\eta)\bigr)\bigr) \; .
\end{align}
So in order to determine  $\mu''$  we do not need to know the individual $A'(\eta)$, but it suffices to know the value of the average
\begin{equation}
  X= \overline{ A'}=\int \invm(d\eta) A'(\eta).
\end{equation}
Our aim is to set up an equation directly for this unknown matrix $X$. From \eqref{WVMlong} we find
$(\Wa A')(\gamma)=\Va_\gamma(X)$, which turns \eqref{Ev1nondega} into a definition of $A'(\gamma)$ in terms of $X$
\begin{equation}\label{Aprimebernoulli}
    A'(\gamma)=\Va_\gamma(X)-i\lamv\idty + \Vt_\gamma'(\idty)
\end{equation}
The equation for $X$ now follows by averaging:
\begin{equation}\label{eqBernoulli}
    \int\invm(d\gamma)\Va_\gamma(X)-X=i\lamv\idty -\int \invm(d\gamma) \Vt_\gamma'(\idty).
\end{equation}
Eliminating the control process from \eqref{eqBernoulli} and \eqref{muppBernoulli} via the definition of $\Va_0$ and $X$ we get the equations
\begin{equation}
\label{XWOcontrol}
\Va_0\bigl(X\bigr)-X=i\lambda\cdot v \idty -\Va_0'(\idty)
\end{equation}
and
\begin{equation}
\label{WOcontrol}
\mu'' = \tr\bigl(\invrho\Vt_0''(\idty)\bigr)
          +2\ \tr \bigl( \invrho \Vt'_0(X\bigr)\bigr)\,.
\end{equation}

\section{Examples}
\label{sec:pictures}
\subsection{The Hadamard walk with reflections}
\label{ex:misslinks}
The aim of this section is to derive the asymptotic distribution of $Q(t)/\sqrt{t}$ of a quantum walk where the control process is of Bernoulli type and in each time step it picks one of two unitary quantum walks. We will highlight the simplification of the formulas \eqref{Ev12nondeg} and \eqref{Ev2nondeg}, accounting for general Markov processes, for the case of Bernoulli type decoherence by also calculating the diffusive order for general Markov processes, which then doubles the dimension of the computation. The unitary quantum walks, the control process chooses from, admit a decomposition into shift and coin operation. The shift operation is assumed to be the same for both walks, hence, the control process chooses only the coin operation for each time step. With probability $1-\varepsilon$ it will pick the Hadamard coin for each time step, which, on its own, would lead to ballistic behavior. In the remaining case the Pauli matrix $\sigma_1$ will be applied, this unitary quantum walk hinders the particle from moving at all, the particle will be reflected at each site. The Kraus operators of the quantum walk read
\begin{equation}\label{HadReflDef}
K_1(p) = \sqrt{\frac{1-\varepsilon}{2}}
\left(
\begin{array}{cc}
e^{ip} & e^{-ip} \\
e^{ip} & -e^{-ip}
\end{array}
\right)
\quad , \quad
K_2(p) = \sqrt{\varepsilon}
\left(
\begin{array}{cc}
0 & e^{-ip} \\
e^{ip} & 0
\end{array}
\right)\,.
\end{equation}
First, we note that the velocity $v(p)$ is zero, which follows from Eq.~\eqref{vdecoh} and $\Vt ' (\idty)=K_1^*K_1' + K_2^*K_2'=i\lambda\sigma_3$. Alternatively, one could also rephrase this Bernoulli type walk as a Markov controlled walk and then use formula \eqref{meanindex} to get $v(p)=0$. This means, in ballistic scaling the position distribution converges to a point measure at the origin.

In order to determine the diffusive scaling we make directly use of Eq.~\eqref{WOcontrol}. It is easily seen that
\[
\Vt'' (\idty )=K_1^*K_1'' + K_2^*K_2'' = -  \lambda^2 \idty
\]
and hence by \eqref{muppBernoulli} and $K_1'K_1^*=(1-\varepsilon) \sigma_3$ together with $K_2'K_2^*=\varepsilon\, \sigma_3$
\[
s(p)\lambda^2 = -\mu'' =\lambda^2 - \sum_{\alpha=1,2}\tr \left( K_\alpha^* A'K_\alpha '\right)=\lambda^2 -i\lambda \left ( (1-\varepsilon )\tr \left( H \sigma_3 H^* A'\right) +\varepsilon \tr \left( \sigma_1\sigma_3\sigma_1^* A'\right) \right)
\]
with the Pauli matrix $\sigma_1$ and the usual Hadamard matrix $H$. Using the rules $H^*\sigma_3 H=\sigma_1$ and $\sigma_1\sigma_3\sigma_1=-\sigma_3$ we get
\[
s(p)= \lambda^2 -i2\lambda \left( r(1-\varepsilon )a_1 -\varepsilon a_3\right)
\]
where we abbreviated $a_i=2^{-1}\tr \sigma_i A'$, which implies $A'=\sum_{i_0}^3a_i\sigma_i$. Now, $A'$ is determined by \eqref{XWOcontrol}, which reads
\[
\sum_{i=0}^3 a_i \left( K_1 ^* \sigma_i K_1 + K_2^* \sigma_i K_2 -\sigma_i\right) =-i\lambda \sigma_3 \,.
\]
By calculating the Hilbert-Schmidt scalar product of this equation with the Pauli matrices $\sigma_i$ we get the following system of equations
\[
\left(
\begin{array}{ccc}
\varepsilon \cos (2p)-1& \sin (2p) &(1-\varepsilon )\cos (2p)\\
\varepsilon \sin (2p) & -\cos (2p)-1 & (1-\varepsilon )\sin (2p)\\
1-\varepsilon & 0 & -1-\varepsilon
\end{array}
\right)
\cdot
\left(
\begin{array}{c}
a_1\\
a_2\\
a_3
\end{array}
\right)
=
\left(
\begin{array}{c}
0\\
0\\
-i\lambda
\end{array}
\right) \,,
\]
where we have chosen $a_0=0$ in order to satisfy $\invm \otimes \invrho (A')=0$. The solution to these equations is $a_1=a_3=i\lambda /2\varepsilon$ and $a_2=i\lambda \tan (p)/2\varepsilon$. This yields the following expression for the variance
\[
s_B(p)=\frac{1-\varepsilon}{\varepsilon}\,.
\]
By inverse Fourier transform of the characteristic function $\lim\limits_{t\to\infty}C_{D(t)}(\lambda)=e^{-\lambda^2(1-\varepsilon)/(2\varepsilon)}$ we obtain the asymptotic position distribution in diffusive scaling
\[
\mathrm{P}(x)=\sqrt{\frac{\varepsilon}{2\pi(1-\varepsilon)}}e^{-x^2\frac{\varepsilon}{2(1-\varepsilon)}}\,.
\]
The $1/t$ correction to the asymptotic position distribution in ballistic scaling can be inferred from \eqref{MarkovNextorder}, we omit the computation and just give the resulting probability distribution (cf. Fig. \ref{Plot:HadReflNextOrd}), which reads
\begin{equation}\label{Eq:HadReflNextOrd}
P_t(x,\veps) = \sqrt{\frac{2\pi t \veps}{1-\veps}}\Bigl(1+\frac{x}{2(1-\veps )}\Bigr)\exp \Bigl(-\frac{t x^2\veps}{2(1-\veps)} \Bigr)\,.
\end{equation}
\begin{figure}[htb]
  \centering
 \includegraphics[width=0.5\textwidth]{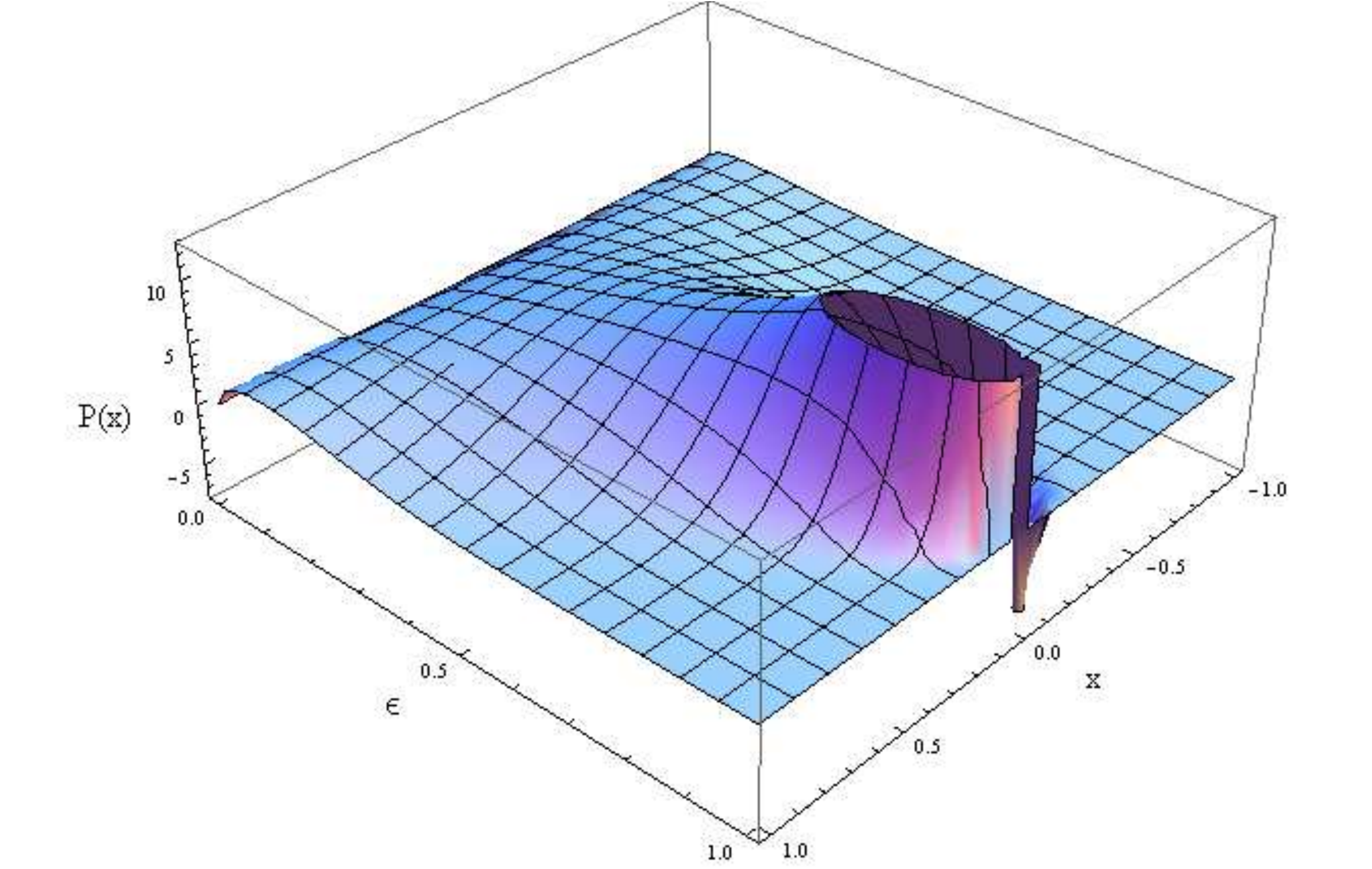}
    \caption{(color online) For the quantum walk according to \eqref{HadReflDef} the plot shows the $1/t$ correction to the asymptotic position distribution in ballistic scaling after $t=10$ depending on the parameter $\veps$.}\label{Plot:HadReflNextOrd}
\end{figure}

For a general Markov process $\Ma$, which chooses from the two unitary quantum walks, the time evolution reads
\[
\Wa (A) (\gamma ) = \sum_{\eta=1,2} \ma_\gamma (\eta ) K_\gamma^* A(\eta) K_\gamma\, , \quad\gamma =1,2
\]
now with Kraus operators and transition matrix
\[
K_1(p) = \frac{1}{\sqrt{2}}
\left(
\begin{array}{cc}
e^{ip} & e^{-ip} \\
e^{ip} & -e^{-ip}
\end{array}
\right)
\quad , \quad
K_2(p) =
\left(
\begin{array}{cc}
0 & e^{-ip} \\
e^{ip} & 0
\end{array}
\right)
\quad , \quad
\Ma =
\left(
\begin{array}{cc}
m_1 & 1-m_1 \\
1-m_2 & m_2
\end{array}
\right)\, .
\]
The Bernoulli control process corresponds to the choice $m_1=1-m_2=1-\varepsilon$. Since $K_1(p)$ and $K_2(p)$ are irreducible for almost all $p$ Proposition \ref{prob:velirred} applies. Observing that $\ind K_\gamma =0 $ we get $v(p)=0$ and hence $Q(t)/t$ converges to a point measure at zero. Before calculating the diffusive order of this quantum walk we will prove a useful lemma.
 \begin{lem}
 Let $ K_\gamma\, ,\gamma\in \Gamma$ be a finite collection of unitary walk operators on $\Ir$ with two dimensional coin. Assume the $K_\gamma$ admit a shift coin decomposition $ K_\gamma = U_\gamma \cdot  S$ with
 \[
  S(p)=\left(
 \begin{array}{cc}
 e^{ip} & 0\\
 0 & e^{-ip}
 \end{array}
 \right)\,
 \]
and $p$ independent unitaries $U_\gamma$ such that Assumption \ref{ass:chaos} is satisfied. Let the overall time evolution $\Wa$ be given by a Markov process $\Ma$ that chooses from the $ K_\gamma$, i.e.
 \[
 \Wa(A)(\gamma )=\sum_{\eta \in \Gamma} \ma_\gamma (\eta)  K_{\gamma }^* A(\eta)  K_{\gamma } \,.
 \]
 Then $v(p)=0$ and the diffusion constant is determined by
 \[
 \lambda^2\cdot s(p)=-\lambda^2-2\lambda i\sum_{\gamma\in\Gamma} \invm_\gamma a_3(\gamma)\, ,
 \]
 where $a_3(\gamma) = 2^{-1}\tr (\sigma_3 A'(\gamma)$ with $A'$ according to \eqref{Ev12nondeg} and the Pauli matrix $\sigma_3$.
 \end{lem}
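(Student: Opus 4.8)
The plan is to derive both assertions from the general formulas \eqref{vdecoh} for $v(p)$ and \eqref{mupp} for $\mu''$, exploiting one structural fact throughout: for a shift--coin operator $K_\gamma=U_\gamma S$ with $p$-independent $U_\gamma$ one has, by diagonality of $S$,
$K_\gamma^*\partial_p K_\gamma=S^*\partial_p S=i\sigma_3$, $\;\partial_p S\,S^*=i\sigma_3$, and $\;K_\gamma^*\partial_p^2 K_\gamma=S^*\partial_p^2 S=-\idty$, all \emph{independent of $\gamma$}. These identities will be invoked repeatedly.

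First I would dispose of $v(p)=0$. Since $\det K_\gamma(p)=\det U_\gamma$ is $p$-independent, $\ind K_\gamma=0$; as each $\Va_\gamma$ is unitarily implemented, \assref{rcoin} holds, and together with \assref{chaos} Proposition~\ref{prob:velirred} applies, giving $\invrho=\tfrac12\idty$ and $v=\int\ma(d\gamma)\,\ind K_\gamma/\dim\KK=0$. Equivalently, $v(p)=0$ can be read directly off \eqref{vdecoh} using $\tr(i\sigma_3)=0$. Because $v=0$, the covariance formula \eqref{st} collapses to $\lambda^2 s(p)=-\mu''$, so the remaining task is to evaluate $\mu''$ via \eqref{mupp}.

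For the first term of \eqref{mupp}, the identity $\Vt_\gamma''(\idty)=\lambda^2 K_\gamma^*\partial_p^2 K_\gamma=-\lambda^2\idty$ gives $\int\invm(d\gamma)\,\tr(\invrho\Vt_\gamma''(\idty))=-\lambda^2$. The heart of the argument is the second term. One uses that $\Vt'_\gamma(\idty)=\lambda K_\gamma^*\partial_p K_\gamma=i\lambda\sigma_3$ is \emph{the same} for every $\gamma$, so the first-order eigenvalue equation \eqref{Ev1nondega} reads $\Wa(A')(\gamma)-A'(\gamma)=-i\lambda\sigma_3$, i.e.\ $\sum_\eta\ma_\gamma(\eta)A'(\eta)=K_\gamma\bigl(A'(\gamma)-i\lambda\sigma_3\bigr)K_\gamma^*$. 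On the other side, $\Vt'_\gamma(X)=\lambda K_\gamma^*X\,U_\gamma\partial_p S$, and cyclicity of the trace together with $\partial_p S\,S^*=i\sigma_3$ yields $\tr(\invrho\Vt'_\gamma(X))=\tfrac{i\lambda}{2}\tr(U_\gamma\sigma_3 U_\gamma^*X)$. Substituting $X=\sum_\eta\ma_\gamma(\eta)A'(\eta)$ and using $U_\gamma^*K_\gamma=S$ and $K_\gamma^*=S^*U_\gamma^*$, the conjugating unitaries $U_\gamma$ cancel and leave $\tfrac{i\lambda}{2}\tr\bigl(\sigma_3 S(A'(\gamma)-i\lambda\sigma_3)S^*\bigr)$; since $\sigma_3$ commutes with the diagonal $S$, this is $\tfrac{i\lambda}{2}(2a_3(\gamma)-2i\lambda)=i\lambda a_3(\gamma)+\lambda^2$.

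Assembling, the second term of \eqref{mupp} equals $2\sum_\gamma\invm_\gamma\bigl(i\lambda a_3(\gamma)+\lambda^2\bigr)=2i\lambda\sum_\gamma\invm_\gamma a_3(\gamma)+2\lambda^2$, whence $\mu''=\lambda^2+2i\lambda\sum_\gamma\invm_\gamma a_3(\gamma)$ and $\lambda^2 s(p)=-\mu''=-\lambda^2-2\lambda i\sum_\gamma\invm_\gamma a_3(\gamma)$, as claimed. The only step demanding genuine care is this manipulation of the second term: one must use the $\gamma$-independence of $\Vt'_\gamma(\idty)$ to turn the eigenvalue equation into a closed expression for $\sum_\eta\ma_\gamma(\eta)A'(\eta)$, after which the fact that conjugation by the diagonal $S$ preserves the $\sigma_3$-component eliminates all reference to the remaining components of $A'$; everything else is routine arithmetic with diagonal $2\times2$ matrices.
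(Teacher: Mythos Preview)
Your proof is correct. You and the paper take slightly different routes to the diffusion formula, though both rest on the same structural observation that $\Vt_\gamma'(\idty)=i\lambda\sigma_3$ is $\gamma$-independent and that the diagonal shift $S$ commutes with $\sigma_3$.

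The paper does not go back to \eqref{mupp}; it invokes the already-simplified identity \eqref{st1}, $\lambda^2 s(p)=\invm\otimes\invrho\bigl(|A'|^2-|\Wa(A')|^2\bigr)$, and observes that the eigenvalue equation forces $\Wa(A')(\gamma)=A'(\gamma)-i\lambda\sigma_3$, so in the Pauli decomposition $A'(\gamma)=\sum_i a_i(\gamma)\sigma_i$ only the $\sigma_3$ coefficient changes. Orthogonality of the Pauli matrices in Hilbert--Schmidt norm then gives $\tr|A'(\gamma)|^2-\tr|\Wa(A')(\gamma)|^2=2\bigl(|a_3(\gamma)|^2-|a_3(\gamma)-i\lambda|^2\bigr)$, and skew-hermiticity of $A'$ (so $a_3$ is purely imaginary) reduces this to $-2\lambda^2-4i\lambda a_3(\gamma)$. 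Your argument instead computes the two terms of \eqref{mupp} separately, using the eigenvalue equation to rewrite $\sum_\eta\ma_\gamma(\eta)A'(\eta)$ as $K_\gamma(A'(\gamma)-i\lambda\sigma_3)K_\gamma^*$ and then letting the $U_\gamma$'s cancel under the trace. The paper's route is a couple of lines shorter because \eqref{st1} has already absorbed the $\Vt''$ contribution and the algebra with $K_\gamma'$; your route has the mild virtue of making the cancellation of the coin unitaries $U_\gamma$ completely explicit rather than hiding it inside the derivation of \eqref{st1}.
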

 \begin{proof}
 Since $\ind W_\gamma =0$ for all $ \gamma$ it follows from \eqref{meanindex} that $v(p)=0$.

 According to \eqref{Ev12nondeg} and \eqref{st1}, we need to solve the equation
 \[
 \Wa(A')(\gamma)-A'(\gamma) = - \Vt_\gamma'(\idty)
 \]
 in order to determine the diffusive scaling of $\Wa$. To begin with, we decompose the components $A'(\gamma)$ into Pauli matrices $\sigma_i\, ,i=0,1,2,3$ with $\sigma_0=\idty_2$
 \[
 A'(\gamma) =\sum_{i=0}^3 a_i(\gamma) \sigma_i\,.
 \]
 Next, it is easily seen that for arbitrary two dimensional unitaries $U_\gamma$ we have $ K_\gamma ^* K_\gamma^\prime  = i\lambda \sigma_3$ and hence
 \[
 \Wa (A')(\gamma ) = (a_3 (\gamma) -i\lambda)\sigma_3+\sum_{i=0}^2 a_i(\gamma) \sigma_i \,.
 \]
 Now, the variance $s(p)$ can be computed from \eqref{st1}
 \[
  \lambda^2 \cdot s(p) = \invm\otimes\invrho\Bigl(\abs{A'}^2-\abs{W(A')}^2\Bigr) = \frac{1}{2}\sum_{\gamma \in \Gamma} \invm_\gamma \tr \left(\abs{A'(\gamma)}^2-\abs{W(A'(\gamma))}^2\right)
 \]
 and since $\tr |A'(\gamma)|^2 = 2\cdot \sum_{i=0}^3 \abs{a_i(\gamma)}^2$ it follows
\[
 \lambda^2\cdot s(p) = \sum_{\gamma\in\Gamma} \invm_\gamma(\abs{a_3(\gamma)}^2-\abs{a_3(\gamma )-i\lambda }^2)=-\lambda^2-2\lambda i\sum_{\gamma\in\Gamma} \invm_\gamma a_3(\gamma) \,,
 \]
 which is real since $A'(\gamma)$ is skew-hermitian.
 \end{proof}
By a straightforward calculation we get the following system of equations for $A'$
\[
\left(
\begin{array}{cccccc}
\scriptstyle
 -1 &\scriptstyle m_1\sin ( 2p) &\scriptstyle m_1\cos (2p) &\scriptstyle 0 &\scriptstyle (1-m_1)\sin (2p) &\scriptstyle (1-m_1)\cos (2p) \\
\scriptstyle 0 &\scriptstyle -1-m_1\cos (2p) &\scriptstyle m_1\sin (2p) &\scriptstyle 0 &\scriptstyle (m_1-1)\cos (2p) &\scriptstyle (1-m_1)\sin (2p) \\
\scriptstyle m_1 &\scriptstyle 0 &\scriptstyle -1 &\scriptstyle 1-m_1 &\scriptstyle 0 &\scriptstyle 0 \\
\scriptstyle (1-m_2) \cos (2p) &\scriptstyle (1-m_2)\sin (2p) &\scriptstyle 0 &\scriptstyle m_2\cos (2p)-1 &\scriptstyle m_2\sin (2p) &\scriptstyle 0 \\
\scriptstyle (1-m_2) \sin (2p) &\scriptstyle (m_2-1) \cos (2p) &\scriptstyle 0 &\scriptstyle m_2\sin (2p) &\scriptstyle -m_2\cos (2p)-1 &\scriptstyle 0 \\
\scriptstyle 0 &\scriptstyle 0 &\scriptstyle m_2-1 &\scriptstyle 0 &\scriptstyle 0 &\scriptstyle -m_2 -1
\end{array}
\right)\cdot {\bf a}
=\left(
\begin{array}{c}
0\\
0\\
-i\lambda \\
0\\
0\\
-i\lambda
\end{array}
\right)\, ,
\]
with coefficient vector ${\bf a}=(a_1(1),a_2(1),a_3(1),a_1(2),a_2(2),a_3(2))$. Actually, there are two more equations for the variables $a_0(1)$ and $a_0 (2)$, but these are already fixed to be zero in order to guarantee the condition $\invm\otimes\invrho (A')=0$. Solving these equations we get the result
 \[
 s_M(p)=\frac{1-m_2}{1-m_1}\cdot \frac{m_2+m_1}{2-m_1-m_2}
 \]
The comparison of $s_B(p)$ and $s_M(p)$ shows that the diffusive order of the quantum walk can distinguish between Bernoulli and Markov type decoherence, as opposed to the ballistic order which is the same for both control processes. In both cases, we see that in the coherent limit $\varepsilon \to 0$ and $m_1\to 1$ the variance $s_{B,M}(p)$ diverges independent of $m_2$. For the Markov process we can also consider the limit $m_2\to 1$ which gives us localization, expressed by $s_M(p)\to 0$.

\subsection{One dimensional Quantum Walks with equal position distribution}
\label{ex:1DQWEqDist}
In experimental implementations of one dimensional quantum walks one source of decoherence can be identified as dephasing of the internal degree of freedom of the walking particle \cite{Bonn}. Such a dephasing error can be modeled by introducing an additional z-Rotation $R(\theta)$ of angle $\theta$ before the application of the coin operation, such that a time step of the quantum walk $ W_\theta$ is then given by
\begin{align*}
   W_\theta(p) = C\cdot R(\theta) \cdot S = C \left(
               \begin{array}{cc}
                 e^{i\theta} & 0 \\
                0 & e^{-i\theta} \\
               \end{array}
             \right) S \,
\end{align*}
where $C$ and $S$ are the coin and shift operations as introduced in Eq.~\eqref{eqn:1dwalk}. In the following we will consider a quantum walk with Bernoulli type decoherence, the control process is assumed to chose a value $\theta$ in each time step independently according to which $W_\theta$ is applied subsequently. This quantum walk applied, up to special values of $\theta$, Assumptions \ref{ass:nomom}, \ref{ass:chaos} and \ref{ass:rcoin}, hence it is easy to compute the ballistic and diffusive scaling.

One problem faced in experiments is that the quantum walk $ W = C\cdot S$ can not easily be distinguished from its disturbed counterpart $ W_\theta = C\cdot R(\theta)\cdot S$, because of the following little lemma.
\begin{lem}
  For an initial state $ \psi(p) = \phi(z) e^{-izp}$ localized at lattice point $z$ the position distribution after any number of time steps $t$, generated by the quantum walks $ W = C\cdot S$ and $ W=C\cdot R(\theta)\cdot S$, are identical, where $R(\theta)$ is an arbitrary z-Rotation.
\end{lem}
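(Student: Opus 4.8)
The plan is to exploit the fact that the dephasing rotation $R(\theta)=\mathrm{diag}(e^{i\theta},e^{-i\theta})$ is a momentum-independent diagonal unitary of exactly the same type as the conditional shift $S(p)=\mathrm{diag}(e^{ip},e^{-ip})$, so that inserting it between coin and shift merely translates the walk in momentum space. Concretely, $R(\theta)$ and $S(p)$ commute and multiply to $\mathrm{diag}(e^{i(p+\theta)},e^{-i(p+\theta)})=S(p+\theta)$; since the coin $C$ in \eqref{eqn:1dwalk} is $p$-independent, this gives
\[
W_\theta(p)=C\,R(\theta)\,S(p)=C\,S(p+\theta)=W(p+\theta),
\]
and hence $W_\theta(p)^t=W(p+\theta)^t$ for every $t$. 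This identity is the whole content of the lemma; everything else is bookkeeping.

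Next I would feed in the localized initial state. In momentum space it is the single plane wave $\psi(p)=\phi(z)\,e^{-izp}$ with a fixed internal vector $\phi(z)\in\KK$, and this single-site localization is exactly what makes the argument work. Writing $\psi_t$ and $\psi_t^\theta$ for the states after $t$ steps of $W$ and $W_\theta$, we have $\psi_t^\theta(p)=W(p+\theta)^t\,\phi(z)\,e^{-izp}$. Substituting $q=p+\theta$ and using $e^{-izp}=e^{iz\theta}e^{-izq}$ yields
\[
\psi_t^\theta(p)=e^{iz\theta}\,\bigl(W(q)^t\phi(z)e^{-izq}\bigr)\big|_{q=p+\theta}=e^{iz\theta}\,\psi_t(p+\theta),
\]
so the two evolved momentum-space wavefunctions differ only by a constant phase and a rigid shift of the momentum variable.

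Finally I would transform back to position space. The inverse Fourier transform turns the momentum shift $p\mapsto p+\theta$ into multiplication by a unimodular $x$-dependent factor $e^{\pm i\theta x}$ (shifting the integration domain is harmless by $2\pi$-periodicity), so that $\psi_t^\theta(x)=e^{i\theta(z\mp x)}\,\psi_t(x)$ for all $x\in X$, the sign depending only on the Fourier convention. Since $e^{i\theta(z\mp x)}$ is a scalar of modulus one acting trivially on $\KK$, taking $\norm{\cdot}_\KK^2$ gives $\norm{\psi_t^\theta(x)}^2=\norm{\psi_t(x)}^2$ for every $x$ and every $t$, which is the asserted equality of position distributions.

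I do not expect a real obstacle: the only care needed is in the Fourier-convention bookkeeping and the use of $2\pi$-periodicity to shift integration domains. The one point I would emphasize in the writeup is \emph{why} localization at a single site $z$ is essential: for a state spread over several sites, $\psi(p+\theta)=\sum_z\phi(z)e^{-iz\theta}e^{-izp}$ re-weights the site amplitudes by the $z$-dependent factors $e^{-iz\theta}$ instead of producing a global phase, so the cancellation in the position distribution breaks down.
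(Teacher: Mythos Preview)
Your argument is correct and essentially identical to the paper's: both hinge on the observation $W_\theta(p)=W(p+\theta)$, then use the inverse Fourier transform of the single-site initial state to convert the momentum translation into an $x$-dependent phase $e^{-i\theta(x-z)}$ that drops out of $\norm{\psi_t(x)}^2$. Your additional remark explaining why the single-site localization is essential is a nice touch that the paper's proof omits.
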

\begin{proof}
  In the Fourier picture it is easy to see that $ W$ and $ W_\theta$ deviate only by a constant momentum shift, i.e. $ W_\theta(p) =  W(p+\theta)$. Computing the state at lattice site $x$ after $t$ time steps, starting with a state $\psi$ initially localized at lattice site z we find:
  \begin{align*}
   (W^t_\theta\psi)(x) &= \frac{1}{2\pi} \int dp\  W^t_\theta(p) \psi(p) e^{ipx} = \frac{1}{2\pi} \int dp\  W^t(p+\theta) \psi(z) e^{ip(x-z)}\\
   &=\frac{1}{2\pi}  e^{-i\theta(x-z)} \int dp\  W^t(p) \psi(z) e^{ip(x-z)} =  e^{-i\theta(x-z)} (W^t\psi)(x)
  \end{align*}

  Since both amplitudes deviate only by a global phase the probability to find the particle at a lattice point $x$, and therefore the position distribution, coincide.

\end{proof}

So if we start an experiment in a localized state we cannot distinguish between the two. On the other hand, however, implementing a quantum walk, that will in every time step apply the undisturbed walk $W$ with some probability $\veps$ and with probability $(1-\veps)$ apply the z-rotated quantum walk $W_\theta$, will for the most values of $\theta$ result in diffusive behavior.

As an explicit example consider as the undisturbed walk the one dimensional Hadamard walk as defined in Sect. \ref{sec:unitary}. The two Kraus operators in momentum space are then given by
\begin{align}\label{DephasedHadamard}
   K_1(p) = H S = \sqrt{\frac{\varepsilon}{2}}
\left(
\begin{array}{cc}
e^{ip} & e^{-ip} \\
e^{ip} & -e^{-ip}
\end{array}
\right)&  & K_2(p) = H R(\theta) S =  \sqrt{\frac{1-\varepsilon}{2}}
\left(
\begin{array}{cc}
e^{i(p+\theta)} & e^{-i(p+\theta)} \\
e^{i(p+\theta)} & -e^{-i(p+\theta)}
\end{array}
\right)
\end{align}
One can check that only the cases $\theta\in\{0,\pi,2\pi\}$ lead to commuting Kraus operators that violate part three of Assumption \ref{ass:chaos} and are treated in example \ref{ex:commKraus}. In all other cases we get an asymptotic deterministic velocity $v(p)=0$ by Proposition \ref{asymdistr}.

\begin{figure}[htb]
  \centering
 \includegraphics[width=0.45\textwidth]{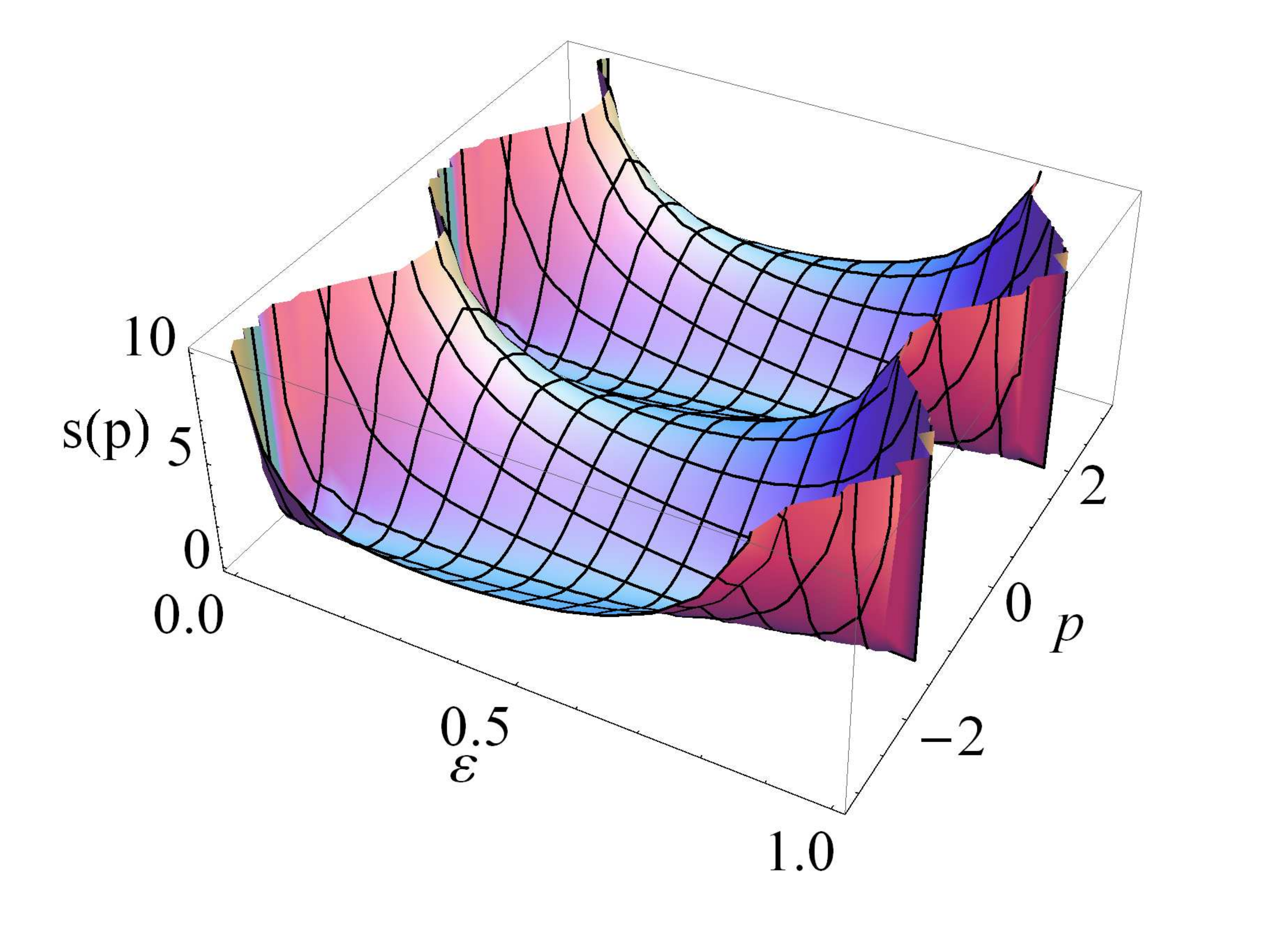}
    \caption{(color online) The plot shows the variance $s(p)$ for a dephased Hadamard Walk \eqref{DephasedHadamard} with $\theta = \frac{\pi}{3}$. The undisturbed Hadamard Walk is applied with probability $\veps\in[0,1]$ and the walk $W_{\pi/3}$ is applied with probability $(1-\veps)$.}\label{Plot:rotHad_diff}
\end{figure}

To compute the diffusion constant we follow the instructions of Sect. \ref{sec:Bernoulli} and calculate the operator $ A'=\sum_j a_j \sigma_j$, here decomposed in Pauli matrices. Choosing $ A'$ to have zero trace we find $a_0=0$ and are left with three parameters $a_j$ that can directly be inferred from Eq.~\eqref{XWOcontrol}, which after eliminating equivalent conditions reads
\begin{align}
   \left(
               \begin{array}{ccc}
                {\scriptstyle 1} & {\scriptstyle i(1+(1-\veps)e^{2i(p+\theta)}+\veps e^{2ip})} & {\scriptstyle -((1- \veps)e^{2i(p+\theta)}+\veps e^{2ip})} \\
                 {\scriptstyle 1} & {\scriptstyle -i(1 +(1-\veps)e^{-2i(p+\theta)}+\veps e^{-2ip}  )} & {\scriptstyle -((1- \veps)e^{-2i(p+\theta)}+\veps e^{-2ip})} \\
                 {\scriptstyle 1} & {\scriptstyle 0} & {\scriptstyle -1} \\
               \end{array}
             \right)\cdot
              \left(
               \begin{array}{c}
                a_1\\
                a_2\\
                a_3\\
               \end{array}\right) = \left(
                \begin{array}{c}
                0\\
                0\\
                -i\lambda\\
               \end{array}\right)\; .
\end{align}
Inferring $ A'$ from this equation, we can compute $\Vt_\gamma'( A')$ and since
\[
\tilde \Va''(\idty) = \sum_j  K_j^*(p)K_j''(p)= -\lambda\idty^2
\]
as in example \ref{ex:misslinks} we can determine the variance from Eq.~ \eqref{muppBernoulli}
\begin{align}
  s(p) = \frac{\lambda^2}{\sin(\theta)^2}\frac{\cos(p+\theta)^2 + \veps\sin(2p+\theta)\sin(\theta)}{\veps(1-\veps)} \; .
\end{align}
In contrast to example \ref{ex:misslinks} we have an explicit $p$ dependence of $s$. As one expects, in the limits $\veps=1$ and $\veps=0$, where only one coin is taken and the decoherence vanishes, the diffusion constant diverges. The same is true for the cases $\theta\in[0,\pi,2\pi]$, where the Kraus operators commute and the quantum walk once again exhibits ballistic spreading as will be shown in example \ref{ex:commKraus}. So, also in the coherent limit $\theta\,\mathrm{mod}\,2\pi\rightarrow 0 $, where the two quantum walks, the Markov process chooses from, become equal, the diffusion constant diverges.

In Fig.\ref{Plot:rotHad_diff} we plot the variance $s(p)$ for a z-Rotation of $\theta=\frac{\pi}{3}$. As explained in the last paragraph, we can observe divergence of $s(p)$ for the coherent limits $\veps=0$ and $\veps = 1$. In between these two regimes the $p$ dependence of $s(p)$ is recognizable.

\subsection{Non-unitary Kraus operators}
\label{subs:NonUnitKO}
The following example is a simple version of a quantum walk which satisfies Assumption \ref{ass:nomom} and \ref{ass:chaos} but not Assumption \ref{ass:rcoin}. This means there is no momentum transfer, and according to Proposition \ref{prob:simpev} the eigenvalue $1$ of $\Wa$ is non-degenerate for almost all $p$, but the Kraus operators are non-unitary. Hence, the results of Sect. \ref{sec:ballistic} are applicable, and in particular formula \eqref{vdecoh} determines the asymptotic behavior of the expectation value of $Q(t)/t$, but now with a momentum dependent velocity $v(p)$.

We consider a one dimensional lattice with no internal degree of freedom. The decoherence will be of Bernoulli type, i.e. the Markov chain is actually trivial. The Kraus operators of the quantum walk are defined by
\begin{equation}\label{nonUniKraus}
   (K_1 \psi) (x)=\frac{1}{2} (\psi (x)+\psi (x+1))\quad ,\quad (K_2 \psi )(x)=\frac{1}{2} (\psi (x)-\psi (x-1))\,.
\end{equation}
These operators satisfy the normalization condition $\sum_i K_i^*K_i=\idty$ and their Fourier transforms are given by
\[
 K_1 (p) =\frac{1}{2} (1+e^{ip})\quad ,\quad  K_2(p)=\frac{1}{2} (1-e^{-ip})\,.
\]
In fact, the Kraus operators $K_1$ and $K_2$ commute, hence, according to Sect. \ref{ex:commKraus}, the behavior of this quantum walk will be ballistic. The limit of the characteristic function of $Q(t)/t$ is according to the discussion in Sect. \ref{sec:AsymPosPertMeth} given by
\[
\lim\limits_{t\rightarrow \infty}C_{Q(t)/t}(\lambda )=\lim\limits_{t\rightarrow \infty} \int dp \tr \rho(p) \Wt_{1/t}^t (\idty)\,.
\]
Although our theory applies to this problem, it is instructive to calculate this limit using only Fourier methods. We need to compute the operator $\Wt_{1/t}^t (\idty)$, in momentum space this becomes a one dimensional problem, for we have
\[
\Wt_{\veps}(X)=\sum_i K_i^*(p) X(p) K_i(p+\lambda\veps) =X(p)\sum_i K_i^*(p) K_i(p+\lambda\veps)\,.
\]
But this means we can solve the equation $\Wt_\veps A_\veps=\mu_\veps A_\veps$ exactly. Clearly, $A_\veps =\idty$ and $\mu_\veps(p) = \sum_i K_i^*(p)K_i(p+\lambda\veps)$ solves the equation.
Now, in ballistic scaling $\veps=1/t$ the operator $\Wt_{1/t}^t(\idty)$ is just given by
\begin{eqnarray}\label{mutot}
\mu_{1/t}^t&=&
\frac{1}{4^t}\left( (1+e^{-ip})(1+e^{i(p+\lambda /t)})+(1-e^{ip})(1-e^{-i(p+\lambda /t)})\right)^t \\
&=& \left( \frac{1+\cos (\lambda / t) + i(\sin (p+\lambda /t)-\sin (p))}{2}\right)^t\,.
\end{eqnarray}
By a Taylor expansion of this equation to second order and using the formula $\lim_{t\to\infty}(1+x/t+\order(t^{-1}))^t=\exp(x)$ once again we obtain the limit
\[
\lim\limits_{t\rightarrow \infty}C_{Q(t)/t}(\lambda ) = \int dp \,\rho (p) e^{i\lambda\cos(p)/2}\,.
\]
As already pointed out, we would have obtained the same result by applying \eqref{vdecoh} and \eqref{v-example}. And indeed, these formulas can be used to determine the asymptotic behavior of more general quantum walks.
\begin{prop}
\label{NonUnitKrausOps}
Let the Kraus operators of a quantum walk in one lattice dimension with no internal degree of freedom be defined by
\[
(K_i\psi )(x)=\sum_{k}a_{ik}\psi (x-k) \quad \Leftrightarrow \quad  K_i (p)=\sum_k a_{ik}e^{ip k}\,.
\]
where the coefficients $a_{ik}$ have to satisfy the constraint $\sum_{i,k} \bar a_{i(k+x)}a_{ik} = c_0 \delta_{0,x} \, \forall \,x \in \Ir $
in order to guarantee the normalization condition $\sum_i K_i^*K_i=\idty$. Then the group velocity and hence the asymptotic behavior of the quantum walk is determined by
\[
v(p) = \sum_x |\gamma_x | \cos (px+\arg (\gamma_x))\,,\quad \gamma_x = \sum_{i,k}\bar a_{i(k-x)}a_{ik} k \,.
\]
\end{prop}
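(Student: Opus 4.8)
The decisive simplification is that there is no internal degree of freedom, so $\KK=\Cx$ and each $K_i(p)$ is a scalar. Consequently the transition operator $\Wt_\veps$ of \eqref{Wtp}, acting on the one-dimensional space $\BB(\KK)\cong\Cx$, is just multiplication by the scalar $\mu_\veps(p)=\sum_i\overline{K_i(p)}\,K_i(p+\veps\lambda)$. The normalization condition $\sum_iK_i^*K_i=\idty$ means $\sum_i|K_i(p)|^2=1$ for every $p$, i.e.\ $\mu_0(p)\equiv 1$, so the branch governing the asymptotics is precisely this $\mu_\veps(p)$ with eigenvector $\idty$; equivalently one may quote \eqref{vdecoh} directly, with the control process trivial ($\invm$ a point mass) and $\invrho=1$. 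Since $|\mu_\veps(p)|\le1$ by Cauchy--Schwarz together with normalization, the elementary limit in \eqref{v-example} applies and yields $\mu_{1/t}(p)^t\to e^{iv(p)\lambda}$ with $iv(p)\lambda=\tfrac{d}{d\veps}\mu_\veps(p)\big|_{\veps=0}=\lambda\sum_i\overline{K_i(p)}K_i'(p)$; dominated convergence then gives $C_{Q(t)/t}(\lambda)\to\int dp\,\rho(p)e^{iv(p)\lambda}$, so by Proposition~\ref{asymdistr} the function $v(p)=-i\sum_i\overline{K_i(p)}K_i'(p)$ is what we must identify.

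The second step is a direct computation. Inserting $K_i(p)=\sum_ka_{ik}e^{ipk}$, hence $K_i'(p)=i\sum_kk\,a_{ik}e^{ipk}$ and $\overline{K_i(p)}=\sum_l\bar a_{il}e^{-ipl}$, and multiplying out, one collects the monomial $e^{ipx}$ by setting $x=k-l$; this gives $v(p)=\sum_x\gamma_xe^{ipx}$ with exactly the stated coefficients $\gamma_x=\sum_{i,k}\bar a_{i(k-x)}a_{ik}\,k$.

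The third step is to pass to the real, cosine form. Substituting $k\mapsto k+x$ in $\overline{\gamma_x}=\sum_{i,k}a_{i(k-x)}\bar a_{ik}\,k$ produces $\gamma_{-x}$ plus $x$ times $\sum_{i,k}\bar a_{i(k+x)}a_{ik}$; the latter sum equals $c_0\delta_{0,x}$ by the normalization constraint and is killed by the prefactor $x$, so $\overline{\gamma_x}=\gamma_{-x}$ for all $x$. Hence $v(p)$ is real (consistent with the general reality statement after \eqref{vdecoh}), and therefore $v(p)=\Re v(p)=\sum_x\Re\!\big(\gamma_xe^{ipx}\big)=\sum_x|\gamma_x|\cos\!\big(px+\arg\gamma_x\big)$, which is the claim. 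Finally, since $\int dp\,\rho(p)=\tr\rho=1$, the weak limit of the distribution of $Q(t)/t$ is the push-forward of the probability measure $\rho(p)\,dp$ under this $v$.

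There is no genuine obstacle here: the only point demanding care is the index bookkeeping in the two substitutions $x=k-l$ and $k\mapsto k+x$, and the correct use of the normalization constraint to eliminate one of the two sums arising in $\overline{\gamma_x}$; the fact that the first-order perturbation term is the whole story (rather than needing the full analysis of Section~\ref{sec:ballistic}) is automatic because $\BB(\KK)$ is one-dimensional, so $\mu_\veps(p)$ is literally the single relevant eigenvalue and is bounded by $1$ in modulus.
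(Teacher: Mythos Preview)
Your proposal is correct and follows essentially the same route as the paper, which simply states that the result ``follows directly from Eq.~\eqref{vdecoh}''. You supply considerably more detail than the paper does: the explicit index substitution $x=k-l$ yielding the Fourier coefficients $\gamma_x$, and the verification $\overline{\gamma_x}=\gamma_{-x}$ via the normalization constraint to justify the cosine form, are all left implicit in the paper's one-line proof.
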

\begin{proof}
The statement follows directly from Eq.~\eqref{vdecoh}.
\end{proof}
Interestingly, we can infer from the characteristic function that the behavior of our example walk is truly ballistic. In fact, the second moment of $Q(t)/t$ in the asymptotic limit is non-zero:
\[
\lim\limits_{t\rightarrow \infty}\left\langle Q(t)^2/t^2\right\rangle = -\left.\frac{\partial^2}{\partial^2 \lambda}C(\lambda)\right\vert_{\lambda=0}= \int dp\,\frac{\cos (p)^2}{4} \rho(p)
\]

We would like to determine the asymptotic distribution for arbitrary initial states $\rho$ and according to Proposition \ref{locinv} this requires knowledge of the group velocity $v(p)$, which is for our example given by
\[
v(p)=\frac{\cos(p)}{2}\,.
\]
By decomposing momentum space into subsets $(-\pi,0]$ and $(0,\pi]$ on which the function $v(p)$ is invertible and exploiting the point symmetry of $v(p)$ we obtain the asymptotic distribution of $Q(t)/t$
\[
  \mathrm{P} (x) = \frac{2}{\sqrt{1-4x^2}}\left(\rho (v^{-1} (x))+\rho (-v^{-1} (x)) \right)\, ,
\]
where $v^{-1}$ denotes the inverse of $v$ restricted to $(-\pi,0]$. Again, we see that at the points where the derivative $\partial v/\partial p$ vanishes, i.e. at the caustic points, there are peaks in the limiting distribution. If we choose $\rho$ to be located at the origin, i.e. $\rho (p) = 1/2\pi$, we get
\[
\mathrm{P}(x) = \frac{2}{\pi \sqrt{1-4x^2}}\,.
\]
The Fig. \ref{Fig:NonUnitKO1} compares this asymptotic distribution with the scaled probability distributions for a finite number of time steps.
\begin{figure}[htb]

  \centering
  \subfigure[]{\includegraphics[width=0.45\textwidth]{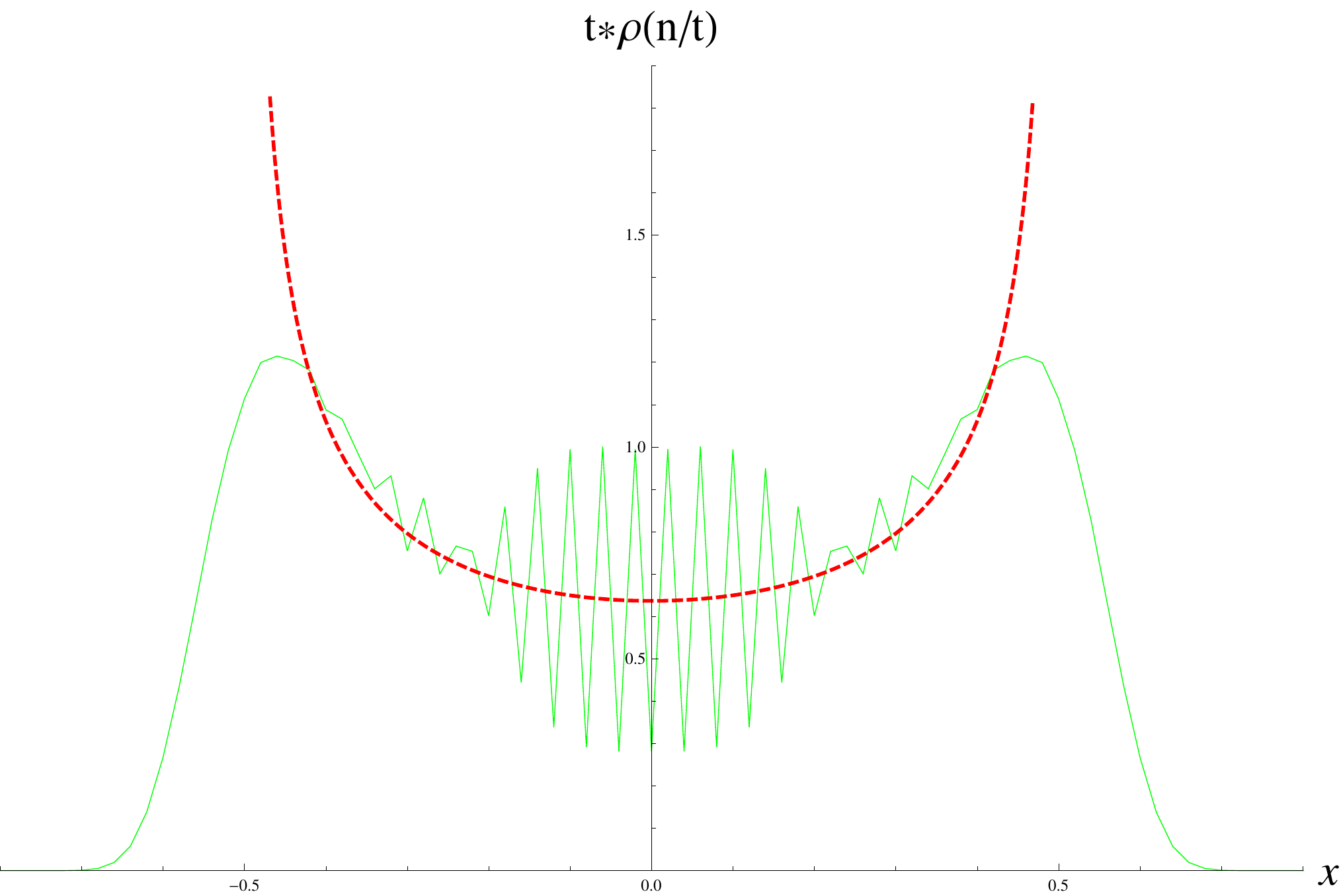}}
  \hspace{1cm}
  \subfigure[]{\includegraphics[width=0.45\textwidth]{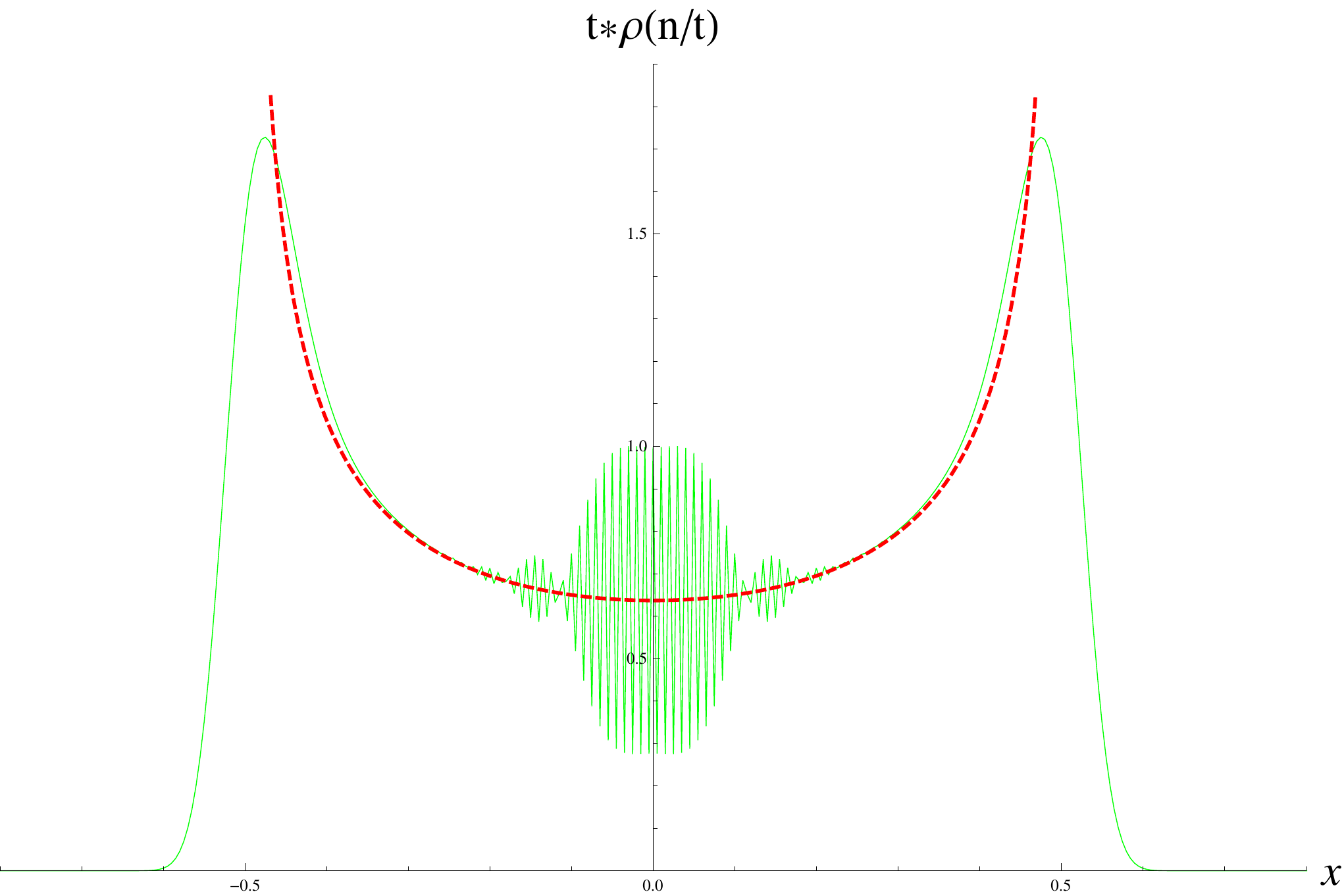}}
    \caption{(color online) Position distributions of the walk defined by \eqref{nonUniKraus} after (a) 50 and (b) 200 time steps (green/dotted lines) in ballistic scaling. The red/dashed line shows the asymptotic position distribution for comparison.}
\label{Fig:NonUnitKO1}
\end{figure}
Our knowledge of the exact eigenvalue $\mu_\veps$ allows us to analyze the asymptotic distribution in further detail. The Taylor expansion of $\mu_\veps$ to second order reads
\[
\mu_\veps = 1 + \veps\mu'  + \frac{\veps^2}{2}\mu''  +\Order (\veps^3) = 1+ i\frac{\veps\lambda}{2} \cos (p) - \frac{\veps^2\lambda^2}{4}(i\sin (p)+1)+\Order (\veps^3)\,.
\]
Hence, the variance $s(p)$ for this example is a complex valued function
\[
s(p)\lambda^2=-\mu''-(v\cdot \lambda)^2=\frac{2+i2\sin (p)-\cos^2 (p)}{4}\,\lambda^2\,.
\]
The imaginary part of this ``variance'' defies its interpretation as the variance of added Gaussian noise. However, the corrections to the probability distribution computed from it will be real, after integration over momenta. In a similar vein, \eqref{mutot} looks like the characteristic function of a sum of $t$ random variables, but with complex ``probabilities''. Only after integration over $p$ the expression gives a probability distribution.

Using the ideas from Section \ref{sec:HighOrdWOBall} we can determine the $1/t$ correction to the asymptotic distribution stemming from the second order of the perturbation expansion in $\veps$. Since we have $A'=0$ and we have chosen $\rho(p_1,p_2) =1/2\pi$ we get
\[
C_t(\lambda) = \frac{1}{2\pi}\int \!dp e^{i\lambda v(p)}\Bigl(1-\frac{2+i2\sin (p)-\cos^2 (p)}{8t}\,\lambda^2\Bigr)+\order(t^{-1})\,.
\]
With $v(p)=\cos (p)/2$ the integral over $p$ yields a sum of two Bessel functions for the first order approximation $C_1(\lambda,t)$ of $C_t(\lambda)$, i.e. $C_t(\lambda)=C_1(\lambda,t) + +\order(t^{-1})$ and
\[
C_1(\lambda,t) =\frac{1}{8t}\left((8t-\lambda^2)J_0(\lambda/2)-2\lambda J_1(\lambda/2)\right)\,.
\]
As already pointed out in Sect. \ref{sec:higherUnitary}, $C_1$ is not integrable over $\Rl$ (cf. Fig. \ref{CharFuncFirstOrd}) and therefore we need to introduce a cutoff in the integration for the inverse Fourier transform of $C_t(\lambda)$. In order to smoothen the resulting probability distribution we multiplied $C_1(\lambda,t)$ with a Gaussian $g(\lambda ,t)=e^{-(x/t)^2}$ and computed the inverse Fourier transform numerically. This smoothening was necessary because of the rapidly oscillating behaviour of $C_1(\lambda,t)$, which can be seen in Fig. \ref{CharFuncFirstOrd}, leading to a poor convergence of the numerical integration. The resulting correction to the asymptotic position distribution for 10 time steps, also shown in Fig. \ref{CharFuncFirstOrd}, is in good agreement with the exact position distribution.
\begin{figure}[htb]
  \centering
  \subfigure[]{\includegraphics[width=0.45\textwidth]{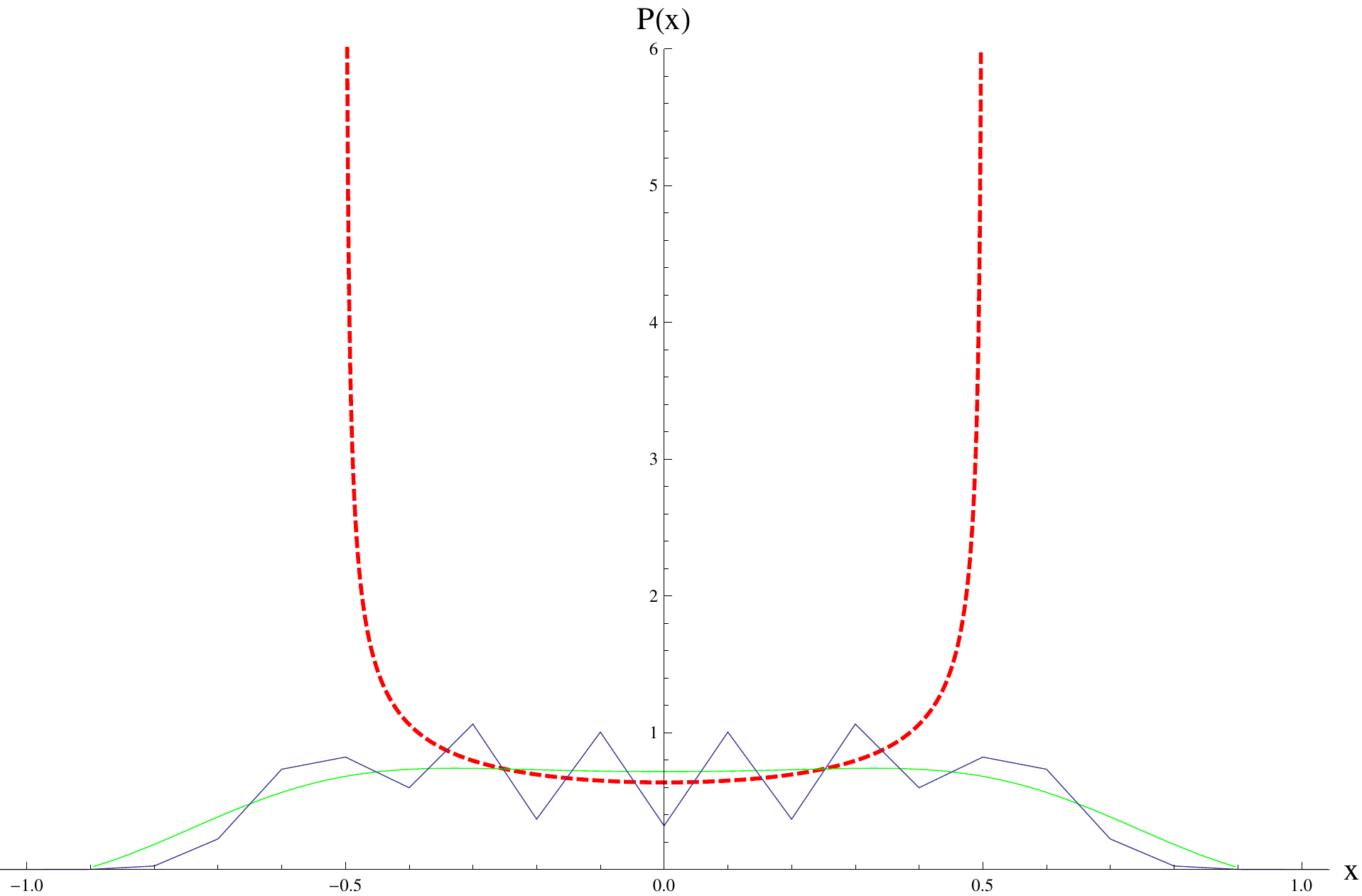}}
  \hspace{1cm}
  \subfigure[]{\includegraphics[width=0.45\textwidth]{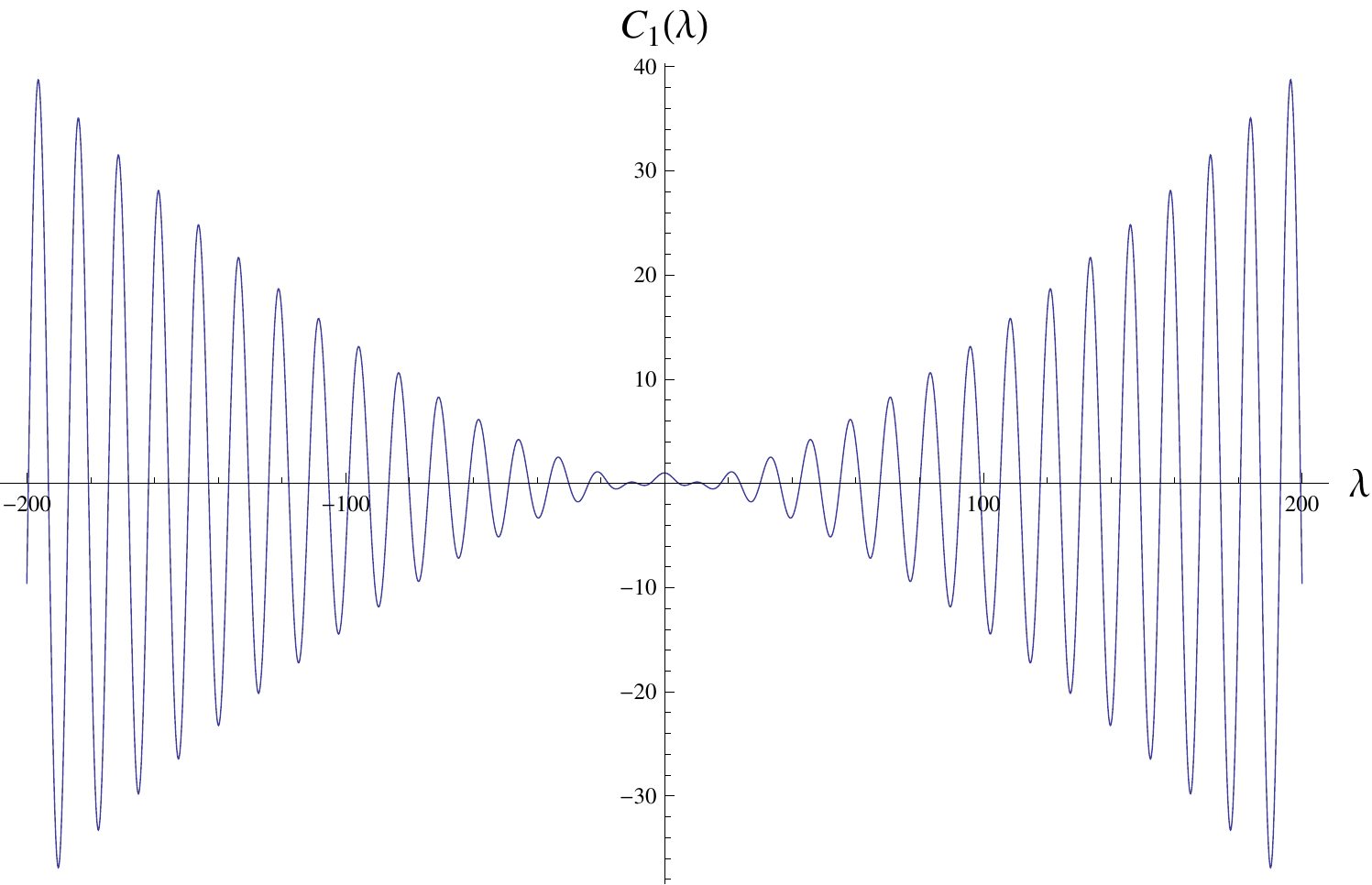}}
    \caption{(color online) Plot (a) shows the correction of order $1/t$ (green/solid) to the asymptotic distribution (red/dashed) for the walk \eqref{nonUniKraus}.
    The exact values (connected by a polygon) are shown for the same value ($t=10$) as the correction. In (b) the $1/t$ correction $C_1(\lambda,t)$ to the characteristic function $C_t(\lambda)$ is shown for $t=10$.}
\label{CharFuncFirstOrd}
\end{figure}

\subsection{Quantum walks with momentum shifts}
\label{subsec:QwMK}
Now we want to drop Assumption \ref{ass:nomom}, hence making the results of the preceding sections inapplicable. The quantum walk we are going to consider is closely related to the example of Sect. \ref{subs:NonUnitKO}, the decoherence will still be of Bernoulli type and it has Kraus operators
\begin{equation}
\label{eqn:MomShiftKO}
( K_1 \psi ) (p) =\frac{1}{2} (1+e^{i(p-q)}) \psi (p-q)\quad ,\quad  ( K_2 \psi )(p)=\frac{1}{2} (1-e^{-i(p-q)}) \psi (p-q)\,.
\end{equation}
The case $q=0$ recovers the quantum walk of the previous section which shows ballistic behavior. For $q\neq 0$ we will use Fourier methods to show that the behavior is diffusive. For this purpose we will assume that $q\in 2\pi\cdot \Rt$ , i.e. there are two numbers $m, n\in \Ir$ relatively prime, such that $q=2\pi\cdot n/m$. In position space these operators act in the following way
\[
(K_1\psi )(x)=\frac{1}{2}\left (e^{ixq}\psi (x) + e^{ixq}\psi (x+1)\right )\quad , \quad (K_2\psi )(x)=\frac{1}{2}\left (e^{ixq}\psi (x) - e^{ixq}\psi (x-1)\right )\,.
\]
Although these operators themselves are not translation invariant because of the position dependent phase factor $e^{iqx}$, the quantum walk constructed from them is translation invariant, as guaranteed by Corollary \ref{cor:KrausMomShift}. And indeed, for arbitrary operators $X$ the matrix elements of $K_i^*X K_i$ pick up phase factors which depend only on the difference of the row and column index of the operator, which is clearly a translation invariant operation.
\begin{prop}
Let the Kraus operators be given by \eqref{eqn:MomShiftKO} with $q\in 2\pi\cdot \Rt$. If $q\, \mathrm{mod} \,2\pi=0$ the behavior is ballistic according to Sect. \ref{subs:NonUnitKO}. For $q\, \mathrm{mod} \,2\pi \neq 0$ the ballistic order $Q(t)/t$ converges to a point measure at the origin. In the case $q\, \mathrm{mod} \,2\pi \neq \pi$ the diffusive order $Q(t)/\sqrt{t}$ converges to
\[
\mathrm{P}_1(x)=\frac{2}{\sqrt{3\pi}}e^{-4x^2/3} \,,
\]
if $q\, \mathrm{mod} \,2\pi = \pi$ the asymptotic distribution of $Q(t)/\sqrt{t}$ is given by
\[
\mathrm{P}_2(x)=\frac{1}{2\pi}\int d\lambda \,e^{-iy\lambda}e^{-\frac{3\lambda^2}{16}}\int dp\,\rho (p)\exp \left(\lambda^2\frac{\cos (2p)}{16}\right)\,.
\]
\end{prop}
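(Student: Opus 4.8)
The plan is to exploit the rationality of the momentum shift. Write $q=2\pi n/m$ in lowest terms; then $m$ consecutive steps produce a net momentum shift $mq=2\pi n\equiv 0\pmod{2\pi}$, so I would work with the $m$-step channel $\Wa^m$, which \emph{does} satisfy \assref{nomom}. Iterating \eqref{eqn:MomShiftKO} one sees that the product $K_{i_1}\cdots K_{i_m}$ acts in momentum space simply as multiplication by $\tilde k_{(i)}(p)=\prod_{j=1}^m k_{i_j}(p-jq)$, with $k_1(p)=\tfrac12(1+e^{ip})$, $k_2(p)=\tfrac12(1-e^{-ip})$, using $\psi(p-mq)=\psi(p)$ on the torus. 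Hence $\Wa^m(A)=\sum_{(i)}\tilde K_{(i)}^*A\tilde K_{(i)}$ with genuine (momentum-diagonal) multiplication operators $\tilde K_{(i)}$, and the normalisation $\sum_i|k_i(p)|^2=1$ coming from $\sum_iK_i^*K_i=\idty$ propagates to $\sum_{(i)}|\tilde k_{(i)}(p)|^2=1$. The $\tilde K_{(i)}(p)$ are scalars and therefore commute, so \assref{chaos} fails; but, exactly as in Sect.~\ref{subs:NonUnitKO}, this is harmless: the twisted operator $\Wt^{(m)}_\veps$ attached to $\Wa^m$ by \eqref{Wtwiddle} is just multiplication by
\[
    \mu_\veps(p)=\sum_{(i)}\overline{\tilde k_{(i)}(p)}\,\tilde k_{(i)}(p+\veps\lambda)=\prod_{j=1}^m\nu_\veps(p-jq),\qquad
    \nu_\veps(p)=\frac{1+\cos(\veps\lambda)+i\bigl(\sin(p+\veps\lambda)-\sin p\bigr)}{2},
\]
with eigenvector $\idty$, so that $(\Wt^{(m)}_\veps)^T(\idty)(p)=\mu_\veps(p)^T$ exactly, and $\nu_\veps$ is precisely the single-step eigenvalue of Sect.~\ref{subs:NonUnitKO}.

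From $\log\nu_\veps(p)=\tfrac{i\veps\lambda}{2}\cos p-\tfrac{\veps^2\lambda^2}{8}\bigl(2+2i\sin p-\cos^2 p\bigr)+\Order(\veps^3)$ we get $\log\mu_\veps(p)=\sum_{j=1}^m\log\nu_\veps(p-jq)$, so everything reduces to evaluating the sums $\sum_{j=1}^m\cos(p-jq)$, $\sum_{j=1}^m\sin(p-jq)$ and $\sum_{j=1}^m\cos\bigl(2(p-jq)\bigr)$ over the $m$-th roots of $e^{iq}$. Since $e^{imq}=1$, the first two vanish whenever $e^{iq}\neq1$, i.e. whenever $q\bmod 2\pi\neq0$ (the case $q\bmod 2\pi=0$ being $m=1$, which is the ballistic walk of Sect.~\ref{subs:NonUnitKO}). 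Thus $\mu'=0$, i.e. $v\equiv0$, and \eqref{v-example} gives that $Q(t)/t$ converges to the point mass at the origin.

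For the second order one distinguishes $m\ge3$ from $m=2$, which correspond exactly to $q\bmod 2\pi\notin\{0,\pi\}$ and to $q\bmod 2\pi=\pi$ (since $\gcd(n,m)=1$ forces $m\in\{1,2\}$ whenever $2q\in2\pi\Ir$). When $m\ge3$ one has $e^{2iq}\neq1$, so also $\sum_j\cos\bigl(2(p-jq)\bigr)=0$, hence $\sum_j\cos^2(p-jq)=m/2$ and $\log\mu_\veps(p)=-\tfrac{3m}{16}\veps^2\lambda^2+\Order(\veps^3)$, which is $p$-independent. When $m=2$, instead $\nu_\veps(p-2q)=\nu_\veps(p)$ and $\nu_\veps(p-q)=\nu_\veps(p-\pi)$, the $\cos^2$-term survives, and $\log\mu_\veps(p)=-\tfrac{3}{8}\veps^2\lambda^2+\tfrac{1}{16}\veps^2\lambda^2\cdot2\cos^2p+\Order(\veps^3)=-\tfrac{3}{8}\veps^2\lambda^2+\tfrac18\veps^2\lambda^2+\tfrac{1}{8}\veps^2\lambda^2\cos(2p)+\Order(\veps^3)$ after using $2\cos^2p=1+\cos2p$. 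Feeding this into the diffusive scaling of \eqref{chardeviate}: iterate $\Wa^m$ a total of $T$ times (original time $t=mT$), put $\veps=1/\sqrt{mT}$, and use $\lim_{T\to\infty}(1+x/T+\order(T^{-1}))^T=e^x$ together with $\int dp\,\rho(p)=1$; since $v\equiv0$ there is no large-phase prefactor and
\[
    \lim_{t\to\infty}C_{Q(t)/\sqrt t}(\lambda)=\int dp\,\rho(p)\,\exp\Bigl(\lim_{T\to\infty}T\log\mu_{1/\sqrt{mT}}(p)\Bigr)
    =\begin{cases}
        e^{-3\lambda^2/16}, & q\bmod 2\pi\notin\{0,\pi\},\\[4pt]
        e^{-3\lambda^2/16}\displaystyle\int dp\,\rho(p)\,e^{\lambda^2\cos(2p)/16}, & q\bmod 2\pi=\pi.
      \end{cases}
\]
Inverse Fourier transform of the first line is the Gaussian $\mathrm P_1(x)=\tfrac{2}{\sqrt{3\pi}}e^{-4x^2/3}$, and of the second line the stated $\mathrm P_2$.

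Two standard points must still be dispatched. First, the leftover factor $e^{i\veps\lambda\cdot Q}$ in \eqref{Wpowers} is negligible for initial states supported on a finite region, since then $\rho(p+\veps\lambda,p)\to\rho(p,p)$ as $\veps\to0$, exactly as argued in Sect.~\ref{sec:AsymPosPertMeth}; this is what makes the displayed trace reduce to $\int dp\,\rho(p)(\cdots)$. Second, one passes from the subsequence $t=mT$ to all $t\to\infty$: a single step moves the position by at most one lattice site, so $Q(t)$ and $Q(m\lfloor t/m\rfloor)$ differ by $\Order(1)$, whence the ballistic and diffusive scaled variables have the same limits along the full sequence. I expect the only real work to be bookkeeping: keeping the three regimes $m=1$, $m=2$, $m\ge3$ straight and evaluating the root-of-unity sums without a sign slip, since an error there would change the diffusion constant; the conceptual step — grouping $m$ steps to remove the momentum transfer — is the crux but is immediate once noticed.
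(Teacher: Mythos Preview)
Your approach is essentially the paper's own: group $m$ steps to kill the momentum shift, observe that the resulting twisted operator acts as multiplication by $\mu_\veps(p)=\prod_{j}\nu_\veps(p-jq)$ with the single-step factor $\nu_\veps$ of Sect.~\ref{subs:NonUnitKO}, then Taylor-expand and use the root-of-unity sums $\sum_j e^{ik(p-jq)}$ to separate the cases $m=1$, $m=2$, $m\ge3$. One bookkeeping slip to fix in the $m=2$ line: the constant term from $\sum_{j=1}^2(-\tfrac{\veps^2\lambda^2}{8}\cdot2)$ is $-\tfrac12\veps^2\lambda^2$, not $-\tfrac38\veps^2\lambda^2$, and the $\cos^2$-coefficient is $\tfrac18$, not $\tfrac1{16}$; with these corrected the expression collapses to $-\tfrac{\veps^2\lambda^2}{8}(3-\cos 2p)$, which after the $1/m$ rescaling indeed gives your (correct) final answer $-\tfrac{3\lambda^2}{16}+\tfrac{\lambda^2}{16}\cos 2p$.
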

\begin{proof}
Similarly to the preceding example, we will calculate the quantity $\tr \rho \Wt^t_\varepsilon (\idty ) $ in the limit $t\to \infty$ with appropriate scaling of $\varepsilon$. We begin by noting that for an arbitrary function $f(p,p' )$ the following identity holds
\begin{eqnarray}
\Wt_\varepsilon (f)(p,p') &=&\left(\sum_j  K_j^* f e^{i\varepsilon\lambda Q} K_j e^{-i \varepsilon \lambda  Q}\right)(p,p' )\nonumber\\
& = & \frac{1}{4}(1+e^{-ip})(1+e^{i(p'+\varepsilon \lambda)})f(p+q,p'+q)+\nonumber\\
&+&\frac{1}{4}(1-e^{ip})(1-e^{-i(p'+\varepsilon \lambda)})f(p+q,p'+q)\nonumber\\
&=&\frac{1}{2}(1+\cos (p'-p+\varepsilon\lambda )+i\cdot (\sin (p'+\varepsilon \lambda)-\sin (p)))f(p+q,p'+q)\,.\nonumber
\end{eqnarray}
Exploiting the condition $q =2\pi\cdot n/m$, with $n$ and $m$ relatively prime, together with this identity, we see that the characteristic function of the asymptotic distribution in $\varepsilon$-scaling is given by
\begin{equation}
\label{epsscal}
\lim\limits_{t\to\infty} \int dp\,\rho (p) \Wt^t_\varepsilon (\idty)(p,p)=
\int dp\,\rho (p)\lim\limits_{s\to\infty}\prod_{k=0}^{m-1} \frac{1}{2^s}(1+\cos {\scriptstyle (\varepsilon \lambda)}+i\cdot (\sin {\scriptstyle (p+kq+\varepsilon \lambda) }-\sin {\scriptstyle (p+kq)}))^s\,,
\end{equation}
with $t=m\cdot s$. Hence, in ballistic scaling $\varepsilon =1/t$ we get the result
\begin{eqnarray}
\lim\limits_{t\rightarrow \infty}C_{Q(t)/t}(\lambda )&=& \int dp\,\rho (p)\prod_{k=0}^{m-1} e^{\frac{i}{2m}\lambda\cos (p+kq)}\nonumber\\
&=&\int dp\,\rho (p) e^{\lambda\sum_{k=0}^{m-1}\frac{i}{2m}\cos (p+kq)}\nonumber\\
&=&\int dp\,\rho (p)\nonumber\\
&=&1\quad ,\nonumber
\end{eqnarray}
which follows from a Taylor expansion of \eqref{epsscal} in $\veps$ and $\lim_{t\to \infty}(1+x/t+\order (t^{-1}))^t=\exp (x)$.
Therefore, all the moments vanish and the random variable $Q(t)/t$ converges to a point measure at zero. On the other hand, some combinatorics tells us that
\begin{eqnarray}
&&\prod_{k=0}^{m-1} \frac{1}{2}(1+\cos (\varepsilon \lambda)+i\cdot (\sin (p+kq+\varepsilon \lambda)-\sin (p+kq)))\nonumber \\
&=&\prod_{k=0}^{m-1} \frac{1}{2}(2-\frac{1}{2}\varepsilon ^2\lambda ^2 + i\cdot \cos (p+kq)\varepsilon \lambda -\frac{i}{2} \sin (p+kq)\varepsilon ^2 \lambda ^2 + \order
(\varepsilon ^2))\nonumber \\
&=& \frac{1}{2^m}(2^m+2^{m-1}i\varepsilon \lambda\sum_k \cos (p+kq) - 2^{m-3}\varepsilon ^2\lambda^2 \left( \sum_k \cos (p+kq)\right)^2 +\nonumber \\
&&+2^{m-3}\varepsilon^2\lambda^2\sum_k \cos ^2(p+kq) -m2^{m-2}\varepsilon^2\lambda^2 - i2^{m-2}\varepsilon^2\lambda^2 \sum_k \sin (p+kq) + \order (\varepsilon^2)) \nonumber \\
&=& 1-m\frac{1}{4}\varepsilon^2\lambda^2 +\frac{1}{8}\varepsilon^2\lambda^2\sum_k \cos ^2 (p+kq) + \order (\varepsilon^2))\nonumber \\
&=& \left\{
\begin{array}{ll}
 1-\frac{3m}{16}\varepsilon^2\lambda^2 + \order (\varepsilon^2)&\mathrm{, if }\quad q\,\mathrm{mod}\, 2\pi \neq \pi \\
1-\left(3-\cos  (2p)\right)\frac{\varepsilon^2\lambda^2}{8} + \order (\varepsilon^2))&\mathrm{, if }\quad q\,\mathrm{mod}\, 2\pi = \pi
\end{array}\right. \, .\nonumber
\end{eqnarray}
The first equality is just a Taylor expansion in $\veps $, in the second equality we sorted terms according to $\veps $ and used the relation
\[
\sum_{k<l}\cos (p+kq)\cos (p+lq)=\frac{1}{2}\left(\sum_k \cos (p+kq)\right)^2 - \frac{1}{2}\sum_k \cos ^2(p+kq)\, ,
\]
and the third equality uses $\sum_{k=0}^{m-1} \cos (p+kq)=0$ for $m\cdot q \,\mathrm{mod}\, 2\pi=0 $. The last equality follows from the formula
\[
\sum_{k=0}^{m-1} \cos^2 (p+kq)=\frac{1}{2}\sum_{k=0}^{m-1} (1+\cos (2\cdot(p+kq)) = \frac{m}{2}+\delta_{m,2}\cos (2p)\,.
\]
Hence, for $q\,\mathrm{mod}\,2\pi\neq \pi$ and in diffusive scaling $\varepsilon =1/\sqrt{t}$ we have for arbitrary states $\rho$
\[
\lim\limits_{t\rightarrow \infty}C_{Q(t)/t}(\lambda )=\lim\limits_{s\rightarrow \infty} \int dp\,\rho (p)\left(1-\frac{3\lambda^2}{16s}+\order (s^{-1})\right)^s=\int dp\,\rho (p)e^{-\frac{3\lambda^2}{16}}=e^{-\frac{3\lambda^2}{16}}\, ,
\]
and the asymptotic probability distribution obtained via inverse Fourier transform reads
\[
\mathrm{P}_1(x)=\frac{2}{\sqrt{3\pi}}e^{- \frac{4x^2}{3}}\,.
\]
If $q\,\mathrm{mod}\,2\pi=\pi$ we get the result
\[
\lim\limits_{t\rightarrow \infty}C_{Q(t)/t}(\lambda )=e^{-\frac{3\lambda^2}{16}}\int dp\,\rho (p)\exp \left(\lambda^2\frac{\cos (2p)}{16}\right)\,.
\]
\end{proof}
In contrast to the case $q=0$, the quantum walk for $q\neq 0$ shows diffusive behavior. The asymptotic distribution together with the probability distribution for a finite number of time steps is shown in Fig. \ref{PlotMomShifts}.

\begin{figure}[htb]
  \centering
  \subfigure[]{\includegraphics[width=0.45\textwidth]{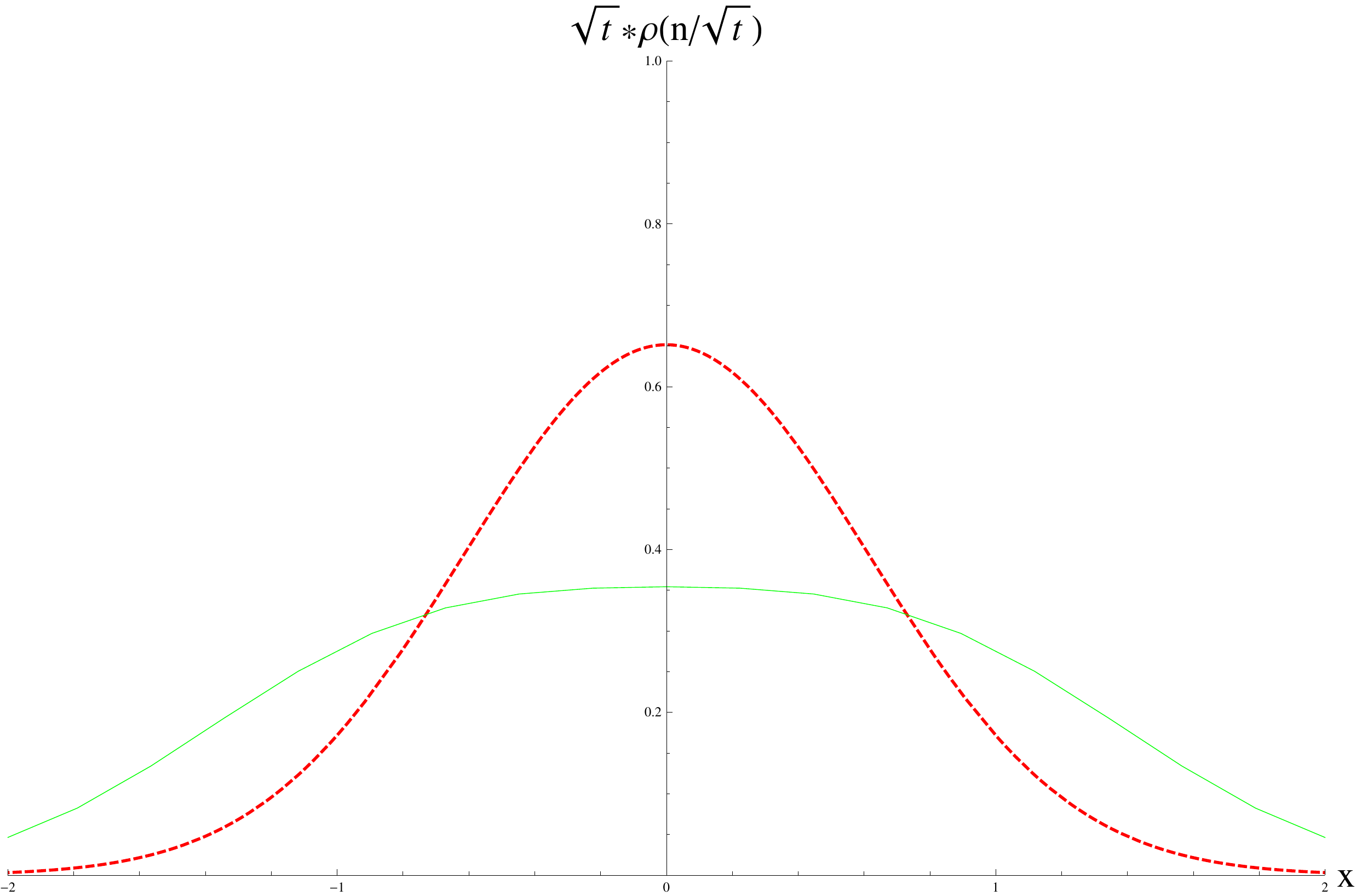}}
  \hspace{1cm}
  \subfigure[]{\includegraphics[width=0.45\textwidth]{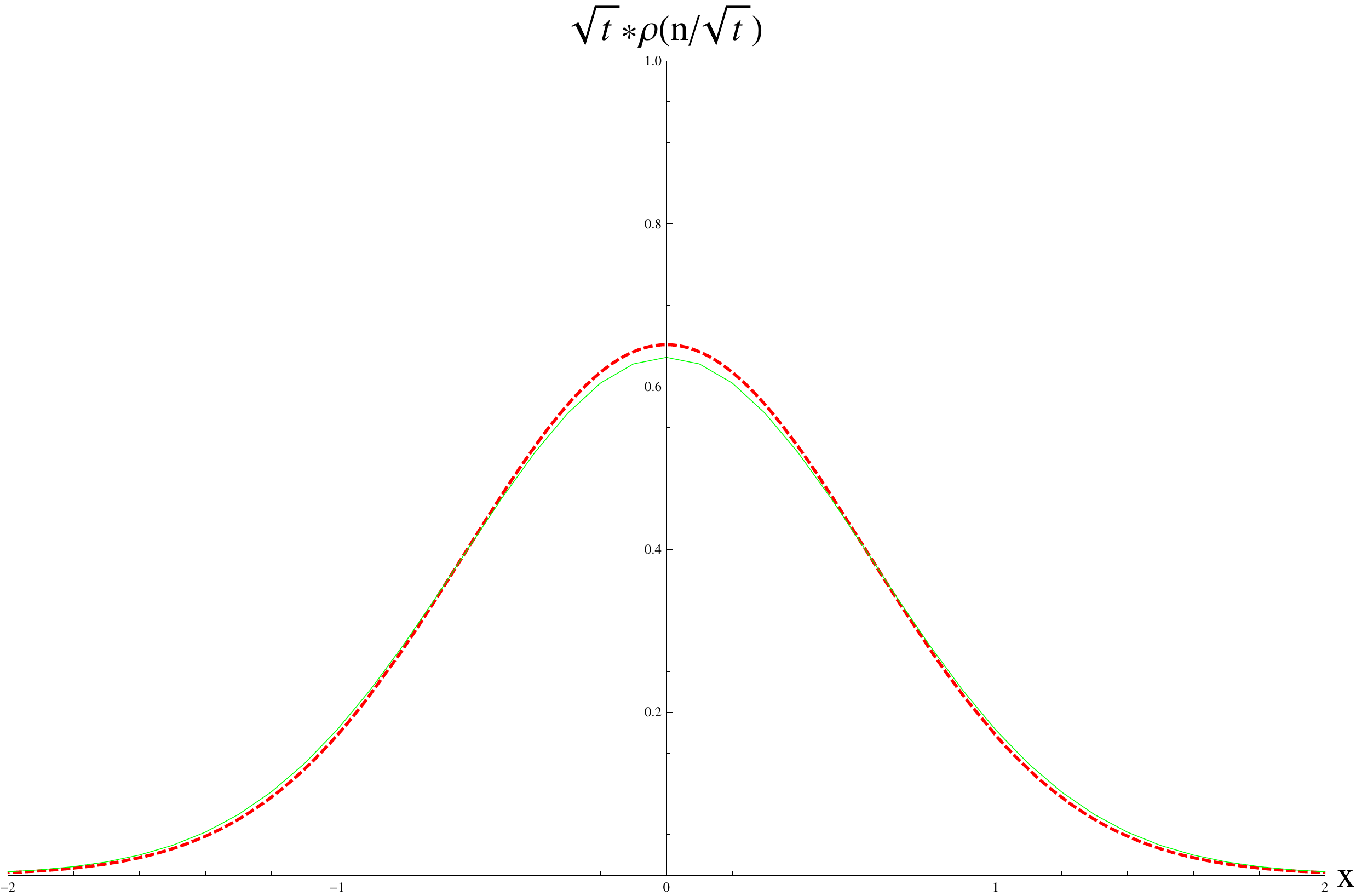}}
    \caption{(color online) Asymptotic position distribution for a walk \eqref{eqn:MomShiftKO} with momentum shift $q=\pi/16$ (green/dotted line) after (a) $20$ steps and (b) $100$ steps. The red/dashed line  is the asymptotic distribution.}
\label{PlotMomShifts}
\end{figure}

\subsection{Commuting Kraus operators\label{ex:commKraus}}

In this subsection we want to give an example that violates the non-degeneracy condition of \assref{chaos} and show that we can still compute the asymptotic position. We consider a decoherent quantum walk with Bernoulli type control process acting on a finite control space. The degeneracy of the eigenvalue $1$ will be exactly the dimension of the coin space due to the assumption that all the Kraus operators $K_j$ commute, but they must not have a common degenerate eigenspace. Since we will study uncorrelated noise in this example, we can forget about the classical control space of the Bernoulli process and model the decoherence by a quantum channel  without explicit classical control.
\begin{prop}\label{prob:comKraus}
  Suppose that a quantum walk $\Wa$ with Bernoulli type decoherence on a finite control space is implemented with normal and commuting Kraus operators $ K_j(p)$ without degenerate eigenvalues. The Kraus operators can then be decomposed in a common eigenbasis $ K_j(p) = \sum_\alpha k_{j,\alpha}(p) \ketbra{\psi_\alpha(p)}{\psi_\alpha(p)}$. If $\sum_j\overline{k_{j,\alpha}}k_{j,\beta}\neq 1$ for $\alpha\neq\beta$
  the asymptotic position distribution in ballistic scaling is given by
  \begin{align*}
    \lim_{t\rightarrow\infty} C_{Q(t)/t}(\lambda) =\int dp \; \tr \left(\rho(p) e^{i \lambda \cdot\tilde V}\right)
  \end{align*}
  with the vector operator
  \begin{align*}
  \tilde V_\tau=-i \sum_{j,\alpha} \overline{k_{j,\alpha}(p)}\ \frac{\partial k_{j,\alpha}(p)}{\partial p_\tau} \ketbra{\psi_\alpha(p)}{\psi_\alpha(p)}\;.
  \end{align*}
\end{prop}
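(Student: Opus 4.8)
The plan is to follow the route of \secref{unitary}, using that a family of commuting normal operators carries an exact joint spectral decomposition, so that the delicate choice of analytic branches needed in a generic degenerate perturbation is here automatic. Since the control process is Bernoulli on a finite control space, I would (as in \secref{Bernoulli}) discard the classical control and work with the single one-step channel $(\Wa X)(p)=\sum_j K_j(p)^*X(p)K_j(p)$ on $\BB(\KK)$ and the operator $\Wt_\veps$ of \eqref{Wtp},
\[
 (\Wt_\veps X)(p)=\sum_j K_j(p)^*\,X(p)\,K_j(p+\veps\lambda),
\]
with $p$ a fixed parameter. By \eqref{charwballist} and \eqref{Wpowers} it then suffices to compute $\lim_{t\to\infty}\Wt_{1/t}^t(\idty)$ as a multiplication operator in momentum, to note that the residual factor $e^{i\lambda\cdot Q/t}$ may be dropped in the limit for initial states of finite support (exactly as in \secref{AsymPosPertMeth}), and to take the trace against $\rho_0$ --- which for a momentum-multiplication operator reduces to an integral over $p$ of the partial trace over $\KK$.

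Write $P_\alpha(p)=\ketbra{\psi_\alpha(p)}{\psi_\alpha(p)}$, so that $K_j(p)=\sum_\alpha k_{j,\alpha}(p)P_\alpha(p)$ and $K_j(p)^*=\sum_\alpha\overline{k_{j,\alpha}(p)}P_\alpha(p)$; the non-degeneracy hypothesis guarantees that $p\mapsto P_\alpha(p)$ is analytic near the points we care about. Introduce, exactly as in \eqref{specWt}, the operators $E_{\alpha\beta}(X)=P_\alpha(p)\,X\,P_\beta(p+\veps\lambda)$, which are mutually orthogonal idempotents for the Hilbert--Schmidt inner product, analytic in $\veps$ near $0$, and sum to the identity. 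Then
\[
 \Wt_\veps=\sum_{\alpha,\beta}\mu_{\alpha\beta}(\veps)\,E_{\alpha\beta},\qquad \mu_{\alpha\beta}(\veps)=\sum_j\overline{k_{j,\alpha}(p)}\,k_{j,\beta}(p+\veps\lambda),
\]
whence $\Wt_\veps^t(\idty)=\sum_{\alpha,\beta}\mu_{\alpha\beta}(\veps)^t\,P_\alpha(p)P_\beta(p+\veps\lambda)$ to all orders. The normalization $\sum_j K_j^*K_j=\idty$ forces $\sum_j|k_{j,\alpha}(p')|^2=1$ for every $\alpha$ and every $p'$; hence $\mu_{\alpha\alpha}(0)=1$, and by Cauchy--Schwarz $|\mu_{\alpha\beta}(\veps)|\le1$ uniformly in $\alpha,\beta,\veps$.

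Now take $\veps=1/t$ and pass to the limit term by term. For $\alpha\neq\beta$ the prefactor $P_\alpha(p)P_\beta(p+\veps\lambda)$ tends to $P_\alpha(p)P_\beta(p)=0$ at rate $\Order(1/t)$ while $|\mu_{\alpha\beta}(\veps)^t|\le1$, so every off-diagonal term vanishes --- note that, the channel not being unitary, one cannot here simply appeal to $|\mu_i|<1$ as in \secref{AsymPosPertMeth}; the point that makes the limit exist despite the $\dim\KK$-fold eigenspace of $\Wt_0$ at eigenvalue $1$ (and despite off-diagonal eigenvalues possibly lying on the unit circle) is precisely this orthogonality combined with the uniform bound, and checking it carefully is the main thing I expect to require attention. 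For $\alpha=\beta$, differentiating $\mu_{\alpha\alpha}$ at $\veps=0$ gives $\mu_{\alpha\alpha}'(0)=\sum_j\overline{k_{j,\alpha}(p)}\,\lambda\cdot\nabla k_{j,\alpha}(p)=i\,\lambda\cdot\tilde V_\alpha(p)$, with $\tilde V_{\alpha,\tau}(p)=-i\sum_j\overline{k_{j,\alpha}(p)}\,\partial_{p_\tau}k_{j,\alpha}(p)$ real because $\sum_j|k_{j,\alpha}|^2\equiv1$; hence $\mu_{\alpha\alpha}(1/t)^t\to e^{i\lambda\cdot\tilde V_\alpha(p)}$ by $(1+x/t+\order(t^{-1}))^t\to e^x$, while $P_\alpha(p)P_\alpha(p+\veps\lambda)\to P_\alpha(p)$. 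Summing, $\lim_{t\to\infty}\Wt_{1/t}^t(\idty)=\sum_\alpha e^{i\lambda\cdot\tilde V_\alpha(p)}P_\alpha(p)=e^{i\lambda\cdot\tilde V(p)}$ with $\tilde V(p)=\sum_\alpha\tilde V_\alpha(p)P_\alpha(p)$, the self-adjoint vector operator of the statement; inserting this into $C_{Q(t)/t}(\lambda)=\tr\rho_0\,\Wt_{1/t}^t(\idty)\,e^{i\lambda\cdot Q/t}$, discarding the exponential factor, and integrating over $p$ produces the claimed formula. Finally, the hypothesis $\sum_j\overline{k_{j,\alpha}}\,k_{j,\beta}\neq1$ for $\alpha\neq\beta$ serves only to pin the eigenspace of $\Wt_0$ at eigenvalue $1$ to the commutant of $\{K_j\}$, i.e.\ to dimension $\dim\KK$ (the regime announced before the Proposition); if it failed, a common degenerate eigenspace would reappear and the diagonalization would have to be organized blockwise instead, though the limit formula itself would be unaffected.
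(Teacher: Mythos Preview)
Your argument is correct and, in fact, more streamlined than the paper's. Where the paper proves Proposition~\ref{prob:comKraus} by adapting the degenerate first-order perturbation framework of \secref{ballistic}---setting up $\Wa(A')-A'=\mu' R-\Wa'(R)$ for $R$ in the span of the $P_\alpha$, pairing with the invariant states $\rho_\beta=P_\beta$, and then eliminating the projection-derivative terms via the identity $P_\beta(\partial_pP_\alpha)P_\beta=0$---you instead transplant the exact spectral decomposition from the unitary analysis of \secref{unitary} (Eqs.~\eqref{specWt}--\eqref{specWtIdty}). This yields the closed formula $\Wt_\veps^t(\idty)=\sum_{\alpha,\beta}\mu_{\alpha\beta}(\veps)^t\,P_\alpha(p)P_\beta(p+\veps\lambda)$ valid for all $t$, from which the limit drops out by the Cauchy--Schwarz bound $|\mu_{\alpha\beta}|\le1$ together with $P_\alpha(p)P_\beta(p)=0$, without ever invoking the perturbation-theoretic machinery. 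Your route is shorter, gives an exact finite-$t$ expression amenable to higher-order corrections in the spirit of \secref{higherUnitary}, and makes transparent your closing observation that the hypothesis $\sum_j\overline{k_{j,\alpha}}k_{j,\beta}\neq1$ is not actually needed for the limit---the off-diagonal contributions die from orthogonality of the $P_\alpha$ regardless. The paper's approach, by contrast, stays within the general perturbative template used throughout the rest of the article, at the cost of the extra algebra around projection derivatives.
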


\begin{proof}
  Since all the Kraus operators $ K_\gamma(p)$ commute and are normal by assumption, they can certainly be diagonalized in a common eigenbasis
  \begin{align*}
     K_j(p) = \sum_\alpha k_{j,\alpha}(p)  \ketbra{\psi_\alpha(p)}{\psi_\alpha(p)} \; .
  \end{align*}
One can check that a basis of invariant operators of $\Wa$ is given by the one-dimensional projections $\ketbra{\psi_\alpha(p)}{\psi_\beta(p)}$. Since $\Wa$ is a quantum channel it follows that $\sum_j \abs{k_{j,\alpha}}^2=1$ and together with the condition on the sums of eigenvalues for $\alpha\neq\beta$ this implies that $\Wa(A)=A$ holds only for operators in the span of $\{\ketbra{\psi_\alpha(p)}{\psi_\alpha(p)}=P_\alpha\}$.

In order to compute the ballistic scaling we are interested in the corrections to the eigenvalue one. Evaluation of the first perturbation order on an arbitrary operator $ R=\sum_\alpha r_\alpha P_\alpha$ in the span of the eigenoperators to the eigenvalue one  yields
\begin{align*}
 \Wa( R^\prime)(p)- R^\prime = \mu^\prime  R - \Wa^\prime( R)
\end{align*}
If evaluated with respect to the state $ \rho_\beta = \ketbra{\psi_\beta(p)}{\psi_\beta(p)}$ the left hand side of this expression vanishes and we have
\begin{align*}
\mu^\prime_\beta &= \frac{1}{r_\beta}\sum_{\alpha,j}r_\alpha \tr (   \rho_\beta   K_j^*(p)(\ketbra{\psi_\alpha}{\psi_\alpha}  K^\prime_j(p)) = \sum_j \overline{ k_{j,\beta}(p)}\tr\left(\rho_\beta\left. \frac{d}{d\veps} K_j(p+\veps\lambda)\right\vert_{\veps=0}\right)\\
&=  \sum_j \overline{ k_{j,\beta}(p)} \nabla k_{j,\beta}(p)\cdot\lambda +  \sum_{j,\alpha} \overline{k_{j,\beta}(p)} k_{j,\alpha}(p)\tr\left( \rho_\beta \left.\frac{d}{d\veps} P_\alpha(p+\veps\lambda)\right\vert_{\veps=0}\right) \; ,
\end{align*}
where we just used the definition of $ R$ and $ K_j^\prime(p)$ and the fact that $\{\ket{\psi_\alpha}\}$ is an eigenbasis of the Kraus operators.
Since $P_\beta (\partial_{p}P_\alpha) P_\beta=0$ holds for orthogonal projectors, the second summand on the right-hand side vanishes and we get
\begin{align*}
  \mu^\prime_\beta(p) = \sum_j \overline{ k_{j,\beta}(p)} \nabla k_{j,\beta}(p)\cdot\lambda
\end{align*}
Substituting these results into $\Wt_{1/t}(\idty)$ we find
\begin{align*}
 \lim_{t\rightarrow\infty} \Wt^t_{1 / t}(\idty)(p) = \lim_{t\rightarrow\infty}\sum_\alpha (1+\frac{1}{t}\mu^\prime_\alpha(p))^t P_{\alpha} = \sum_\alpha e^{\mu^\prime_\alpha(p)} P_\alpha(p)\; .
\end{align*}
Together with \eqref{charwballist} and \eqref{Wpowers} we have
\begin{align*}
  \lim_{t\rightarrow\infty}  C_{Q(t)}(\lambda) = \sum_\alpha \int dp \; \tr \left(\rho(p) e^{\mu^\prime_\alpha(p)} P_\alpha(p)\right)
\end{align*}
which ends the proof.
\end{proof}

\begin{cor}
For a unitary implemented walk, fulfilling the conditions of Proposition \ref{prob:comKraus},
the components of the operator $\tilde V$ are given as the weighted sums of the components of the  group velocity operators $V_j$ of the single walk operators $W_j$
\begin{align*}
  \tilde V_\tau = \sum_j \eta_j V_{\tau,j} = i \sum_{j} \eta_j\left(\sum_{\alpha}\frac{\partial\omega_{j,\alpha}(p)}{\partial p_\tau} P_\alpha\right)\;\;,
\end{align*}
where $\eta_j$ is the probability that the quantum walk $W_j$ is applied in a time step.
\end{cor}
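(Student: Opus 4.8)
The plan is to reduce the statement to a direct substitution into the formula for $\tilde V$ already established in Proposition~\ref{prob:comKraus}. First I would make precise what a ``unitary implemented walk'' means in the Bernoulli setting of that proposition: if in each step the unitary walk $W_j(p)$ is applied with probability $\eta_j$, then the one-step channel is $\Wa(A)(p)=\sum_j\eta_j\,W_j(p)^*A(p)W_j(p)$, so its Kraus operators are $K_j(p)=\sqrt{\eta_j}\,W_j(p)$. Since every $W_j$ is unitary it is in particular normal, and so is $K_j=\sqrt{\eta_j}W_j$; by the hypothesis of Proposition~\ref{prob:comKraus} the $W_j(p)$ (equivalently the $K_j(p)$) commute and have no degenerate eigenvalues, hence share a common eigenbasis $\{\ket{\psi_\alpha(p)}\}$. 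Writing the spectral resolution $W_j(p)=\sum_\alpha e^{i\omega_{j,\alpha}(p)}P_\alpha(p)$ as in \eqref{specWp}, with $P_\alpha(p)=\kettbra{\psi_\alpha(p)}$, the eigenvalues of $K_j(p)$ in this basis are $k_{j,\alpha}(p)=\sqrt{\eta_j}\,e^{i\omega_{j,\alpha}(p)}$; correspondingly, the non-degeneracy condition $\sum_j\overline{k_{j,\alpha}}k_{j,\beta}\neq1$ for $\alpha\neq\beta$ becomes $\sum_j\eta_j\,e^{i(\omega_{j,\beta}(p)-\omega_{j,\alpha}(p))}\neq1$, which is the genericity requirement that has to be assumed.

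Next I would substitute these $k_{j,\alpha}$ into the expression $\tilde V_\tau=-i\sum_{j,\alpha}\overline{k_{j,\alpha}(p)}\,(\partial k_{j,\alpha}(p)/\partial p_\tau)\,P_\alpha(p)$ from Proposition~\ref{prob:comKraus}. The key elementary computation is $\overline{k_{j,\alpha}(p)}\,\partial k_{j,\alpha}(p)/\partial p_\tau=\sqrt{\eta_j}\,e^{-i\omega_{j,\alpha}}\cdot\sqrt{\eta_j}\,i\,(\partial\omega_{j,\alpha}/\partial p_\tau)\,e^{i\omega_{j,\alpha}}=i\,\eta_j\,\partial\omega_{j,\alpha}(p)/\partial p_\tau$: the modulus-one phases cancel and only $|k_{j,\alpha}|^2=\eta_j$ survives, multiplied by the derivative of the phase. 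Plugging this back in gives $\tilde V_\tau=\sum_{j,\alpha}\eta_j\,(\partial\omega_{j,\alpha}(p)/\partial p_\tau)\,P_\alpha(p)$.

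Finally I would recognize the inner $\alpha$-sum as the group velocity operator of $W_j$: since the $P_\alpha$ are precisely the (non-degenerate, hence analytic) eigenprojections of $W_j(p)$, formula \eqref{groupV} applied to $W_j$ reads $V_{\tau,j}(p)=\sum_\alpha(\partial\omega_{j,\alpha}(p)/\partial p_\tau)P_\alpha(p)$, whence $\tilde V_\tau=\sum_j\eta_j\,V_{\tau,j}$, which is the claim. I do not expect a genuine obstacle; the only point deserving care is that the common-eigenbasis assumption of Proposition~\ref{prob:comKraus} is exactly what makes the $P_\alpha$ simultaneously the eigenprojections of every $W_j$, so that the group velocity operators $V_{\tau,j}$ appearing in \eqref{groupV} really use the same projections $P_\alpha$ that occur in $\tilde V$ — without it the two families of projections would differ and the weighted-sum identity would fail to hold term by term.
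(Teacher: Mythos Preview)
Your proposal is correct and follows essentially the same route as the paper: identify $k_{j,\alpha}(p)=\sqrt{\eta_j}\,e^{i\omega_{j,\alpha}(p)}$, compute $\overline{k_{j,\alpha}}\,\partial_{p_\tau} k_{j,\alpha}=i\eta_j\,\partial_{p_\tau}\omega_{j,\alpha}$, and substitute into the formula for $\tilde V_\tau$ from Proposition~\ref{prob:comKraus}. Your treatment is in fact slightly more careful than the paper's (you spell out $K_j=\sqrt{\eta_j}W_j$ rather than calling the $K_j$ themselves unitary, and you make explicit that the common eigenbasis is what lets the same $P_\alpha$ appear in every $V_{\tau,j}$), but the argument is the same.
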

\begin{proof}
  The unitarity of all the Kraus operators $ K_j$ implies that the eigenvalues $k_{j,\alpha}(p)$ are given by phases $\sqrt{\eta_j}e^{i \omega_{j,\alpha}(p)}$, $\eta_j$ being the probability to apply walk operator $ K_j$. This implies $\overline{k_{j,\alpha}(p)}=\eta_j k^{-1}_{j,\alpha}(p)$, and therefore we get
  \begin{align*}
  \overline{k_{j,\alpha}(p)}\!\;\frac{\partial k_{j,\alpha}(p)}{\partial p_\tau}=i \eta_j  \frac{\partial \omega_{j,\alpha}}{\partial p_\tau}\,.
  \end{align*}
Inserting this result into the definition of $\tilde V_\tau$ finishes the proof.
\end{proof}

To conclude this subsection we will look at a one dimensional example. For a given unitary one dimensional quantum walk $ W(p)$ we chose the two Kraus operators
\begin{align*}
  & K_1(p)= \frac{1}{2}(\idty +  W(p)) & &  K_2(p)= \frac{1}{2}(\idty -  W(p))\;.
\end{align*}
Since both $ K_j$ are just functions of the original walk operator $ W(p)$ they will certainly commute, and since $ W$ is unitary and therefore diagonalizable, so are the $ K_j$ with eigenvalues
\begin{align*}
  & k_{1,\pm}(p)= \frac{1}{2}(1 + e^{i\omega_\pm(p)}) & & k_{2,\pm}(p)= \frac{1}{2}(1 -e^{i\omega_\pm(p)})\;.
\end{align*}
One can calculate that for  $\omega_+(p)\neq\omega_-(p)$ the $k_{j,\pm}$ satisfy the sum-condition of Proposition \ref{prob:comKraus}, and therefore we can compute $\tilde V$ directly from the eigenvalues
\begin{align*}
\tilde V = -i\sum_{j,\alpha} \overline{k_{j,\alpha}(p)}\ \frac{\partial k_{j,\alpha}(p)}{\partial p} P_\alpha = \sum_\alpha \frac{\partial \omega_{\alpha}(p)}{\partial p} P_\alpha(p) = \frac{V_{ W}(p)}{2} \;,
\end{align*}
where  $P_\alpha$ are the eigenprojections and $V_{ W}$ the group velocity operator of the original quantum walk $ W$. So this decoherence model halves the velocities $\frac{\partial \omega_{\alpha}(p)}{\partial p}$ of the undisturbed quantum walk.

\section*{Acknowledgements}
We gratefully acknowledge the support of the DFG (Forschergruppe 635) and the EU projects CORNER, QUICS and CoQuit.

\bibliography{walklitSV} 

\end{document}